\documentclass[preprint]{elsarticle}

\usepackage{amsmath,amssymb,amsfonts,amsthm,bbm}
\usepackage{graphicx}
\usepackage{comment}
\usepackage{textcomp}
\usepackage[makeroom]{cancel}
\usepackage{indentfirst}
\usepackage{ulem}
\usepackage{tikz}
\usepackage{epstopdf} %converting to PDF
\usepackage{epsfig}
\usepackage{xcolor}
\usepackage{algorithm,algpseudocode}
\usepackage[linesnumbered,ruled,vlined,algo2e]{algorithm2e}
\usepackage{hyperref}
\usepackage{etoolbox}

\usepackage{natbib}

\usepackage{xcolor}

\usepackage{tabularx}

\usepackage{optidef}

\newbool{short}

\boolfalse{short} % use to get full version

\DeclareMathOperator*{\argmin}{arg\,min}

\usepackage{mathtools}
\mathtoolsset{showonlyrefs}

\usepackage[utf8]{inputenc}
\usepackage[english]{babel}

\renewcommand{\emph}{\textit}

\theoremstyle{definition}
\newtheorem{theorem}{Theorem}
\newtheorem{corollary}{Corollary}
\newtheorem{assumption}{Assumption}
\newtheorem{lemma}{Lemma}
\newtheorem{definition}{Definition}
\newtheorem{problem}{Problem}
\newtheorem{proposition}{Proposition}

\usepackage{csquotes}
\usepackage{float}
\usepackage{accents}

\def\BibTeX{{\rm B\kern-.05em{\sc i\kern-.025em b}\kern-.08em
    T\kern-.1667em\lower.7ex\hbox{E}\kern-.125emX}}

\newcommand{\R}{\mathbb{R}}
\newcommand{\N}{\mathbb{N}}
\newcommand{\atimer}[1]{\tau_{c#1}}
\newcommand{\ctimer}[1]{\tau_{g#1}}
\newcommand{\dtimer}{\tau_{d}}
\newcommand{\HFO}{\mathcal{H}_{\text{FO}}}
\newcommand{\Htime}[2]{(t_{#1},{#2})}
\newcommand{\Hatime}[2]{(t_{#1},j{#2})}
\newcommand{\Hotime}[2]{({#1},{#2})}
\newcommand{\alphatime}[1]{\textcolor{black}{\bar{\alpha}\left(#1\right)}}
\newcommand{\xtime}[1]{\xi(#1)}

\newcommand{\dom}{\textnormal{dom }}
\newcommand{\HFOR}{\mathcal{H}_{\text{FO}}^{\rho}}
\newcommand{\n}[1]{\textnormal{#1}}

\newcommand{\twonorm}[1]{\left\|#1\right\|}

\newcommand{\paren}[1]{\left(#1\right)}

\newcommand{\blue}[1]{\textcolor{black}{#1}}
\newcommand{\green}[1]{\textcolor{black}{#1}}

\usepackage{subfiles} % Best loaded last in the preamble

\begin{document}

\title{\LARGE \bf{Autonomous Satellite Rendezvous 
via Hybrid Feedback Optimization}}
\author{
    \texorpdfstring{
        Oscar Jed R. Chuy\footnote{School of Electrical and Computer Engineering, Georgia Institute of Technology, Atlanta, GA, USA 30332. 
        Emails: \texttt{\{ochuy3,matthale\}@gatech.edu}.}, 
        Matthew T. Hale$^1$, 
        Vignesh Sivaramakrishnan\footnote{Space Vehicles Directorate, Air Force Research Laboratory, Kirtland AFB, NM, USA 87108. 
        Emails: \texttt{vignesh.sivaramakrishnan.ext@afresearchlab.com,~sean.phillips.9@spaceforce.mil}}, 
        Sean Phillips$^2$, and 
        Ricardo G. Sanfelice\footnote{School of Electrical and Computer Engineering,  University of California, Santa Cruz, CA, USA 95064.
        Email: \texttt{ricardo@ucsc.edu}.  \\
        \indent Approved for public release; distribution is unlimited. Public Affairs approval \#AFRL-2026-0259. The views and opinions presented herein are those of the author and do not necessarily represent the views of DoD, DAF or AFRL. Appearance of, or reference to, any commercial products or services does not constitute DoD endorsement.\\
            \indent \ \  Chuy, Hale, and Sanfelice were supported by AFOSR under grant FA9550-19-1-0169.
            Chuy and Hale were supported by ONR under grants N00014-21-1-2495, N00014-22-1-2435, and N00014-26-1-2068,
            and by AFRL under grants FA8651-22-F-1052 and FA8651-23-F-A006. 
            Sanfelice was supported by NSF Grants no. CNS-2039054 and CNS-2111688, by AFOSR Grants nos. FA9550-23-1-0145, FA9550-23-1-0313, and FA9550-23-1-0678, by AFRL Grant nos. FA8651-22-1-0017 and FA8651-23-1-0004, by ARO Grant no. W911NF-20-1-0253, and by DoD Grant no. W911NF-23-1-0158.
            Sivaramakrishnan is supported in part by an appointment to the NRC Research Associateship Program at the Air Force Research Laboratory, Space Vehicles Directorate, administered by the Fellowships Office of the National Academies of Sciences, Engineering, and Medicine.
        }
        % \footnote{
        %     Approved for public release; distribution is unlimited. Public Affairs approval \#AFRL-2026-0259. The views and opinions presented herein are those of the author and do not necessarily represent the views of DoD, DAF or AFRL. Appearance of, or reference to, any commercial products or services does not constitute DoD endorsement.\\
        %     \indent \ \  Chuy, Hale, and Sanfelice were supported by AFOSR under grant FA9550-19-1-0169.
        %     Chuy and Hale were supported by ONR under grants N00014-21-1-2495, N00014-22-1-2435, and N00014-26-1-2068,
        %     and by AFRL under grants FA8651-22-F-1052 and FA8651-23-F-A006. 
        %     Sanfelice was supported by NSF Grants no. CNS-2039054 and CNS-2111688, by AFOSR Grants nos. FA9550-23-1-0145, FA9550-23-1-0313, and FA9550-23-1-0678, by AFRL Grant nos. FA8651-22-1-0017 and FA8651-23-1-0004, by ARO Grant no. W911NF-20-1-0253, and by DoD Grant no. W911NF-23-1-0158.
        %     Sivaramakrishnan is supported in part by an appointment to the NRC Research Associateship Program at the Air Force Research Laboratory, Space Vehicles Directorate, administered by the Fellowships Office of the National Academies of Sciences, Engineering, and Medicine.
        % }
    }{Oscar Jed R. Chuy, Matthew T. Hale, Vignesh Sivaramakrishnan, Sean Phillips, Ricardo G. Sanfelice}
}

\begin{abstract}
     As satellites have proliferated, interest has increased in autonomous rendezvous, proximity operations, and docking (ARPOD). 
     A fundamental challenge in these tasks 
     is the uncertainties when operating in space, e.g., in 
     measurements of
     satellites' states, 
     which can make future states difficult to predict. Another challenge is that satellites' 
     onboard processors are typically
     much slower than their terrestrial counterparts. 
     Therefore, to address these challenges
     we propose to solve an ARPOD problem with \emph{feedback optimization}, which computes inputs to a system by measuring its outputs, feeding them into an optimization algorithm in the loop, and computing some number of iterations towards an optimal input. We focus on satellite rendezvous, and satellites' dynamics are modeled using the continuous-time Clohessy-Wiltshire equations, which are marginally stable.  We develop an asymptotically stabilizing controller for them, and we use discrete-time gradient descent in the loop to compute inputs to them. Then, we analyze the hybrid feedback optimization system formed by the stabilized Clohessy-Wiltshire equations with gradient descent in the loop. We show that this model is well-posed and that maximal solutions are both complete and non-Zeno. Then, we show that solutions converge exponentially fast to a ball around a rendezvous point, and we bound the radius of that ball in terms of system parameters. Simulations show that this approach provides up to a 98.4\% reduction in the magnitude of disturbances across a range of simulations, which illustrates the viability of hybrid feedback optimization for autonomous satellite rendezvous.     
\end{abstract}

\maketitle

\section{Introduction} \label{sec:intro}
Autonomous rendezvous, proximity operations, and docking (ARPOD) is a critical class of tasks for satellites in space~\cite{harpley2025space,robinson-smith2025spacex,petersonSafeRPODs}. 
The execution of ARPOD tasks allows for transportation of personnel, resupply, and servicing/repair \cite{WoffindenNavAutoRnds,BehrendtARPODMPC,JewisonBenchmarProb}. However, due to satellite payload restrictions and financial costs, computing power onboard satellites is limited~\cite{Scharf2003_1,Scharf2004_2,petersen2021challenge},
which restricts the speed of computations that are performed
when executing such tasks. 
Additionally, unanticipated modeling errors can cause poor performance and, potentially, cause mission failure,
in part because predictions of future satellite states are unreliable~\cite{robertson2003satellite,chern2006lesson}. 
What we seek is an ARPOD approach that (i) does not predict
future states,
(ii) rejects disturbances when they arise, and
(iii) has low computational complexity that accommodates the
slow processors often found onboard satellites. 

Therefore, in this paper we solve an ARPOD problem with \emph{feedback optimization}. 
Feedback optimization measures the outputs of a control system, feeds them
into a running optimization algorithm in the loop, and uses that algorithm to compute
new inputs to a system~\cite{HausT-Sinautoopt}. 
Feedback optimization avoids solving optimization problems offline in a feedforward configuration~\cite{hauswirth2024optimization,batchRLHanchen2020, processcntrl2010, OPSFfotz2000, KKTcontrollerAndrej, BIGNUCOLORadMV, ORTMANN2020106782}. 
%Instead, it measures the system outputs~\cite{hauswirth2018stab,hauswirth2024optimization,ORTMANN2020106782,HausT-Sinautoopt,processcntrl2010} and then uses those measurements to optimize inputs with an in-the-loop optimization algorithm.  
It offers inherent robustness to inaccurate system models and time-varying parameters, 
and eliminates the need for pre-computed set points or reference signals~\cite{hauswirth2024optimization, processcntrl2010}. 
It differs fundamentally from model-predictive
control (MPC) because it does not attempt to predict future states \cite{alexanderARPODGeoCM}. 
Such predictions
can be difficult under the uncertainties that arise in space-based autonomy, which
makes feedback optimization a natural fit. 
In general, feedback optimization also does not require an optimal input to be computed
exactly by the optimization algorithm in the loop. Instead, some number of steps can be taken to compute iterates that work toward
the optimal input, and one of those iterates can be used as a sub-optimal
input to a system.
This property helps accommodate slow processors onboard satellites,
which may not be able to exactly compute optimal inputs before
an input to a satellite is needed. 

We focus on a satellite rendezvous problem in which there are two satellites, the target and the chaser, 
that are in low earth orbit. 
The target satellite is uncontrolled and orbiting the earth, while
the chaser satellite is controlled. The goal is to drive the chaser 
to rendezvous with the target. 
Under mild conditions, we can linearize the chaser's dynamics to obtain the 
Clohessy-Wiltshire (CW) equations~\cite{CurtisOrbobook}, which
describe the motion of the chaser relative to the target. 
It is standard in feedback optimization to assume that a system 
is asymptotically stable 
before optimizing its inputs~\cite{giuseppeOFES}. However, the CW equations give a marginally stable system. 
Therefore, we first design a stabilizing feedback controller for the CW equations, and we show that only
a subset of available 
gains must be non-zero to set all closed-loop eigenvalues to desired values. We also derive
closed forms for each gain in terms of the desired closed-loop eigenvalues. 

%The inclusion of a pre-stabilized system is an approach seen in \cite{giuseppeOFES} which ensures even marginally stable systems like the CW equations, will readily allow for direct FO implementation.
%Feedback optimization disregards solving optimization problems in a feedforwards configuration seen in~\cite{hauswirth2024optimization,batchRLHanchen2020, processcntrl2010, OPSFfotz2000, KKTcontrollerAndrej, BIGNUCOLORadMV, ORTMANN2020106782} and instead measures the system outputs~\cite{hauswirth2018stab,hauswirth2024optimization,ORTMANN2020106782,HausT-Sinautoopt,processcntrl2010} and then uses those measurements to optimize inputs with an in-the-loop optimization algorithm. With FO, its inherent formulation leads to the following benefits: robustness to inaccurate system models and time-varying parameters, achieves constraint satisfaction with minimal model dependence, and importantly eliminates the need for pre-computed set points or reference signals~\cite{hauswirth2024optimization, processcntrl2010}. 

The stabilized CW equations 
that model the chaser 
are in continuous time, and we 
implement feedback optimization with 
discrete-time gradient descent running in the chaser's feedback loop.
We therefore 
model this setup 
as a hybrid system. 
We develop this model in the 
hybrid systems 
framework of~\cite{Hybridbook}, and we establish the existence and certain
properties of its solutions, which we use to characterize the long-term performance of
feedback optimization for the satellite rendezvous problem. 
To the best of our knowledge, this work is the first to apply feedback optimization
to an ARPOD problem.
In detail, the contributions of this paper are the following:
\begin{itemize}
    \item We design a controller to asymptotically stabilize the CW equations, and we give
    closed forms for the gains needed to attain desired closed-loop eigenvalues (Theorem~\ref{tm:stabGains}).    
    \item We develop a hybrid systems model of feedback optimization for the CW equations with gradient descent in the loop,
    we prove that it is well-posed, and we prove that all maximal solutions to it
    are both complete and non-Zeno (Lemma~\ref{lem:wellPosed}, Proposition~\ref{prop:maxsolComplete}). 
    \item We bound the steady-state rendezvous error of the chaser 
    satellite in terms of disturbances and other system parameters
    (Proposition~\ref{prop:completeHOCW}, Theorem~\ref{tm:globalcompleteHOCW}). 
    \item We prove that this implementation of feedback optimization is robust to perturbations, 
    in the sense that bounded perturbations to 
    the hybrid model produce bounded
    perturbations in the chaser's trajectories (Corollary \ref{cor:robust}).
    \item We show in simulations that hybrid feedback optimization successfully reduces
    the magnitude of disturbances 
    in satellite rendezvous 
    by up to 98.4\% (Section \ref{sec:simResults}).
\end{itemize}

A number of related approaches have been developed to solve ARPOD problems. 
For example, the authors in \cite{soderlund1,soderlund2} propose a switching controller that assures local asymptotic stability despite drift and under-actuation in the underlying satellite dynamics.
Researchers have also used nonlinear estimators and sliding mode control to execute ARPOD maneuvers in the presence of faulty thrusters and physical disturbances in space~\cite{henry20216}.
To meet real-time demands, developments in~\cite{BehrendtARPODMPC,behrendt2025timeconstrained} provide 
a model-predictive control (MPC) approach that is computationally time-constrained and robust to disturbances drawn from a Gaussian distribution. 
We differ from all of these works by considering a 
hybrid model in which a continuous-time chaser satellite is controlled by a discrete-time optimization algorithm it runs onboard. 
The feedback optimization controller that we present also has lower computational complexity than the aforementioned works
because it only requires some number of gradient descent
iterations to be performed when computing each of the chaser's inputs. 
To the best of our knowledge, this work is the first to apply
feedback optimization to an ARPOD problem. 

%Each of these approaches tackled different aspects of the ARPOD problem, our work wishes to make further improvements in comparison to these previous works with improved modeling and computational efficiency. Specifically, hybrid system models offer unique properties relevant in tackling the problem and our work allows for more efficiency by minimizing computations. 

% \textcolor{red}{Oscar, right here can you say something about where these approaches fall short?
% Otherwise the reader might say ``Well if all of these papers exist, then why do we need one more?''
% One option is to say that we want to use the fewest computations possible, e.g., for cube sats, so
% we want to not run an estimator or MPC or anything like that. We can also talk to Sean
% and Vignesh and see what they think.
% }

Within the feedback optimization literature, 
related work in \cite{giuseppeOFES, cothren2023perceptionbasedsampleddataoptimizationdynamical,chen25} considers a continuous-time system and discrete-time computations in a sampled-data feedback optimization configuration. Results in~\cite{giuseppeOFES,cothren2023perceptionbasedsampleddataoptimizationdynamical} show that large enough sample times guarantee closed-loop stability, and they provide practical stability guarantees under time-varying disturbances. 
Results in~\cite{chen25} show that when a closed-loop system's inputs change at a fixed rate, global exponential stability
is obtained.
The current paper differs because we develop a hybrid
model in the framework of \cite{Hybridbook}, 
which allows us to characterize system behavior at all times, 
rather than just at certain sample times. 
Moreover, we use the hybrid model to derive analytical
guarantees of robustness, including robustness to errors in the rate at which
inputs are applied to the chaser, which we allow to vary over time. 
Some of our modeling developments are related to those
in~\cite{chuy2025hybridsystemsmodelfeedback} by a subset of the authors
of the current paper, 
which also considered hybrid feedback optimization. 
However, the current paper differs because it must analyze
state dynamics that are subject to unknown, time-varying
disturbances (see Section~\ref{sec:stabDyn}), 
while developments
in~\cite{chuy2025hybridsystemsmodelfeedback} studied an unperturbed
state equation. 
%we can apply robustness guarantees and interconnect continuous and discrete time. We therefore develop a hybrid system model of feedback optimization with continuous-time dynamics and discrete-time optimization. 
%This work applies a previous hybrid system model of feedback optimization seen in \cite{chuy2025hybridsystemsmodelfeedback} to satellite docking. 

The rest of the paper is organized as follows. Section \ref{sec:prelim} provides background, and 
Section~\ref{sec:prob} gives a formal problem statement. 
Section \ref{sec:stabDyn} 
develops a stabilizing controller, and  
Section \ref{sec:hybridModel} develops the hybrid feedback optimization model of the
chaser satellite. 
Section \ref{sec:convAnalysis} analyzes the convergence
of the chaser to a rendezvous point. 
Section \ref{sec:simResults} presents simulations, and Section \ref{sec:results} concludes.

\section{Preliminaries and Background} \label{sec:prelim}
This section gives background on the Clohessy-Wiltshire equations,
feedback optimization, and hybrid systems. 

\subsection{Notation}
Let $\R$ denote the set of real numbers and $\N$ denote the set of non-negative integers. 
%For a differentiable function $\Phi : \R^m \times  \R^p \rightarrow \R$, let $\nabla_{\mathbf{u}}\Phi$ denote the partial derivative with respect to its first argument. 
Let $I_n$ denote the~$n \times n$ identity matrix and let~$\mathbbm{1}_{n}$ denote the all-ones vector of size~$n$.
Given scalars~$a_1, a_2, \ldots, a_n$, 
we use~$\textnormal{diag}(a_1, a_2, \ldots, a_n)$
to denote the diagonal matrix with~$a_1, a_2, \ldots, a_n$ on its main diagonal. 
The~$2$-norm of a vector~$\mathbf{x}$ is denoted~$\twonorm{\mathbf{x}}$.
We denote the Euclidean projection of a vector $\mathbf{v}$ onto 
a non-empty, compact, convex set
$\mathcal{Z}$ by $\Pi_{\mathcal{Z}}[\mathbf{v}] = \arg\min_{\mathbf{z} \in \mathcal{Z}} \|\mathbf{v}-\mathbf{z}\|$.
The distance from a point~$\mathbf{v} \in \mathbb{R}^n$ to a non-empty set~$\mathcal{A} \subseteq \mathbb{R}^n$
is denoted~$\|\mathbf{v}\|_{\mathcal{A}} := \inf_{\mathbf{a} \in \mathcal{A}} \|\mathbf{v} - \mathbf{a}\|$. 
We denote the diameter of a non-empty, compact, convex set
$\mathcal{Z}$ by~$d_{\mathcal{Z}}$. 
%Let $\lambda_{i}(N)$ denote the $i^{\n{th}}$  eigenvalue of the matrix $N$, and
Let~$\textnormal{eig}(M)$ denote the set of eigenvalues of a square matrix~$M$. 
Let~$\lambda_{max}(M)$ denote the largest eigenvalue
of a symmetric matrix~$M$, and let~$\lambda_{min}(M)$ 
denote its smallest eigenvalue. 
%For $r \geq 0$, we use $B_{r}(\tilde{\mathbf{x}}) := \{\mathbf{x}\in \R^n  : \|\mathbf{x} - \tilde{\mathbf{x}}\| \leq r\}$ to denote the closed Euclidean ball of radius $r$ about the point $\tilde{\mathbf{x}}$.
For a finite multi-set~$S$, we use~$\mu_{\max}(S)$ to denote the largest multiplicity
of an element of~$S$, i.e., the largest number of times an element 
appears in~$S$.

\begin{figure}[ht]
    \centering
    \includegraphics[width=0.75\linewidth]{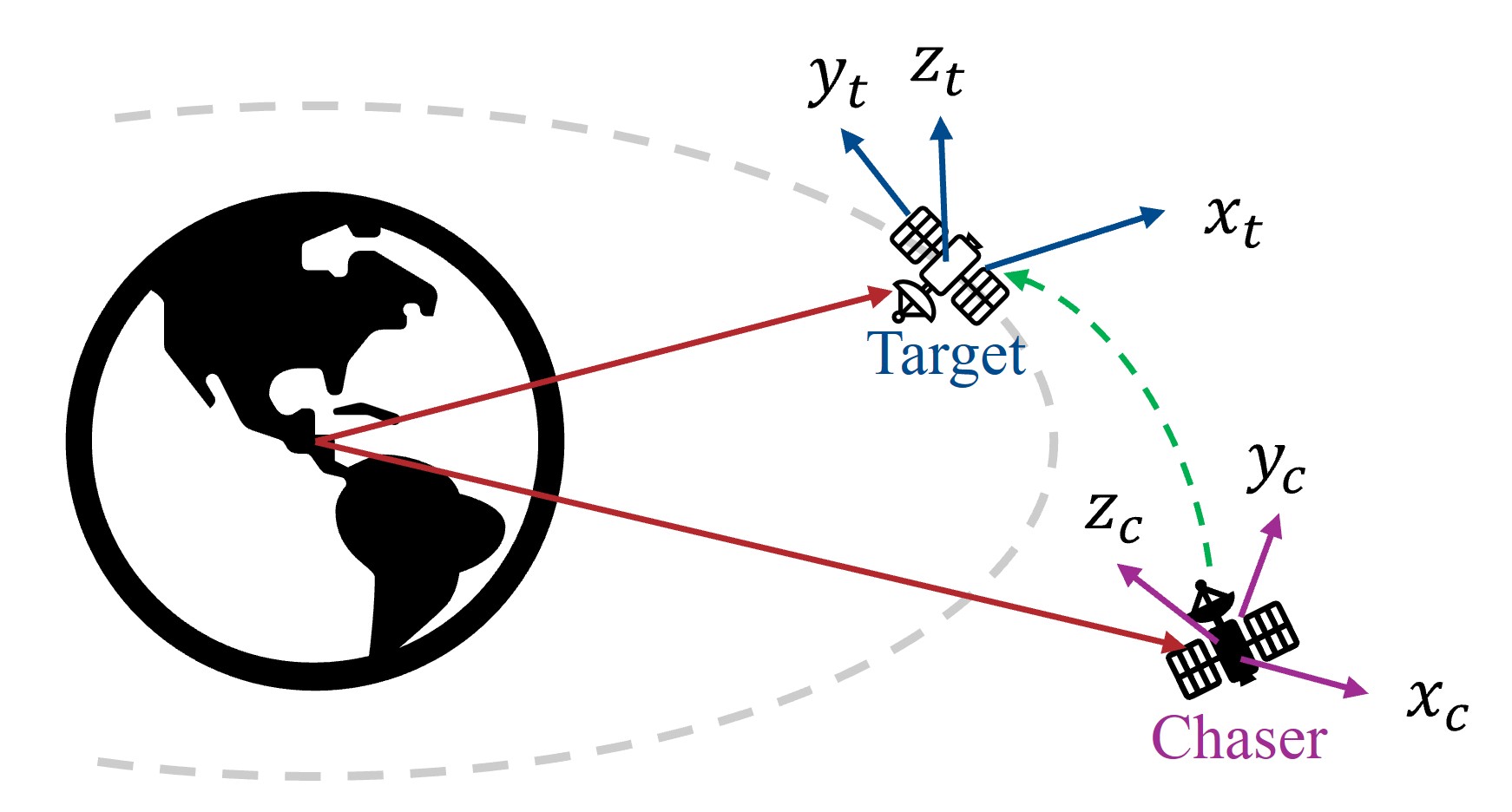}
    \caption{The target and chaser satellites are orbiting around the earth and the green line defines a rendezvous trajectory. The axes of each satellite are defined as follows: the radial 
    vectors~$x_t$ and~$x_c$ point away from the earth, 
    the tangential vectors
    $y_t$ and~$y_c$ point in the direction each satellite is moving along its orbit, and the out-plane 
    vectors~$z_t$ and~$z_c$ are 
    perpendicular to the orbital planes formed by~$x_t/y_t$
    and~$x_c/y_c$, respectively.}
    \label{fig:CWSetup}
\end{figure}

\subsection{Clohessy-Wiltshire Equations}
As described in the Introduction, we study the satellite rendezvous problem seen in Figure \ref{fig:CWSetup}.  
We consider the relative dynamics between a controlled chaser 
satellite and an uncontrolled target satellite. To 
model these dynamics, we make the following standard assumptions~\cite{petersonSafeRPODs,BehrendtARPODMPC}: 
\begin{enumerate}
    \item The target satellite is in an uncontrolled circular orbit around the earth. 
    \item The distance between the target and chaser satellites is much less than the orbital radius of the target satellite.
    \item The chaser satellite operates with bi-directional thrusters and external torque generators installed 
    along the axes aligning with its body-fixed coordinate frame. 
\end{enumerate}

Then one can obtain the dynamics for the chaser's position and velocity relative to the target.
We denote the chaser's position relative to the target by~$(x, y, z)^T \in \mathbb{R}^3$, and
linearizing these dynamics gives the  Clohessy–Wiltshire (CW) equations: 
\begin{equation} \label{eq:cwDynamics}
\left[
\begin{array}{c}
    \dot{x} \\
    \dot{y} \\
    \dot{z} \\
    \ddot{x} \\
    \ddot{y} \\
    \ddot{z} 
\end{array}
\right]
=  
\left[
    \begin{array}{cccccc}
        0 & 0 & 0 & 1 & 0 & 0 \\
        0 & 0 & 0 & 0 & 1 & 0 \\
        0 & 0 & 0 & 0 & 0 & 1 \\
        3w^2 & 0 & 0 & 0 & 2w & 0 \\
        0 & 0 & 0 & -2w & 0 & 0 \\
        0 & 0 & -w^2 & 0 & 0 & 0 
    \end{array} 
\right]
\left[
\begin{array}{c}
    {x} \\
    {y} \\
    {z} \\
    \dot{x} \\
    \dot{y} \\
    \dot{z} 
\end{array}
\right]
+
\frac{1}{m_c}
\left[
    \begin{array}{ccc}
        0 & 0 & 0  \\
        0 & 0 & 0  \\
        0 & 0 & 0  \\
        1 & 0 & 0  \\
        0 & 1 & 0  \\
        0 & 0 & 1  
    \end{array}
\right]
\left[
\begin{array}{c}
    {u_{x}} \\
    {u_{y}} \\
    {u_{z}} 
\end{array}
\right], 
\end{equation}
where $m_c$ is the mass of the chaser, $w := \sqrt{\mu/a^3}$, $\mu := 3.986 \times10^{14} m^3s^{-2}$ is the standard gravity parameter of the earth, and $a$ is the orbital radius of the satellite.
These equations model the translational motion of the chaser
relative to the target in the CW coordinate frame that is fixed
to the target satellite. 
The CW equations define a marginally stable system, and 
we compactly express these dynamics as
\begin{equation} \label{eq:cwdynamics}
\blue{\dot{\boldsymbol{x}}} = A_{\n{CW}}\blue{\boldsymbol{x}} + B_{\n{CW}}\mathbf{u}, 
\end{equation}
where~$\blue{\boldsymbol{x}} = (x, y, z, \dot{x}, \dot{y}, \dot{z})^T \in \mathbb{R}^6$ and the matrices
$A_{\n{CW}} \in \R^{6 \times 6}$ and $B_{\n{CW}} \in \R^{6 \times 3}$
can be read off from~\eqref{eq:cwDynamics}. 
% We will design a hybrid
% feedback optimization model by pairing the CW equations with a discrete-time
% gradient descent algorithm in the loop. 

\subsection{Feedback Optimization Background} \label{ss:fobackground}
Suppose we have the linear time  invariant (LTI) system 
\begin{align} \label{eq:ltisystem}
    \begin{split}
        \blue{\dot{\mathbf{\chi}}} &= A\blue{\mathbf{\chi}} + B\mathbf{v} \\
        \mathbf{y} &= \Psi\mathbf{x} + \mathbf{d},
    \end{split}
\end{align}
where
$\blue{\mathbf{\chi}} \in \R^n$ is the system's state, 
$\mathbf{v} \in \R^m$ is its input,
and~$\mathbf{y} \in \R^p$ 
is its output. 
The mapping~$t \mapsto \mathbf{d}(t) \in \mathbb{R}^p$ 
models an unknown time-varying disturbance.

In feedback optimization, it is standard to take~$A$ to be Hurwitz
and we suppose that is the case here. 
If the mapping~$t \mapsto \mathbf{d}(t)$ were known, 
then to optimize the system's steady-state behavior, at each $t$ one could drive 
the pair~$(\mathbf{v}, \mathbf{y})$ to 
a solution of 
% \begin{subequations} \label{eq:fo_genform}
%     \begin{align}
%         \min_{\mathbf{v}(t),\mathbf{y}(t)} \quad & \Phi\big(\mathbf{v}(t),\mathbf{y}(t)\big) \label{eq:phidef} \\
%         \text{subject to} \quad & \mathbf{y}(t) = H\mathbf{v}(t) + \mathbf{d}(t), \,\, \mathbf{v}(t) \in \mathcal{U}, \,\, \mathbf{y}(t) \in \R^p, \label{eq:ddef} 
%     \end{align}
% \end{subequations}
\begin{mini!}|l|
    {\mathbf{v}^{ss},\mathbf{y}^{ss}}
    {\Phi(\mathbf{v}^{ss},\mathbf{y}^{ss})}
    {\label{eq:fo_genform}}
    {\label{eq:phidef}}
    \addConstraint{\mathbf{y}^{ss} = H\mathbf{v}^{ss} + \mathbf{d}, \,\, \mathbf{v}^{ss} \in \mathcal{U}, \,\, \mathbf{y}^{ss}
    \in \R^p,}{\label{eq:ddef} }
\end{mini!}
where~$\Phi : \R^m \times \R^p \to \R$ is strongly convex in~$(\mathbf{v}^{ss}, \mathbf{y}^{ss})$
and~$\mathcal{U} \subseteq \mathbb{R}^m$ is a non-empty, compact, convex set. 
These properties ensure the problem in~\eqref{eq:phidef}-\eqref{eq:ddef} has a unique solution. 

For Hurwitz~$A$, if~$\mathbf{d} \equiv 0$, then the steady-state 
input-to-output map 
of the system in~\eqref{eq:ltisystem}
would be~$\mathbf{v}^{ss} \mapsto H\mathbf{v}^{ss}$, where $H:= -\Psi A^{-1}B$.
We allow for~$\mathbf{d} \neq 0$, and therefore we consider 
the mapping $\mathbf{v}^{ss} \mapsto H\mathbf{v}^{ss} + \mathbf{d}$,
which is equal to the steady-state mapping with its value perturbed by~$\mathbf{d}$. 
If the disturbance~$\mathbf{d}$ were known, then one could plug the expression for~$\mathbf{y}^{ss}$ in~\eqref{eq:ddef}
into~\eqref{eq:phidef} and solve the set-constrained problem
\begin{mini!}|l|
    {\mathbf{v}^{ss}}
    {\Phi(\mathbf{v}^{ss},H\mathbf{v}^{ss} + \mathbf{d})}
    {\label{eq:fo_genform2}}
    {\label{eq:phidef2}}
    \addConstraint{\mathbf{v}^{ss} \in \mathcal{U}}{\label{eq:ddef2} }. 
\end{mini!}
Then\blue{,} one could, for example, solve this problem using a standard projected gradient descent law
in which each gradient descent step is projected onto the constraint set~$\mathcal{U}$ to enforce
the satisfaction of the set constraint. 
We denote the~$k^{\n{th}}$ iterate of this projected gradient descent law 
by~$\mathbf{v}^{ss}_k$. Then\blue{,} the first iterate would take the form 
\begin{equation} \label{eq:newbiggrad}
\mathbf{v}^{ss}_{1} = \Pi_{\mathcal{U}}\Big[\mathbf{v}^{ss}_{0} -\gamma\Big(\nabla_{\mathbf{u}} \Phi\big(\mathbf{v}^{ss}_0, H\mathbf{v}^{ss}_0 + \mathbf{d}\big) + H^T
\nabla_{\mathbf{y}}\Phi\big(\mathbf{v}^{ss}_0, H\mathbf{v}^{ss}_0 + \mathbf{d}\big)\Big)\Big],
\end{equation}
where~$\gamma > 0$ is a stepsize, $\nabla_{\mathbf{u}} \Phi$ denotes the gradient of~$\Phi$ with respect to its first
argument, and $\nabla_{\mathbf{y}} \Phi$ denotes the gradient of~$\Phi$
with respect to its second argument. 

However, we consider unknown disturbances~$t \mapsto \mathbf{d}(t)$, which means that 
\blue{the value of $\mathbf{y}^{ss}$ cannot be computed from~$\mathbf{v}^{ss}$
with~\eqref{eq:ddef}.} 
Then\blue{,} the update
law in~\eqref{eq:newbiggrad} \blue{also} cannot be executed as written. 
Instead, 
rather than computing~$\mathbf{y}^{ss}$, 
we will sample the actual values of the output~$\mathbf{y}$ and use those
values to optimize over~$\mathbf{v}^{ss}$. 
%To do so, one can sample a value of the output~$\mathbf{y}$ and use that value to optimize the input. 
Let~$\mathbf{y}_s$ denote a sampled
value of the output. 
Although~$\mathbf{y}_s$ need not be sampled from a system
at steady state, we will use the standard approximation
in feedback optimization that treats~$\mathbf{y}_s$
as coming from a system at steady state~\cite{hauswirth2024optimization, wang2023decentralized}. 
This approximation is implemented
by setting~$\mathbf{y}_s = H\mathbf{v}^{ss}_0 + \mathbf{d}$,
where~$\mathbf{d}$ is the disturbance at the time~$\mathbf{y}_s$
is sampled
and~$\mathbf{v}_0^{ss}$ is the input at the time~$\mathbf{y}_s$
is sampled. 

Then\blue{,} we can solve the problem in~\eqref{eq:fo_genform}  
with a projected gradient descent law whose~$(k+1)^{\n{th}}$ iterate takes the form
\begin{equation} \label{eq:gdUpdateLaw}
    \mathbf{v}^{ss}_{k+1} = \Pi_{\mathcal{U}}\Big[\mathbf{v}^{ss}_{k} -\gamma\Big(\nabla_{\mathbf{u}} \Phi\big(\mathbf{v}^{ss}_k, \mathbf{y}_s\big) + H^T
\nabla_{\mathbf{y}}\Phi\big(\mathbf{v}^{ss}_k, \mathbf{y}_s\big)\Big)\Big]. 
\end{equation}

In the feedback optimization literature, it is standard to formulate a closed-loop system between the plant in \eqref{eq:ltisystem} and the optimization algorithm 
in \eqref{eq:gdUpdateLaw}. 
%, the latter of which is used as a controller. 
Then the closed-loop interconnected system is 
\begin{align} \label{eq:cdltisystem}
    \text{Plant: }&
    \begin{cases}
        \blue{\dot{\mathbf{\chi}}} &= A\blue{\mathbf{\chi}} + B\mathbf{v} \\
        \mathbf{y} &= \Psi\blue{\mathbf{\chi}} + \mathbf{d},
    \end{cases}\\
    \text{Controller:}&
    \begin{cases} 
        \mathbf{v}^{ss}_{k+1} = \Pi_{\mathcal{U}}\Big[\mathbf{v}^{ss}_{k} -\gamma\Big(\nabla_{\mathbf{u}} \Phi\big(\mathbf{v}^{ss}_k, \mathbf{y}_s\big) + H^T
\nabla_{\mathbf{y}}\Phi\big(\mathbf{v}^{ss}_k, \mathbf{y}_s\big)\Big)\Big],
    \end{cases}
\end{align}
where at certain points in time we set~$\mathbf{y}_s = \mathbf{y}$
and~$\mathbf{v} = \mathbf{v}^{ss}_k$. 
Feedback optimization does not require the gradient descent update law to converge to an optimum before
using one of its iterates as the input to the plant. Our analysis below allows for this possibility 
onboard the chaser satellite as well. 
%which allows for the interconnection of the continuous-time plant with a discrete-time optimization algorithm as its controller. 

\subsection{Hybrid System Background}
Using the framework of~\cite{Hybridbook} a hybrid system $\mathcal{H}$ takes the form
\begin{equation}\label{eq:hybridmodel}
    \mathcal{H} := 
    \begin{cases}
        \begin{aligned}
            \dot{\zeta} &\in F(\zeta) \quad &\zeta \in C \\
            \zeta^+ &\in G(\zeta) \quad &\zeta \in D
        \end{aligned}
    \end{cases},
\end{equation}
where $\zeta \in \R^{n}$ is the system's state vector and the maps~$F$ and~$G$ are set-valued in general. 
The \blue{map} $F$ defines the flow map and governs the continuous dynamics within the flow set $C$, while 
$G$ defines the jump map, which models the system's discrete behavior within the jump set~$D$.

\begin{definition}[Hybrid Basic Conditions \cite{Hybridbook}]
\label{def:hybridcond}
     A hybrid system $\mathcal{H}$ 
    with data $(C, F, D, G)$ satisfies the hybrid basic conditions if 
    \begin{enumerate}
        \item $C$ and $D$ are closed subsets of $\R^n$; \label{hybridcond_one}
        \item $F: \R^n\rightrightarrows\R^n$ is outer semicontinuous\footnote{A set-valued mapping $M:\R^m\rightrightarrows \R^n$ is outer semicontinuous (osc) at $x \in \R^m$ if  for every sequence of points $\{x_i\}_{i \in \N}$ convergent to $x$ and any convergent sequence of points~$\{y_i\}_{i \in \N}$ with $y_i \in M(x_i)$, one has $y\in  M(x)$, where $\lim_{i\rightarrow\infty}y_i  =  y$ \cite{Hybridbook}.}, and locally bounded\footnote{A set-valued mapping $M:\R^m\rightrightarrows \R^n$ is locally bounded at $x \in \R^m$ if there is a neighborhood $U_x$  of $x$ such that $M(U_x)\subset \R^n$ is bounded \cite{Hybridbook}.} relative to $C$, $C\subset \text{dom}~F,$ and $F(\zeta)$ is convex for every $\zeta \in C$;  \label{hybridcond_two}
        \item $G: \R^n\rightrightarrows\R^n$ is outer semicontinuous and locally bounded relative to $D$, and $D\subset \text{dom}~G$. \label{hybridcond_three}
    \end{enumerate}
\end{definition}

If a hybrid system satisfies the hybrid basic conditions, then it is well-posed by \cite[Theorem 6.30]{Hybridbook},
which implies that errors in the models of~\blue{$C$, $D$,~$F$, and~$G$}
up to a certain threshold produced bounded
changes in the trajectories produced by the system over finite hybrid time horizons. 
For a hybrid system $\mathcal{H}$, its solutions, denoted by~$\phi$, are hybrid arcs that can in general be  maximal\footnote{A solution $\phi$ to $\mathcal{H}$ is maximal if there does not exist another solution $\psi$ to $\mathcal{H}$ such that dom~$\phi$ is a proper subset of dom~$\psi$ and $\phi(t,j)=\psi(t,j)$ for all $(t,j)\in \textnormal{dom } \phi$~\cite{Hybridbook}.}, complete\footnote{The solution $\phi$ is complete if~$ \textnormal{dom }\phi$ is unbounded, i.e., if $\textnormal{length}(\textnormal{dom }\phi) = \sup_t \textnormal{dom }\phi + \sup_j \textnormal{dom }\phi = \infty$~\cite{Hybridbook}.}, and Zeno\footnote{The solution $\phi$ is Zeno if it is complete and $\sup_t \textnormal{dom } \phi<\infty$~\cite{Hybridbook}.}. 
Complete solutions are defined over arbitrarily long \blue{hybrid} time horizons, and
Zeno behavior implies that solutions undergo an infinite number of jumps in finite time, i.e., a system's states stop flowing in finite time. 

\section{Problem Statement} \label{sec:prob} 
As described in the Introduction, 
the goal of this work is to develop a computation-in-the-loop approach to autonomous
satellite rendezvous that 
(i) does not require predictions of future states,
(ii) allows for in-the-loop computations 
of inputs 
to be slow and hence to produce sub-optimal inputs,
and
(iii) provides robustness to unmodeled disturbances.
In general, rendezvous of the chaser with the target
at the exact same location would cause a collision and is therefore
undesirable. Moreover, under the above conditions we do not expect exact 
convergence to a rendezvous point. Instead, we seek
to show approximate convergence to a Euclidean ball about a desired
rendezvous point.
Accounting for all of these factors, the problem we solve
in this paper is the following. 

\begin{problem} \label{prob:theone}
Develop a controller that drives
the chaser satellite to asymptotically 
approximately 
rendezvous with the target
satellite without attempting to predict future states, while
using only onboard computations. 
Show that this controller
is robust to perturbations in the chaser's dynamics. 
\end{problem}

\section{Stabilization of the  Clohessy-Wiltshire Dynamics} \label{sec:stabDyn}
% \blue{[$\HFO$ $F$, $G$, $C$, $D$ with $A_{\textnormal{stab}}$ and $B_{\textnormal{stab}}$]}
% In this section we solve Problem~\ref{prob:controllerdesign}. 
We consider the CW equations from~\eqref{eq:cwdynamics}
with full-state feedback and unknown disturbances in output
measurements. This setup is equivalent to~\eqref{eq:ltisystem}
with~$A = A_{\n{CW}}$, $B = B_{\n{CW}}$, and~$\Psi = I_6$. 
Then
\begin{align}
\dot{\mathbf{x}}  &= A_{\n{CW}}\mathbf{x} + B_{\n{CW}}\mathbf{v} \\
\mathbf{y} &= \mathbf{x} + \mathbf{d},
\end{align}
where~$\mathbf{d} \in \mathbb{R}^6$ is an unknown time-varying disturbance. 
We implement a controller of the form~$\mathbf{v}=-K\mathbf{y} + \mathbf{u}$, 
where~$\mathbf{u} \in \mathbb{R}^3$ 
is a new input that will be computed by the optimization algorithm in the loop,
and
\begin{equation}
K := \left[\begin{array}{cccccc}
k_1 & k_2 & k_3 & k_4 & k_5 & k_6 \\
k_7 & k_8 & k_9 & k_{10} & k_{11} & k_{12} \\
k_{13} & k_{14} & k_{15} & k_{16} & k_{17} & k_{18} 
\end{array}\right]
\in \mathbb{R}^{3 \times 6} 
\end{equation}
is a matrix of gains.
Then\blue{,} the closed-loop system we consider takes the form 
\begin{align} \label{eq:CWltisystem}
    \dot{\mathbf{x}} &= A_{\n{stab}}\mathbf{x} + B_{\n{stab}}\mathbf{u} - B_{\n{stab}}K\mathbf{d} \\
    \mathbf{y} &= \mathbf{x} + \mathbf{d},
\end{align}
where $B_{\n{stab}} := B_{\n{CW}}$ and $A_{\n{stab}} := A_{\n{CW}} - B_{\n{CW}}K$. Explicitly, 
\begin{equation} \label{eq:A_Stab}
 A_{\n{stab}} = 
    \left[
    \begin{array}{cccccc}
        0 & 0 & 0 & 1 & 0 & 0 \\
        0 & 0 & 0 & 0 & 1 & 0 \\
        0 & 0 & 0 & 0 & 0 & 1 \\
        3w^2 - k_{1} & - k_{2} & - k_{3} & - k_{4} & 2w - k_{5} & - k_{6} \\
        - k_{7} & - k_{8} & - k_{9} & -2w - k_{10} & - k_{11} & - k_{12} \\
        - k_{13} & - k_{14} & -w^2 - k_{15} & - k_{16} & - k_{17} & - k_{18} \\
    \end{array}
    \right].
\end{equation}
%To note, these controller can be set to achieve certain system parameters but not all controllers will perform optimally or require a black-box function. The following subsection will discuss gain computation in further detail.
%While~$18$ gains appear in~$K$, we will show that
%only~$6$ of these gains to non-zero values. 
%\subsection{Gain Selection}
% \blue{[How to choose $K$ with the block diagonal matrix]}
%In this subsection we address Problem \ref{prob:controllerdesign} and show that we can solve for controller gains based on desired eigenvalues much like pole placement but under some assumptions. 
We suppose that a multi-set~$\Lambda_{\n{des}}$ of~$6$ desired closed-loop eigenvalues is given. 

\begin{assumption} \label{as:UnqeEig}
    All elements in $\Lambda_{\n{des}}$ are real and negative.
\end{assumption}

We denote the elements of~$\Lambda_{\n{des}}$ by~$\lambda_1, \lambda_2, \ldots, \lambda_6$ and
without loss of generality we suppose that
\begin{equation} \label{eq:lambda_order}
\lambda_6 \leq \lambda_5 \leq \cdots \leq \lambda_1 < 0. 
\end{equation}
Although the matrix~$K$ has~$18$ elements in it, 
we only set~$8$ of them to non-zero values, and they
will be used in Theorem~\ref{tm:stabGains}  
to set all closed-loop eigenvalues
equal to desired values.
The non-zero gains we use 
are~$k_1$, $k_4$, $k_5$, $k_8$, $k_{10}$, $k_{11}$, $k_{15}$, and~$k_{18}$,
and all other gains are set to zero. 
We set~$k_5 = 2w$ and~$k_{10} = -2w$. 
%For ease of indexing we relabel the other non-zero  gains as~$c_1 := k_1$, $c_2 := k_4$, $c_3 := k_8$, $c_4 := k_{11}$, $c_5 := k_{15}$, $c_6 := k_{18}$. 
% Then
%     \begin{equation} \label{eq:A_stabapprox}
%         A_{\n{stab}} := 
%         \left[
%         \begin{array}{cccccc}
%             0 & 0 & 0 & 1 & 0 & 0 \\
%             0 & 0 & 0 & 0 & 1 & 0 \\
%             0 & 0 & 0 & 0 & 0 & 1 \\
%             3w^2 - k_{1} & 0 & 0 & -k_4 & 0 & 0 \\
%             0 & -k_8 & 0 & 0 & -k_{11} & 0 \\
%             0 & 0 & -w^2 - k_{15} & 0 & 0 & -k_{18} \\
%         \end{array}
%         \right].
% \end{equation}
%

Next we choose~$k_1$, $k_4$, $k_8$, $k_{11}$, $k_{15}$, and~$k_{18}$ so that~$\textnormal{eig}\big(A_{\n{stab}}\big) = \Lambda_{\n{des}}$. 
We can do so because the CW equations are completely controllable. 
While standard pole placement
techniques could be used to numerically assign gains in~$K$, we 
use fewer gains to provide a closed-form relationship between the gains and the desired
closed-loop eigenvalues. We use this relationship in our later analysis to quantify how
the choices of the closed-loop eigenvalues affect the 
chaser satellite's
rate of convergence
to a compact ball about the rendezvous point
with the target.
%That bound can then be used to select eigenvalues in~$\Lambda_{\n{des}}$ to achieve a desired rate of convergence. 

\begin{theorem}[Stabilizing Controller Gains] \label{tm:stabGains}
    Consider~$A_{\n{stab}}$ from~\eqref{eq:A_Stab}, let a multi-set of desired eigenvalues~$\Lambda_{\n{des}} = \{\lambda_1, \lambda_2, \ldots, \lambda_6\}$ be given, and suppose Assumption~\ref{as:UnqeEig} holds.  
    Then setting
    \begin{equation} \label{eq:Kdef}
        K = \left[\begin{array}{cccccc}
            3w^2 + \lambda_1\lambda_2 & 0 &  0 & -\lambda_1 - \lambda_2 & 2w &  0  \\
             0 & \lambda_3\lambda_4 & 0 &  -2w & -\lambda_3 - \lambda_4 & 0 \\
             0 &  0 & -w^2 + \lambda_5\lambda_6 & 0 &  0 & -\lambda_5 - \lambda_6 \\
        \end{array}\right]
    \end{equation}    
    enforces~$\textnormal{eig}(A_{\n{stab}}) = \Lambda_{\n{des}}$. 
\end{theorem}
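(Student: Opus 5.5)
The plan is to substitute the gains in~\eqref{eq:Kdef} into~\eqref{eq:A_Stab} and show that $A_{\n{stab}}$ decouples into three independent $2\times 2$ blocks. From~\eqref{eq:Kdef} we read off $k_1 = 3w^2+\lambda_1\lambda_2$, $k_4 = -\lambda_1-\lambda_2$, $k_5 = 2w$, $k_8 = \lambda_3\lambda_4$, $k_{10} = -2w$, $k_{11} = -\lambda_3-\lambda_4$, $k_{15} = -w^2+\lambda_5\lambda_6$, and $k_{18} = -\lambda_5-\lambda_6$, with all other gains zero. Substituting these into~\eqref{eq:A_Stab} has three effects:
\begin{itemize}
\item the Coriolis couplings cancel, since $2w-k_5 = 0$ and $-2w-k_{10}=0$;
\item the $(4,1)$ entry becomes $3w^2-k_1 = -\lambda_1\lambda_2$;
\item the $(6,3)$ entry becomes $-w^2-k_{15} = -\lambda_5\lambda_6$.
\end{itemize}
The resulting matrix is
\begin{equation}
A_{\n{stab}} = \left[\begin{array}{cccccc}
0 & 0 & 0 & 1 & 0 & 0 \\
0 & 0 & 0 & 0 & 1 & 0 \\
0 & 0 & 0 & 0 & 0 & 1 \\
-\lambda_1\lambda_2 & 0 & 0 & \lambda_1+\lambda_2 & 0 & 0 \\
0 & -\lambda_3\lambda_4 & 0 & 0 & \lambda_3+\lambda_4 & 0 \\
0 & 0 & -\lambda_5\lambda_6 & 0 & 0 & \lambda_5+\lambda_6
\end{array}\right].
\end{equation}

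Next, permute the state ordering to $(x,\dot x, y, \dot y, z, \dot z)$, i.e., apply the similarity transformation $P^T A_{\n{stab}} P$ with a permutation matrix $P$. Similarity preserves eigenvalues with their algebraic multiplicities. The transformed matrix is block diagonal, $\textnormal{diag}(M_{12}, M_{34}, M_{56})$, where
\begin{equation}
M_{ij} = \left[\begin{array}{cc} 0 & 1 \\ -\lambda_i\lambda_j & \lambda_i+\lambda_j \end{array}\right].
\end{equation}
The characteristic polynomial of a block diagonal matrix is the product of the blocks' characteristic polynomials. For each block, $\det(sI_2 - M_{ij}) = s^2 - (\lambda_i+\lambda_j)s + \lambda_i\lambda_j = (s-\lambda_i)(s-\lambda_j)$. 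Hence
\begin{equation}
\det(sI_6 - A_{\n{stab}}) = \prod_{i=1}^6 (s-\lambda_i),
\end{equation}
so $\textnormal{eig}(A_{\n{stab}}) = \Lambda_{\n{des}}$ as multi-sets, including any repeated eigenvalues.

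I do not expect a genuine obstacle here; the argument is a direct computation. The only place needing care is bookkeeping the signs in~\eqref{eq:A_Stab}, where entries appear as $-k_i$ or $3w^2-k_1$, so that the cancellations come out exactly. Assumption~\ref{as:UnqeEig} and the ordering~\eqref{eq:lambda_order} are not needed for the eigenvalue assignment itself. They only guarantee that the gains are real and that $A_{\n{stab}}$ is Hurwitz, which is worth noting after the proof.
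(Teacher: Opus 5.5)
Your proof is correct. It rests on the same structural fact as the paper: the characteristic polynomial of $A_{\textnormal{stab}}$ splits into three quadratics, one per axis. You reach it by a different route.

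\textbf{The paper's route.} It keeps the gains symbolic and applies Schur's determinant formula to the $2\times 2$ block form of $\lambda I_6 - A_{\textnormal{stab}}$. This gives $(\lambda^2 + k_4\lambda + k_1 - 3w^2)(\lambda^2 + k_{11}\lambda + k_8)(\lambda^2 + k_{18}\lambda + k_{15} + w^2)$. It then works backwards: it writes the roots with the quadratic formula and solves for $k_1, k_4$ (and analogously the other gains) in terms of the $\lambda_i$.

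\textbf{Your route.} You substitute the stated gains, permute to $\textnormal{diag}(M_{12},M_{34},M_{56})$, and read off each quadratic factor from its trace and determinant.

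\textbf{What each buys.} The paper's synthesis direction shows how the gains in~\eqref{eq:Kdef} are derived rather than merely checked. However, its step $\sqrt{k_4^2 - 4(k_1-3w^2)} = \sqrt{(2\lambda_1+k_4)^2} = 2\lambda_1 + k_4$ silently uses $\lambda_1 \ge \lambda_2$ from~\eqref{eq:lambda_order}. That sign fact is only confirmed after $k_4 = -\lambda_1-\lambda_2$ has been found. Your verification direction avoids square roots and sign choices entirely. It proves exactly the sufficiency the theorem asserts and gives equality of multisets, including repeated eigenvalues, directly.

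Your closing remark is also accurate: Assumption~\ref{as:UnqeEig} and the ordering matter only for real gains and the Hurwitz property, not for the eigenvalue assignment itself.
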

\begin{proof}
    We first compute the eigenvalues of~$A_{\n{stab}}$ in terms of~$k_1$, $k_4$, $k_8$, $k_{11}$, $k_{15}$, and~$k_{18}$. 
    The matrix~$\lambda I_6 - A_{\n{stab}}$ has a block~$2 \times 2$ structure, namely 
    \begin{equation}
        \lambda I_6 - A_{\n{stab}} = \left(\begin{array}{cc} 
                                                    \lambda I_3 & -I_3 \\
                                                    \textnormal{diag}(k_1 - 3w^2, k_8, w^2 + k_{15}) & \textnormal{diag}(\lambda + k_4, \lambda + k_{11}, \lambda + k_{18})
                                                     \end{array}\right). 
    \end{equation}
    Using Schur's formula~\cite[Fact 6.5.28]{bernsteinbook}, we have
    \begin{multline}
        \det(\lambda I_6 - A_{\n{stab}}) = \det(\lambda I_3)\det\Big(\textnormal{diag}(\lambda + k_4, \lambda + k_{11}, \lambda + k_{18}) \\
         + \textnormal{diag}(k_1 - 3w^2, k_8, w^2 + k_{15})\big(\lambda I_3\big)^{-1}\Big).
    \end{multline}
    Expanding gives
    \begin{multline}
        \det(\lambda I_6 - A_{\n{stab}}) 
        %&= \lambda^3 \det\Big(\textnormal{diag}(\lambda + k_4, \lambda + k_{11}, \lambda + k_{18}) + \frac{1}{\lambda}\textnormal{diag}(k_1 - 3w^2, k_8, w^2 + k_{15})\Big) \\
           = \\ 
           \det\Big(\textnormal{diag}(\lambda^2 + k_4\lambda, \lambda^2 + k_{11}\lambda, \lambda^2 + k_{18}\lambda) 
         + \textnormal{diag}(k_1 - 3w^2, k_8, w^2 + k_{15})\Big), 
         %&= \det\big(\textnormal{diag}(\lambda^2 - \lambda c_2 - c_1,
         %                              \lambda^2 - \lambda c_4 - c_3,
         %                              \lambda^2 - \lambda c_6 - c_5)\big),
    \end{multline}
    where we have used the fact that~$\det(cM) = c^3\det(M)$
    for a scalar~$c$ and~$3 \times 3$ matrix~$M$. 
    Then the eigenvalues
    of~$A_{\n{stab}}$ are the solutions to the equation    
    \begin{equation}
        \big(\lambda^2 + k_4\lambda + k_1 - 3w^2\big)\big(\lambda^2 + k_{11}\lambda + k_8\big)\big(\lambda^2 + k_{18}\lambda + k_{15} + w^2\big) = 0. 
    \end{equation}

    Solving this equation gives the eigenvalues of~$A_{\n{stab}}$ in pairs as the following: 
    %Suppose $A_{\n{stab}}$ was approximated to~\eqref{eq:A_stabapprox}, we can take the eigenvalues of the matrix which give the following solution in terms of the gains $c_i$:
    %\begin{equation}
        $\lambda_{1,2} = -\frac{1}{2}k_4 \pm \frac{1}{2}\sqrt{k_4^2 - 4(k_1 - 3w^2)}$, 
        $\lambda_{3,4} = -\frac{1}{2}k_{11} \pm \frac{1}{2}\sqrt{k_{11}^2 - 4k_8}$, 
        and
        $\lambda_{5,6} = -\frac{1}{2}k_{18} \pm \frac{1}{2}\sqrt{k_{18}^2 - 4(w^2 + k_{15})}$.
    %\end{equation}
    Starting with $\lambda_{1}$ we see that~$\lambda_{1} = -\frac{1}{2}k_4 + \frac{1}{2}\sqrt{k_4^2 - 4(k_1 - 3w^2)}$,
    and solving for~$k_1$ gives
        %\frac{1}{2}\sqrt{c_2^2 + 4c_1} &= \frac{1}{2}c_2 -  \lambda_{2}\\
        %\sqrt{c_2^2 + 4c_1} &= c_2 -  2\lambda_{2}\\
        %4c_1 &= (c_2 -  2\lambda_{2})^2 - c_{2}^2\\
        %&= c_{2}^2 -4c_{2}\lambda_{2} + 4\lambda_{2} - c_{2}^2\\
    $k_{1} = 3w^2 - \lambda_{1}^2 - k_4\lambda_{1}$.   
    We plug this expression into the expression for $\lambda_{2}$ to find
    %\begin{equation}
        $\lambda_{2} %&= \frac{1}{2}c_2 + \frac{1}{2}\sqrt{c_2^2 + 4c_1}\\
        = -\frac{1}{2}k_4 - \frac{1}{2}\sqrt{k_4^2 - 4(k_1 - 3w^2)} 
        = -\frac{1}{2}k_4 - \frac{1}{2}\big(2\lambda_1 + k_4\big)$. 
    %\end{equation}
    Then~$k_4  = -\lambda_{1} - \lambda_{2}$.
    %Since we've solved for $c_1$ and $c_2$ as a function of $\lambda_1$ and $\lambda_2$ then we can apply a similar procedure for the following gains. This allows us to implement a form of pole placement where for the desired eigenvalues we can come up with some gains following some assumptions the restrict the form of the matrix and the choice of eigenvalues.
    Solving for~$k_8$, $k_{11}$, $k_{15}$, and~$k_{18}$ proceeds similarly. 
\end{proof}

\section{Hybrid Modeling of Feedback Optimization} \label{sec:hybridModel}
In this section, 
we develop a hybrid model of feedback optimization for the satellite rendezvous problem,
and we show that it is well-posed
and that maximal solutions are both complete and non-Zeno.
We first formulate the system model. 
As stated in Section~\ref{ss:fobackground}, feedback optimization is used to optimize the steady-state
behavior of a system, and systems are typically assumed to be asymptotically stable 
when using feedback optimization. 
We will therefore implement feedback optimization for the
stabilized CW equations 
%From \eqref{eq:CWltisystem} we were able to define our stabilized CW dynamics as
\begin{subequations} \label{eq:cw_stab}
\begin{align} 
    \dot{\mathbf{x}} &= A_{\n{stab}}\mathbf{x} + B_{\n{stab}}\mathbf{u} - B_{\n{stab}}K\mathbf{d} \label{eq:cw_stab_x} \\
    \mathbf{y} &= \mathbf{x} + \mathbf{d}, \label{eq:cw_stab_y} 
\end{align}
\end{subequations}
where we suppose that desired eigenvalues in~$\Lambda_{des}$ have been specified, 
$K$ is from~\eqref{eq:Kdef}, and~$\mathbf{d}$ is an unknown time-varying disturbance. 
At each time~$t$, we seek to drive the input and output of this system 
to the solution to
% \begin{subequations} \label{eq:fo_genform_u}
%     \begin{align}
%         \min_{\mathbf{u},\mathbf{y}} \quad & \Phi\big(\mathbf{u},\mathbf{y}\big) \label{eq:phidef_u}\\
%         \text{subject to} \quad & \mathbf{y} = H_{\n{stab}}\mathbf{u} + \mathbf{d}, \,\, \mathbf{u} \in \mathcal{U}, \,\, \mathbf{y} \in \R^p, \label{eq:ddef_u} 
%     \end{align}
% \end{subequations}
\begin{mini!}|l|
    {\mathbf{u},\mathbf{y}}
    {\Phi\big(\mathbf{u},\mathbf{y}\big)}
    {\label{eq:fo_genform_u}}
    {\label{eq:phidef_u}}
    \addConstraint{\mathbf{y} = H_{\n{stab}}\mathbf{u} + \mathbf{d}, \,\, \mathbf{u} \in \mathcal{U}, \,\, \mathbf{y} \in \R^6,}{\label{eq:ddef_u} }
\end{mini!}
where~$H_{\n{stab}} := -A_{\n{stab}}^{-1}B_{\n{stab}}$,
and the constraint~$\mathbf{y} = H_{\n{stab}}\mathbf{u} + \mathbf{d}$ is 
equal to the steady-state input-to-output mapping associated with
the system in~\eqref{eq:cw_stab_x}-\eqref{eq:cw_stab_y} with
values perturbed by~$\mathbf{d}$. 
We consider objective functions~$\Phi$ of the form 
\begin{equation}\label{eq:quadOBJ} 
        \Phi(\mathbf{u},\mathbf{y}_{s}) = \frac{1}{2}\mathbf{u}^\top Q_{\mathbf{u}}\mathbf{u} + \frac{1}{2}(\mathbf{y}_{s} - \hat{\mathbf{y}})^\top Q_{\mathbf{y}}(\mathbf{y}_s - \hat{\mathbf{y}}), 
\end{equation}
where $Q_{\mathbf{u}} \in \R^{3 \times 3}$ and $Q_{\mathbf{y}} \in \R^{6 \times 6}$ are symmetric and positive definite, and $\hat{\mathbf{y}} \in \mathbb{R}^{6}$ is the nominal point at which
the chaser will rendezvous with the target. 
\blue{The chaser should stop moving once it has done so, and therefore
we set~$\hat{y}_4 = \hat{y}_5 = \hat{y}_6 = 0$.} 
% Setting~$\hat{\mathbf{y}} = 0$ may drive the chaser to collide
% with the target, and therefore we consider~$\hat{\mathbf{y}} \neq 0$
% so that the chaser approaches the target without colliding. 

\subsection{Hybrid Modeling of Flow and Jump Sets}
We now develop a hybrid model of 
feedback optimization implemented for
the chaser's dynamics in~\eqref{eq:cw_stab} with discrete-time gradient descent in the loop. 
Some of these modeling developments are related to those in prior work~\cite{chuy2025hybridsystemsmodelfeedback} by a subset of the authors
of the current paper. The current paper goes beyond~\cite{chuy2025hybridsystemsmodelfeedback}
because it implements a pre-stabilizing feedback controller design, propagates output disturbances through the dynamics as a result of that feedback controller, and 
quantifies the impact of 
these disturbances on satellite rendezvous. 

We begin by defining the state vector and flow and jump sets for the model of the chaser. 
For feedback optimization, the state~$\mathbf{x} \in \mathbb{R}^6$ evolves in continuous time according to~\eqref{eq:cw_stab_x},
and~$\mathbf{x}$ is a state in the hybrid model we develop. 
Samples of the output are taken over time, and each sample is fed into an optimization algorithm
that computes new values of~$\mathbf{u} \in \mathbb{R}^3$. We use~$\mathbf{y}_s \in \mathbb{R}^6$ to denote the sampled output,
and both~$\mathbf{y}_s$ and~$\mathbf{u}$ are states of the hybrid system as well. 
The value of~$\mathbf{y}_s$ is plugged into a projected 
gradient descent algorithm of the form of~\eqref{eq:gdUpdateLaw}
that runs in the loop and computes iterates that progress
toward the optimal input. 
We denote the current gradient descent iterate by~$\mathbf{z} \in \mathbb{R}^3$, which
is also a state of the hybrid system. 
To track the amount of continuous time required to compute
an iterate, we introduce a timer~$\ctimer{} \in \mathbb{R}$ that 
tracks the amount of time remaining until the next gradient descent iteration
is completed. It
counts down with unit rate, and an iteration of gradient descent
has been completed when it reaches zero. 
Between the times at which samples~$\mathbf{y}_s$ are taken,
the state~$\mathbf{x}$ continues flowing with the most recently applied
input held constant. 
Therefore, we introduce a timer~$\atimer{} \in \mathbb{R}$ that tracks the amount of time
remaining until the input to the system~$\mathbf{u}$ is changed.
This timer also counts down with unit rate, and the input~$\mathbf{u}$ is set
equal to the most recent optimization iterate~$\mathbf{z}$ when~$\atimer{}$ reaches zero. 
Additionally, to convert the time-dependent disturbance~$t \mapsto \mathbf{d}(t)$
into a state-dependent disturbance, 
we introduce a timer~$\dtimer\in \R$ that begins
at zero and counts up with unit rate. 
Each timer is a state of the hybrid model we develop.

We assemble the states of the hybrid system into the vector
%The states of the hybrid system include~$\mathbf{x} \in \R^6$,~$\mathbf{u} \in \R^3$ and~$\mathbf{y}_s \in \R^6$ which are, respectively, the state, input, and value of the most recent he sampled output of the LTI system in~\eqref{eq:CWltisystem}.  The full state is defined as
\begin{equation} \label{eq:zeta}
    \zeta := 
    \left(\begin{array}{ccccccc}\!
    \mathbf{x}^{\top} &
    \mathbf{u}^{\top} &
    \mathbf{y}_s^{\top} &
    \mathbf{z}^{\top} &
    \atimer{} &
    \ctimer{} &
    \dtimer
    \end{array}\!\right)^{\top}
    \in \mathcal{X} := \R^{21}.
\end{equation}
We define~$\atimer{,\max}$ as the maximum amount of time that can elapse between two consecutive
changes in the chaser's input, and 
it defines the control cadence of the chaser satellite. 
We define~$\ctimer{,\n{comp}}$ as the amount of time required
to complete one iteration of gradient descent. A jump occurs only when 
$\ctimer{}$ or $\atimer{}$ 
has reached zero. Otherwise the system's states continue flowing. 
%They dictate when the states jump as encompassed by certain criteria which are defined in the flow set~$C$ and jump set~$D$.
Then we define the flow and jump sets as
\begin{align} 
        C &:= \big\{\zeta \in \mathcal{X} \mid \atimer{} \in [0,\atimer{,\max}], \ctimer{} \in [0,\ctimer{,\n{comp}}]\big\} \label{eq:Cdef} \\
        D &:= \big\{\zeta \in \mathcal{X} \mid \atimer{} = 0 \textnormal{ or } \ctimer{} = 0\big\}, \label{eq:Ddef}
\end{align}
respectively. 
%where $\atimer{,max} > 0$ is the maximum amount of time between changes in the  input and $\ctimer{,\n{comp}} > 0$ is the allotted amount of time needed to perform a single gradient descent iteration.

\subsection{Flow Map Definition}
The flow map models the continuous-time evolution of~$\zeta$. 
The state~$\mathbf{x}$ flows according to the 
stabilized CW dynamics in~\eqref{eq:cw_stab_x},
$\ctimer{}$ and~$\atimer{}$ count down with unit rate, and 
$\dtimer$ counts up with unit rate. 
Other states only change at jump times,
and the flow map is 
\begin{equation} \label{eq:fwmap}
    F(\zeta) :=
    \left(\begin{array}{c}
    A_{\n{stab}}\mathbf{x} + B_{\n{stab}}\mathbf{u} - B_{\n{stab}}K\mathbf{d}(\dtimer) \\
     \mathbf{0} \\
     \mathbf{0} \\
     \mathbf{0} \\
    {-1} \\
    {-1} \\
    {1}
    \end{array}\right)
    \quad \blue{\textnormal{ for all } \zeta \in C}.
\end{equation}

\subsection{Jump Map Definition} \label{ss:jump}
    The jump map has three cases as defined in \eqref{eq:Ddef}: (i) $\ctimer{} = 0$ with~$\atimer{} > 0$, 
    (ii) $\atimer{} = 0$ with~$\ctimer{} > 0$, and (iii) $\atimer{} = \ctimer{} = 0$. 

    In case~(i), the hybrid system completes a single gradient descent step of the form~$\mathbf{z}^+ = \Pi_{\mathcal{U}}\big[\mathbf{z} - \gamma(\nabla_{\mathbf{u}} \Phi(\mathbf{z}, \mathbf{y}_s) + H^T\nabla_{\mathbf{y}}\Phi(\mathbf{z}, \mathbf{y}_s))\big]$, 
    where~$\nabla_{\mathbf{u}}\Phi$ denotes the derivative of~$\Phi$ with respect to its first argument,
    $\nabla_{\mathbf{y}}\Phi$ denotes the derivative of~$\Phi$ with respect to its second argument,
    and~$\gamma > 0$ is a stepsize. 
    Since~$\atimer{} > 0$, that new iterate is not applied as the input to the system.
    The timer $\ctimer{}$ resets to $\ctimer{,\n{comp}}$, and all other states remain unchanged.
    Then, the jump map for this case is  
    \begin{equation} \label{eq:g1def}
       G_1(\zeta) :=
                \left(\begin{array}{c}
                \mathbf{x} \\
                \mathbf{u} \\
                \mathbf{y}_s \\
                \Pi_{\mathcal{U}}\Big[\mathbf{z} - \gamma\big(\nabla_{\mathbf{u}} \Phi\big(\mathbf{z}, \mathbf{y}_s\big) + H^T\nabla_{\mathbf{y}}\Phi\big(\mathbf{z}, \mathbf{y}_s\big)\big)\Big] \\
                \atimer{} \\
                \ctimer{,\n{comp}} \\
                \dtimer
                \end{array} \right) \quad \blue{\textnormal{ for all } \zeta \in D_1},
    \end{equation}
    where~$D_1 := \{\zeta\in \mathcal{X}: \ctimer{} = 0\}$. 
    
    During case (ii), when~$\atimer{} = 0$ and~$\ctimer{} > 0$, 
    the input~$\mathbf{u}$ is set equal to the most recent optimization iterate~$\mathbf{z}$, 
    a new output~$\mathbf{y}_s$ is sampled, and the timer~$\atimer{}$ resets to some number in the interval~$[\atimer{,\min},\atimer{,\max}]$, where~$0 < \atimer{,\min} \leq  \atimer{,\max}$. 
    This interval is used to model indeterminacy in the amount of time that elapses between changes in the
    input. 
    All other states remain unchanged, and the jump map for this case is
    \begin{equation} \label{eq:g2def}
    G_2(\zeta) :=
        \left(\begin{array}{c}
                \mathbf{x} \\
                \mathbf{z} \\
                H_{\n{stab}}\mathbf{u} + \mathbf{d}(\dtimer) \\
                \mathbf{z} \\
                {[\atimer{,\min}, \atimer{,\max}]} \\
                \ctimer{}\\
                \dtimer
        \end{array} \right) \quad \blue{\textnormal{ for all } \zeta \in D_2},
    \end{equation}
    where $D_2 := \{\zeta\in \mathcal{X}: \atimer{} = 0\}$, and, as is standard in feedback optimization, we approximate the output~$\mathbf{y}_{s} = \Psi \mathbf{x} + \mathbf{d}$ 
    as coming from the perturbed steady-state mapping
    \begin{equation} \label{eq:hstab_approx}
        \mathbf{y}_{s} = H_{\n{stab}}\mathbf{u} + \mathbf{d},
    \end{equation}
    where~$H_{\n{stab}}$ is from~\eqref{eq:ddef_u}. 
    The mapping~$G_2$ has a set-valued element in the form of~$[\atimer{,\min}, \atimer{,\max}]$,
    which allows~$\atimer{}$ to reset to any value in this interval \blue{and
    models indeterminacy in the amount of time that elapses between changes in the input.} 
        
    In case (iii), where both $\atimer{} = 0$ and $\ctimer{} = 0$, we combine cases~(i) and~(ii), and 
    the system executes either~$G_1$ then~$G_2$ or~$G_2$ then~$G_1$. 
    The full jump map~$G$ is 
    \begin{equation} \label{eq:jpmap}
            \zeta^+ \in G(\zeta) := 
            \begin{cases}
                G_1(\zeta)
                \quad \text{if }
                \atimer{} > 0 \textnormal{ and } \ctimer{} = 0 \quad \textnormal{ Case (i)} \\
                G_2(\zeta)
                \quad \text{if }
                \atimer{} = 0 \textnormal{ and } \ctimer{} > 0 \quad \textnormal{ Case (ii)} \\
                G_3(\zeta) 
                \quad \text{if }
                \atimer{} = 0 \textnormal{ and } \ctimer{} = 0 \quad \textnormal{ Case (iii)}, 
            \end{cases}
    \end{equation}
    where $G_3(\zeta) := G_1(\zeta) \cup G_2(\zeta)$. 
    % The jump set~$D$ is equal to 
    % \begin{equation} \label{eq:jpset}
    %     D := D_1\cup D_2,
    % \end{equation}
    % where~$D_1 := \{\zeta\in \mathcal{X}: \ctimer{} = 0\}$ and $D_2 := \{\zeta\in \mathcal{X}: \atimer{} = 0\}$. 
    Then, the hybrid model of feedback optimization onboard
    the chaser satellite is 
    \begin{equation}\label{eq:hybridFO}
        \HFO:= (C,F,D,G),
    \end{equation}
    where $C$ is from \eqref{eq:Cdef}, $F$ is from \eqref{eq:fwmap}, $D$ is from \eqref{eq:Ddef}, and $G$ is from \eqref{eq:jpmap}.
    We illustrate this hybrid model in Figure~\ref{fig:HFOLogic}.

    \begin{figure}
    \centering
    \includegraphics[width=0.9\linewidth]{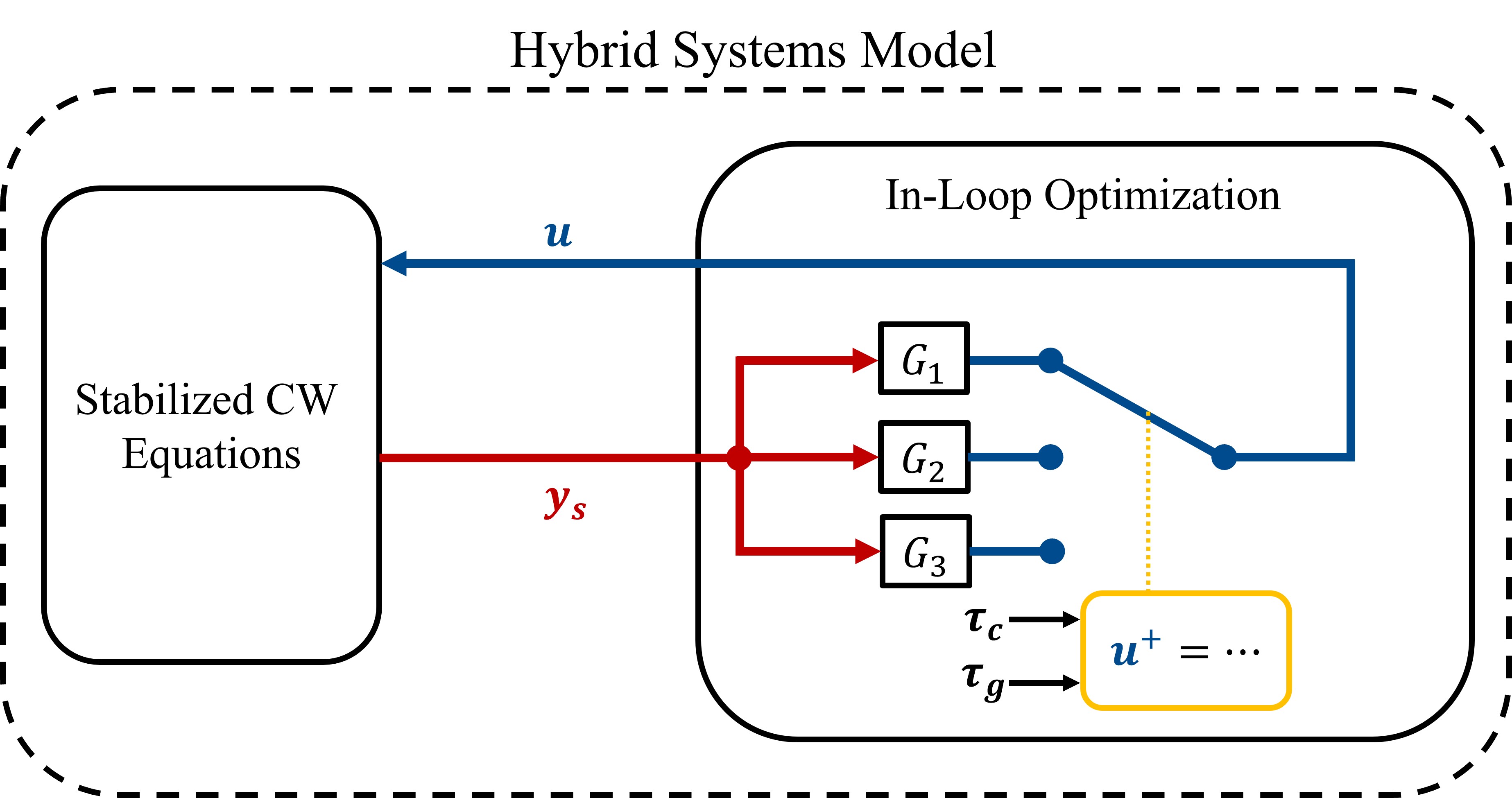}
    \caption{Diagram of the hybrid system model of feedback optimization 
    $\HFO$ from~\eqref{eq:hybridFO} 
    with the stabilized CW equations and in-the-loop optimization. The optimization 
    algorithm receives the sampled output~$\mathbf{y}_s$ and 
    computes the next input~$\mathbf{u}$ 
    with timing determined by the timers
    $\atimer{}$ and $\ctimer{}$ 
    reaching zero.}
    \label{fig:HFOLogic}
\end{figure}

\subsection{Intermittent Sampling and Optimization in \texorpdfstring{$\HFO$}{}}
\blue{We impose the following assumption on the relative rates of computation
of inputs and changes in the values of inputs applied to the chaser.}

\begin{assumption} \label{as:timescale}
\blue{        There exists~$\ell \in \mathbb{N}$ with~$\ell \geq 1$
        such that~$\ell\ctimer{,\n{comp}} \leq \atimer{,\min}$. }
    \end{assumption}

\blue{
    Assumption \ref{as:timescale} implies that there are at least $\ell$ gradient descent iterations, i.e., case (i) jumps, performed between consecutive changes in the 
    chaser's 
    input. Mathematically, it ensures that the optimization algorithm
    takes at least one step towards the
    optimal input before each input is applied to the chaser.
    The value of~$\ell$ is determined by a chaser satellite's onboard processor and will vary across
    implementations. Without this assumption, a satellite could never perform
    a computation and never change its input. 
    }

To analyze 
the relationship between sampling and optimization in~$\HFO$, 
\blue{consider an initial 
condition~$\phi(0,0) = \nu \in \R^{21}$ that satisfies
\begin{equation} \label{eq:initconds}
\atimer{}(0,0) \in [\atimer{,\min},\atimer{,\max}],~\ctimer{}(0, 0) = \tau_{g,\n{comp}},~\dtimer(0,0) = 0,~\textnormal{and}~\mathbf{z}_{0}\Hotime{0}{0} = \mathbf{u}\Hotime{0}{0}, 
\end{equation}
and consider a solution~$\phi$ to~$\mathcal{H}_{FO}$ from such initial condition.}
%We use~$\alpha(i)$ to denote the number of gradient descent iterations that are performed when computing the~$(i+1)^{\n{th}}$ value of the input~$\mathbf{u}$. 
The behavior of sampling outputs and optimizing inputs is as follows. 
Starting from hybrid time~$(0, 0)$ the input~$\mathbf{u}(0, 0)$
is applied to the system and held constant. 
While that input is applied,
the system performs 
some number
of gradient descent steps that we denote by~$\alpha(0)$
(which are $\alpha(0)$ case (i) jumps) before a new input is applied. 
Each case (i) jump is triggered by~$\ctimer{}$ reaching zero.
The new input
is applied when a case (ii) jump occurs, 
\blue{which} is triggered by~$\atimer{}$ reaching zero.
\blue{This case (ii) jump is the~$\big(\alpha(0) + 1\big)^{th}$ jump,} 
% occurs at hybrid  time~\blue{$\big(t_{\alpha(0)}, \alpha(0)\big)$}
and it changes the value of the input to
$\mathbf{u}\Hotime{t_{\alpha(0) + 1}}{\alpha(0) + 1}$, 
where the value~$\alpha(0) + 1$ 
has accounted for the~$\alpha(0)$ case (i) jumps
and the one case (ii) jump that have occurred.
When computing the value of the 
input~$\mathbf{u}\Hotime{t_{\alpha(0) + 1}}{\alpha(0) + 1}$, 
% the final iterate that is computed is denoted~$z_{\alpha(0)}\Hotime{t_{\alpha(0)}}{\alpha(0)}$, and
the~$k^{\n{th}}$ iterate 
for any~$k \in \{0, \ldots, \alpha(0)\}$ 
is denoted $\mathbf{z}_{k}\Htime{k}{k}$. 
%This notation allows for the ease in tracking how many iterations are completed when computing each input. \blue{For all other values of $\mathbf{z}$, it is held constant between iterations.}
When computing $\mathbf{u}\Htime{\alpha(0) + 1}{\alpha(0) + 1}$, we denote the $\alpha(0)^{\n{th}}$ iterate (which
is the last iterate) by $\mathbf{z}_{\alpha(0)}\Htime{\alpha(0)}{\alpha(0)}$.
\blue{The first two changes in the input are shown in Figure~\ref{fig:hybridtime},
\green{where~$\atimer{}$ is shown jumping to different values when it jumps
because it can jump to any value in the interval~$[\atimer{,\min}, \atimer{,\max}]$}.
}

\begin{figure}[h]
    \centering
    \includegraphics[width=1\linewidth]{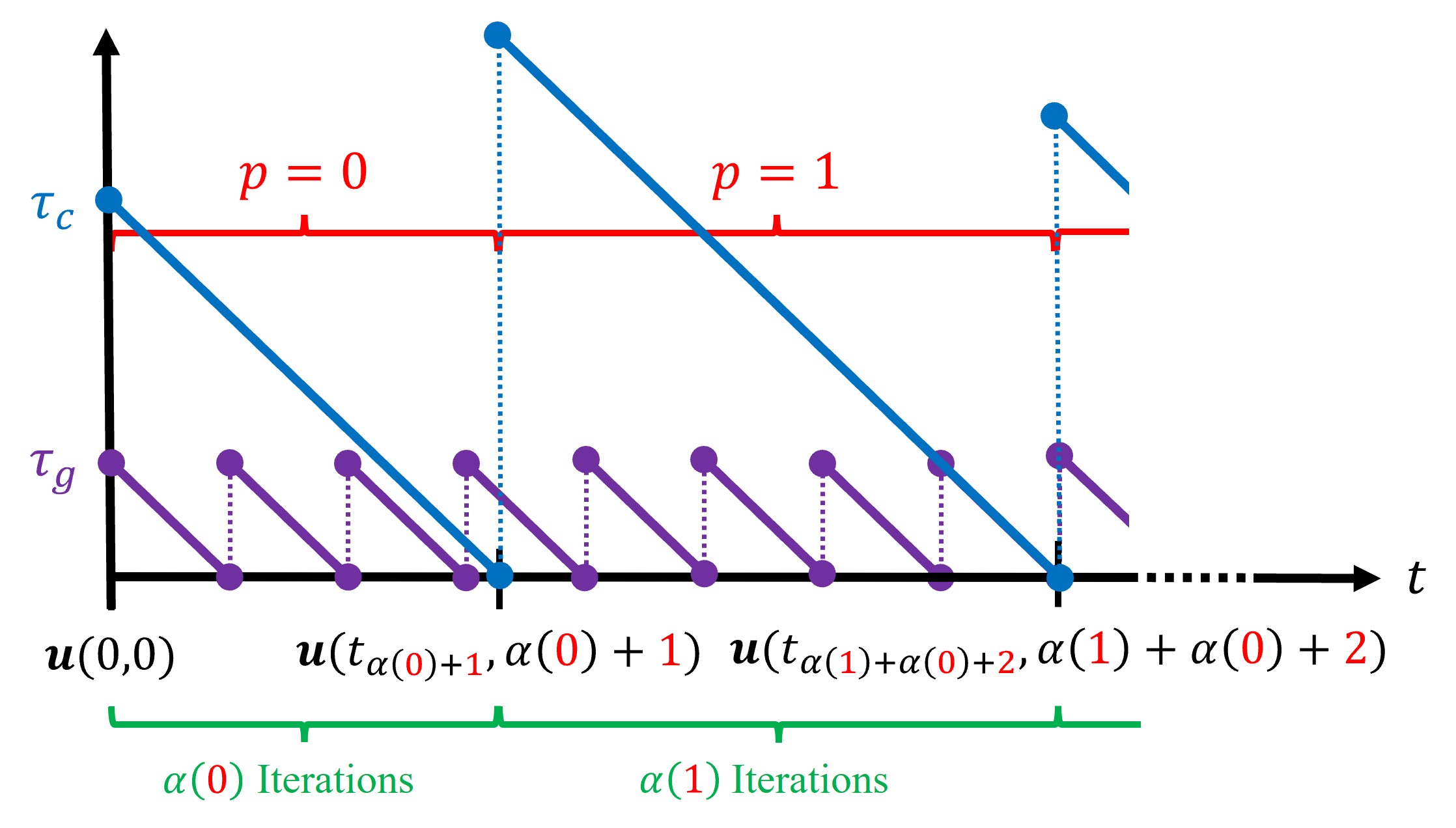}
    \caption{\blue{A visual representation of the evolution of 
    the input~$\mathbf{u}$. There are $\alpha(p)$ iterations of gradient descent when computing the $(p+1)^{\textnormal{th}}$ input. 
    The first two changes in the input occur at the hybrid times~$(t_{\alpha(0) + 1}, \alpha(0) + 1)$
    and~$(t_{\alpha(0) + \alpha(1) + 2}, \alpha(0) + \alpha(1) + 2)$. 
    The timer~$\atimer{}$ reaches zero for the second time at the same
    time that~$\ctimer{}$ reaches zero. This occurrence can trigger either a case~(i)
    jump followed by a case~(ii) jump or a case~(ii) jump followed by a case~(i) jump. 
    } 
    %If~$\atimer{}$ and $\ctimer{}$ reach zero,~$\mathbf{u}\Htime{\alpha(0)+p}{\alpha(0)+p}$ is able to account whether~$G_1$ or~$G_2$ is first performed by the change in the $\alpha(0)$ and $\alpha(1)$ values. 
    }
    \label{fig:hybridtime}
\end{figure}

When~$\atimer{}$ reaches zero for the first time, 
several operations occur in addition to the input changing,
and they are modeled in the jump map~$G_2$ in~\eqref{eq:g2def}.
%the case~(ii) jump occurs \blue{which is reflected in the hybrid time $\Htime{\alpha(0) + 1}{\alpha(0) + 1}$}. 
%\blue{The following actions are have been defined in \eqref{eq:g2def}:} 
The output $\mathbf{y}_s\Htime{\alpha(0) + 1}{\alpha(0) + 1}$ is sampled, and
we approximate it as coming from the perturbed
steady-state mapping in~\eqref{eq:hstab_approx}. 
The input~$\mathbf{u}$ is set equal to the most recent optimization iterate, i.e., 
$\mathbf{u}\Htime{\alpha(0) + 1}{\alpha(0) + 1} = \mathbf{z}_{\alpha(0)}\Htime{\alpha(0)}{\alpha(0)}$, and
that iterate is also used as the initial iterate when computing the next input
so that $\mathbf{z}_{0}\Htime{\alpha(0)  + 1}{\alpha(0) + 1} = \mathbf{z}_{\alpha(0)}\Htime{\alpha(0)}{\alpha(0)}$. 

We use~$\alpha(p)$ for~$p \in \mathbb{N}$ to denote
the number of gradient descent iterations
that are generated when computing the~$(p+1)^{\n{th}}$ input
to the system. 
We define $\alphatime{p}$ for~$p \in \mathbb{N}$ 
as the total number of gradient descent iterations that have been completed 
for computing any input 
up until the $p^{\n{th}}$ jump in $\mathbf{u}$. 
Then~$\alphatime{0} = 0$ and $\alphatime{p} = \sum^{p-1}_{i=0} \alpha(i)$.
Figure~\ref{fig:hybridtime} provides an illustration of how inputs change over time. 
%Figure \ref{fig:hybridtime} illustrates the changes in the input and their relationship to hybrid time. 

We can iterate the above analysis to find a general expression for the~$(k+1)^{\n{th}}$ iterate
when finding the~$(p+1)^{\n{th}}$ input to the system. That iterate is computed after
the~$p^{\n{th}}$ case~(ii) jump 
has occurred 
and after~$k+1$ case~(i) jumps have occurred after
that case~(ii) jump. 
It takes the form 
%We can generalize an arbitrary $k+1$ case (i) jumps with $p$ case (ii) jumps such that the optimization iterate takes the form
\begin{multline} \label{eq:bigiteration}
    \mathbf{z}_{k+1}\Htime{\alphatime{p} + p + k + 1}{\alphatime{p} + p + k + 1} = \\\Pi_{\mathcal{U}}\big[\mathbf{z}_{k}\Htime{\alphatime{p} + p + k}{\alphatime{p} + p + k} 
    - \gamma\big(Q_{\mathbf{u}}\mathbf{z}_{k}\Htime{\alphatime{p} + p + k}{\alphatime{p} + p + k} 
     \\ + H^{\top}Q_{\mathbf{y}}\big(\mathbf{y}_s\Htime{\alphatime{p} + p + k}{\alphatime{p} + p + k} - \hat{\mathbf{y}} \big)\big)\big]. 
\end{multline}

%where for ease of notation we have used
%\begin{multline}
%    \nabla_{\mathbf{u}}\Phi\big({\mathbf{z}}_{k}\Htime{\alphatime{p} + p + k}{\alphatime{p} + p + k}\big) = \\
%    Q_{\mathbf{u}}\mathbf{z}_{k}\Htime{\alphatime{p} + p + k}{\alphatime{p} + p + k} 
%     + H^{\top}Q_{\mathbf{y}}\big(\mathbf{y}_s\Htime{\alphatime{p} + p + k}{\alphatime{p} + p + k} - \hat{\mathbf{y}} \big).
%\end{multline}
The hybrid time~$\Htime{\alphatime{p} + p + k}{\alphatime{p} + p + k}$
accounts for~$\bar{\alpha}(p)$ total gradient descent
iterations that have been computed up to the~$p^{\n{th}}$
change in the input, $p$ changes in the input itself,
and~$k$ gradient descent iterations that have been computed so far
in the course of computing the~$(p+1)^{\n{th}}$ input. 

The $(p+1)^{\n{th}}$ input to the system is 
\begin{equation} \label{eq:c2inptiterate}
    \mathbf{u}\Htime{\alphatime{p + 1} + p + 1}{\alphatime{p + 1} + p + 1} =
    \mathbf{z}_{\alpha(p)}\Htime{\alphatime{p} + \alpha(p) + p}{\alphatime{p} + \alpha(p) + p},
\end{equation}
i.e., it is set equal to the most recently
computed iterate from the optimization algorithm
at the time that the change in the input occurs. 
When computing iterates for
the~$(p+2)^{\n{th}}$ input, 
the initial iterate is denoted 
\begin{equation}
\mathbf{z}_{0}\Htime{\alphatime{p + 1} + p + 1}{\alphatime{p + 1} + p + 1}. 
\end{equation}
%which equals $\mathbf{u}\Htime{\alphatime{p + 1} + p + 1}{\alphatime{p + 1} + p + 1}$.
That is, the initial iterate when computing the next input is equal to the final iterate
that was computed when finding the previous input.

\subsection{Basic Properties of the Hybrid Model of Feedback Optimization}
% \blue{[Well-Posed, Maximal Solutions are Complete: Move to a Later section]}
We impose the following assumption on the disturbance~$\mathbf{d}$.

\begin{assumption} \label{as:d}
\blue{There exists a non-empty compact set~$\mathcal{D} \subseteq \mathbb{R}^6$
such that~$\mathbf{d}(t) \in \mathcal{D}$ for all~$t \geq 0$.
The mapping~$t \mapsto \mathbf{d}(t)$ is differentiable 
and there exists some~$\bar{\mathbf{d}} \geq 0$ such
that~$\big\|\dot{\mathbf{d}}(t)\big\| \leq \bar{\mathbf{d}}$ for all~$t \geq 0$.} 
\end{assumption}

The following results show that solutions
to~$\HFO$ exist over arbitrarily long hybrid time domains. 

\begin{lemma} \label{lem:wellPosed}
    \blue{Consider the system~$\HFO$ in~\eqref{eq:hybridFO} 
    and suppose that Assumption~\ref{as:d} holds. Then~$\HFO$
    is well-posed in the sense that it satisfies Definition \ref{def:hybridcond}.}
\end{lemma}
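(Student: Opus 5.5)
The plan is to check the three items of Definition~\ref{def:hybridcond} in turn, using the explicit data $(C,F,D,G)$ from \eqref{eq:Cdef}, \eqref{eq:fwmap}, \eqref{eq:Ddef}, and \eqref{eq:jpmap}.

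\textbf{Item~\ref{hybridcond_one}.} The set $C$ is a Cartesian product of $\R^{15}$ (for $\mathbf{x},\mathbf{u},\mathbf{y}_s,\mathbf{z}$), the closed intervals $[0,\atimer{,\max}]$ and $[0,\ctimer{,\n{comp}}]$, and $\R$ (for $\dtimer$), so it is closed. The set $D = D_1 \cup D_2$ is a union of two preimages of the closed set $\{0\}$ under the continuous coordinate projections onto $\ctimer{}$ and $\atimer{}$, so it is closed as well.

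\textbf{Item~\ref{hybridcond_two}.} The flow map $F$ is single-valued, so convexity of $F(\zeta)$ is immediate. Outer semicontinuity and local boundedness follow once $F$ is shown to be continuous on $C$. The only nontrivial entry is $\mathbf{d}(\dtimer)$. Assumption~\ref{as:d} gives that $\mathbf{d}$ is differentiable with $\|\dot{\mathbf{d}}\| \le \bar{\mathbf{d}}$ and takes values in the compact set $\mathcal{D}$. Hence $\mathbf{d}$ is Lipschitz continuous and bounded, so $F$ is continuous, and continuous single-valued maps are osc and locally bounded.

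One technical point is that $\mathbf{d}$ is only defined for $\dtimer \ge 0$, whereas $C$ allows any real $\dtimer$. I would handle this by extending $\mathbf{d}$ continuously, setting $\mathbf{d}(\tau) := \mathbf{d}(0)$ for $\tau<0$. Alternatively, I would note that $\dtimer \ge 0$ along every solution from \eqref{eq:initconds}, so one may restrict $C$ and $D$ to $\dtimer \ge 0$ without losing closedness. Either way, $C \subset \dom F$ holds.

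\textbf{Item~\ref{hybridcond_three}, the main obstacle.} Here $G$ is defined piecewise, with case distinctions based on strict inequalities. My plan is to rewrite $G$ in graph form. Define $G_1$ on $D_1$ and $G_2$ on $D_2$ by their formulas, and set $G(\zeta) = G_1(\zeta)$ on $D_1 \setminus D_2$, $G(\zeta) = G_2(\zeta)$ on $D_2 \setminus D_1$, and $G(\zeta) = G_1(\zeta) \cup G_2(\zeta)$ on $D_1 \cap D_2$. Then the graph of $G$ restricted to $D$ equals $\operatorname{gph} G_1|_{D_1} \cup \operatorname{gph} G_2|_{D_2}$.

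Each piece is handled as follows.
\begin{itemize}
\item $G_1$ is continuous on $D_1$, since the projection $\Pi_{\mathcal{U}}$ is nonexpansive and the gradients of the quadratic $\Phi$ in \eqref{eq:quadOBJ} are affine.
\item $G_2$ is a continuous single-valued part (again using continuity of $\mathbf{d}$) times the constant compact interval $[\atimer{,\min},\atimer{,\max}]$.
\end{itemize}
Consequently both graphs are closed, since each $D_i$ is closed, and a finite union of closed graphs is closed. This gives outer semicontinuity of $G$ relative to $D$.

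For local boundedness, the values of $G_1$ and $G_2$ are bounded on bounded neighborhoods: the continuous parts are bounded on compact sets, the projection lands in the compact set $\mathcal{U}$, and the timer interval is compact. This yields local boundedness, and $D \subset \dom G$ holds by construction. Checking carefully that the graph union coincides with the given case definition at the boundary points $\atimer{} = \ctimer{} = 0$ is the step I expect to require the most care.
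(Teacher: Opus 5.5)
Your proposal is correct and follows the paper's route: closedness of $C$ and $D$ by inspection, continuity of the single-valued flow map, and the union-of-closed-graphs argument for $G$ that the paper gets by citing Lemma~\ref{lem:OSLBcond}. It is in fact slightly more careful than the paper, since you use continuity of $\mathbf{d}$ (not mere boundedness) for outer semicontinuity of $F$ and $G_2$, and you handle $\dtimer<0$ explicitly.

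Two minor notes. First, the non-timer block of $C$ is $\R^{18}$, not $\R^{15}$. Second, the points needing care are not $\atimer{}=\ctimer{}=0$, where $G_3=G_1\cup G_2$ by definition; they are points of $D$ with one timer zero and the other negative, which the paper's strict case split misses but your $D_1\setminus D_2$, $D_2\setminus D_1$ formulation already covers.
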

\begin{proof}
    % See Lemma~1 in \cite{chuy2025hybridsystemsmodelfeedback}. 
    See \ref{Ap:wellPosed}
\end{proof}

Lemma \ref{lem:wellPosed} guarantees well-posedness of the system, which is 
used in the forthcoming results. 
%The model~$\HFO$ in~\eqref{eq:hybridFO} differs from the one studied in~\cite{chuy2025hybridsystemsmodelfeedback} because it includes a feedback controller that propagates disturbances into the state equation. 
Next we establish that solutions to~$\HFO$ exist 
\blue{for hybrid times~$(t, j)$ where~$t$ and~$j$ can grow arbitrarily large.}

\begin{proposition}[Completeness of Maximal Solutions]\label{prop:maxsolComplete}
    Consider the hybrid feedback optimization model $\HFO$ from \eqref{eq:hybridFO}
    \blue{and suppose Assumption~\ref{as:d} holds.}
    Then, from every point in $C\cup D$ there exists a nontrivial solution \blue{to $\HFO$, and
    all maximal solutions are both complete and non-Zeno.} 
\end{proposition}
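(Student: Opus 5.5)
The plan is to invoke the basic existence result for hybrid systems, \cite[Proposition 6.10]{Hybridbook}. It applies because $\HFO$ satisfies the hybrid basic conditions by Lemma~\ref{lem:wellPosed}. Under those conditions, a nontrivial solution exists from every $\xi \in C\cup D$ provided the following viability condition holds: for every $\xi \in C\setminus D$ there is a neighborhood $U$ of $\xi$ such that $F(\eta)\cap T_C(\eta)\neq\emptyset$ for all $\eta\in U\cap C$, where $T_C$ denotes the tangent cone. The same proposition also classifies maximal solutions. Each maximal solution $\phi$ is either complete, or blows up in finite time ($\phi$ becomes unbounded on a bounded domain), or leaves $C\cup D$ through a jump, i.e., $G(D)\not\subset C\cup D$. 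I will rule out the last two alternatives.

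For viability, take $\xi\in C\setminus D$. Then $\atimer{}\in(0,\atimer{,\max}]$ and $\ctimer{}\in(0,\ctimer{,\n{comp}}]$. The set $C$ constrains only the two timers, so near $\xi$ it is a product of $\R^{19}$ with a box. Its tangent cone therefore admits any timer velocities that are nonpositive at the upper faces, and both timers have derivative $-1$. The remaining components of $F$ are unconstrained, since $C$ places no restriction on $\mathbf{x}$, $\mathbf{u}$, $\mathbf{y}_s$, $\mathbf{z}$, or $\dtimer$. Hence $F(\eta)\in T_C(\eta)$ for all $\eta$ in a small neighborhood of $\xi$ intersected with $C$.

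To exclude escape by jumps, I check that $G(D)\subset C\cup D$. Under $G_1$, $\ctimer{}$ resets to $\ctimer{,\n{comp}}$ and $\atimer{}$ keeps its value in $[0,\atimer{,\max}]$. Under $G_2$, $\atimer{}$ resets into $[\atimer{,\min},\atimer{,\max}]$ and $\ctimer{}$ is unchanged. The projection keeps $\mathbf{z}\in\mathcal U$, and every other component is unconstrained, so $G_3$ is covered as well. One caveat is that $D$ as written contains points whose timers lie outside the ranges allowed by $C$. I will handle this by restricting attention to $\atimer{}\in[0,\atimer{,\max}]$ and $\ctimer{}\in[0,\ctimer{,\n{comp}}]$, which is where the timers remain once they start in $C\cup D$ under the resets.

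To exclude finite escape time, I use the flow equation for $\mathbf{x}$. It is linear, $\dot{\mathbf{x}}=A_{\n{stab}}\mathbf{x}+B_{\n{stab}}\mathbf{u}-B_{\n{stab}}K\mathbf{d}(\dtimer)$. Here $\mathbf{u}$ is piecewise constant and takes values in $\mathcal U$ after the first case~(ii) jump, and $\mathbf{d}$ is bounded by Assumption~\ref{as:d}. Gr\"onwall's inequality then bounds $\|\mathbf{x}\|$ on any bounded time interval. The timers are bounded, with $\dtimer$ growing linearly. The states $\mathbf{y}_s$ and $\mathbf{z}$ change only at jumps, and they map into bounded sets because $\mathcal U$ and $\mathcal D$ are compact. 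Consequently no solution blows up in finite time, and every maximal solution is complete.

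The step I expect to be the main obstacle is non-Zeno behavior, which I will prove with a dwell-time argument. Every case~(i) jump resets $\ctimer{}$ to $\ctimer{,\n{comp}}$, and every case~(ii) jump resets $\atimer{}$ to at least $\atimer{,\min}$. Since both timers decrease at unit rate, over any interval of ordinary time of length $T$ there are at most $T/\ctimer{,\n{comp}}+1$ jumps of type (i) and at most $T/\atimer{,\min}+1$ of type (ii). Simultaneous resets in case~(iii) contribute at most two consecutive jumps. This gives $j\le c_1 t+c_2$ along any solution for constants $c_1,c_2$. Completeness therefore forces $\sup_t\dom\phi=\infty$, so solutions are not Zeno. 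The delicate point is case~(iii): I need to verify that after $G_1$ then $G_2$, or $G_2$ then $G_1$, both timers are strictly positive. It follows that jumps cannot accumulate at a single instant of ordinary time.
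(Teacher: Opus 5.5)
Your route matches the paper's for existence and completeness. Both arguments apply the same basic existence result, check (VC) on $C\setminus D$ through the tangent cone of the timer box, use $G(D)\subset C\cup D$ to exclude the ``jump out'' alternative, and exclude finite escape because $\mathbf{x}$ obeys a linear ODE with constant input and bounded disturbance on each flow interval.

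You differ on non-Zeno, and your version is the sounder one. The paper asserts $G(D)\cap D=\emptyset$, but this fails at $\tau_c=\tau_g=0$: the $G_1$ branch of $G_3$ leaves $\tau_c=0$, so its output lies in $D$. That is exactly the ``$G_1$ then $G_2$'' pair the paper itself describes. Your dwell-time count handles this correctly. Consecutive $G_1$ jumps are separated by flow time $\tau_{g,\textnormal{comp}}$, and consecutive $G_2$ jumps by at least $\tau_{c,\min}$, because each map leaves the other timer untouched. This gives $j\le t\,(1/\tau_{g,\textnormal{comp}}+1/\tau_{c,\min})+2$. State explicitly that it uses $\tau_{g,\textnormal{comp}}>0$.

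The one real problem is how you resolve your (correct) caveat about $D$. The justification ``which is where the timers remain once they start in $C\cup D$'' is false. Take $\zeta\in D$ with $\tau_c=0$ and $\tau_g=2\tau_{g,\textnormal{comp}}$:
\begin{itemize}
\item $\zeta\notin C$, so no flow is possible from $\zeta$.
\item The only available jump is $G_2$, which gives $\tau_c\in[\tau_{c,\min},\tau_{c,\max}]$ and $\tau_g=2\tau_{g,\textnormal{comp}}$.
\item That point lies in neither $C$ nor $D$, so the maximal solution from $\zeta$ stops after one jump and is not complete.
\end{itemize}
So the proposition is false as literally stated on $D\setminus C$, and $G(D)\subset C\cup D$ fails. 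The paper's proof asserts that inclusion without noticing this.

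What your argument, and the paper's, actually establishes is the statement with $D$ replaced by $D\cap C$, equivalently with initial points taken in $C$. That is also all the later convergence results use. For that version, your case check shows $G(D\cap C)\subset C$. State the restriction as a correction to the hypothesis, not as a consequence of the dynamics.

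A minor point in the same spirit: $\mathbf{u}$ need not lie in $\mathcal U$ after the first case~(ii) jump. The initial value $\mathbf{z}(0,0)$ is unconstrained, and $G_2$ can copy it into $\mathbf{u}$ before any projection has occurred. Nothing breaks, because your no-escape argument only needs $\mathbf{u}$ to be constant on each flow interval.
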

\begin{proof}
    % See Proposition~2 in \cite{chuy2025hybridsystemsmodelfeedback}. 
    See \ref{Ap:maxsolComplete}
\end{proof}

The non-Zeno property guarantees that the system will keep flowing and 
that the computation and application of inputs will continue indefinitely. 
%This additionally tells us that since $\HFO$ attains the same properties as $\HFO$ \cite[Lemma and Proposition 1]{chuy2025hybridsystemsmodelfeedback} then the conclusions and theorems seen later can follow a similar structure.

\section{Convergence Analysis} \label{sec:convAnalysis}
This section analyzes how the chaser approaches the target when using feedback optimization. 
For simplicity of notation we define~$\xtime{p,k} := \bar{\alpha}(p) + p + k$. Then 
\begin{equation} \label{eq:xtime_def}
\Htime{\xtime{p,k}}{\xtime{p,k}} = \Htime{\alphatime{p} + p + k}{\alphatime{p} + p + k}, 
\end{equation} 
i.e., $\Htime{\xtime{p,k}}{\xtime{p,k}}$ denotes the hybrid time
at which the optimization algorithm has completed
the~$k^{\n{th}}$ iteration when computing the~$(p+1)^{\n{th}}$ input
to the system. 
Then we may write~\eqref{eq:bigiteration} as 
\begin{multline}
    \mathbf{z}_{k+1}\Htime{\xtime{p,k+1}}{\xtime{p,k+1}} =
    \Pi_{\mathcal{U}}\big[\mathbf{z}_{k}\Htime{\xtime{p,k}}{\xtime{p,k}} \\
    - \gamma\big(Q_{\mathbf{u}}\mathbf{z}_{k}\Htime{\xtime{p,k}}{\xtime{p,k}}
    + H^\top Q_{\mathbf{y}}\big(\mathbf{y}_s\Htime{\xtime{p,k}}{\xtime{p,k}} - \hat{\mathbf{y}} \big)\big)\big]. 
\end{multline}
%where 
%%\begin{equation}
%    $\nabla_{\mathbf{u}}\Phi\big(\mathbf{z}_{k}\Htime{\xtime{p,k}}{\xtime{p,k}}\big) = 
%    Q_{\mathbf{u}}\mathbf{z}_{k}\Htime{\xtime{p,k}}{\xtime{p,k}}
%    \\+ H^\top Q_{\mathbf{y}}\big(\mathbf{y}_s\Htime{\xtime{p,k}}{\xtime{p,k}} - \hat{\mathbf{y}} \big)$. 
%%\end{equation}

The following lemma relates successive iterates that are used to compute the inputs to the 
chaser satellite. 
In it, we use the constants
\begin{equation} \label{eq:LqDef}
L := \lambda_{\max}(Q_{\mathbf{u}} + H_{\n{stab}}^{\top}Q_{\mathbf{y}}H_{\n{stab}})
\quad \textnormal{and} \quad q:= 1-2\gamma\lambda_{\min}(Q_u) + \gamma^2L^2 \in (0,1),
\end{equation}
where~$Q_{\mathbf{u}}$ and~$Q_{\mathbf{y}}$
are from~\eqref{eq:quadOBJ}
and~$H_{\n{stab}}$ is from~\eqref{eq:ddef_u}. 

\begin{lemma}[Input Convergence Rate] \label{lem:inptconvrate}
    Consider the hybrid system~$\mathcal{H}_{FO}$ in~\eqref{eq:hybridFO} and the objective in~\eqref{eq:quadOBJ}. 
    Suppose that the gradient descent algorithm uses a stepsize~$\gamma \in \paren{0,\frac{2}{\lambda_{\min}(Q_u) + L}}$, where~$L$ is from~\eqref{eq:LqDef}.  
    Let~$\phi$ denote a maximal solution to~$\mathcal{H}_{FO}$ with initial condition~$\phi(0, 0) = \nu$ that satisfies~\eqref{eq:initconds}.
    For any~$(t, j) \in \textnormal{dom } \phi$, set~$P = \max\{p \in \mathbb{N} : \bar{\alpha}(p) + p \leq j\}$.     
    Then, for any~$p \in \{0, \ldots, P\}$ the state $z$ obeys
    \begin{multline} \label{eq:inputbound_norm}
        \big\|\mathbf{z}_{\alpha({p})}\Htime{\xtime{p,\alpha(p)}}{\xtime{p,\alpha(p)}} - \mathbf{z}^*\Htime{\xtime{p,0}}{\xtime{p,0}}\big\| \\
        \leq q^{\frac{\alpha({p})-1}{2}}\big\|\mathbf{z}_{1}\Htime{\xtime{p,1}}{\xtime{p,1}} - \mathbf{z}^*\Htime{\xtime{p,0}}{\xtime{p,0}}\big\|,
    \end{multline}
    \green{where $q$ is from~\eqref{eq:LqDef} and
    $\mathbf{z}^*\Htime{\xtime{p,0}}{\xtime{p,0}} = \argmin\limits_{\mathbf{u} \in \mathcal{U}} \Phi\big(\mathbf{u}, \mathbf{y}_{s}\Htime{\xtime{p,0}}{\xtime{p,0}}$
    is a function of the hybrid time time~$\Htime{\xtime{p,0}}{\xtime{p,0}}$ because it depends on
    the sampled output at the same time, namely~$\mathbf{y}_{s}\Htime{\xtime{p,0}}{\xtime{p,0}}$. 
    }
\end{lemma}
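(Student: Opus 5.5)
The plan is to show that, between consecutive input changes, the iterates are projected gradient descent on a fixed strongly convex quadratic. Each step is then a contraction toward $\mathbf{z}^*\Htime{\xtime{p,0}}{\xtime{p,0}}$ with factor $\sqrt{q}$, and we iterate this bound $\alpha(p)-1$ times.

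\textbf{Freezing the data over block $p$.} Fix $p\in\{0,\ldots,P\}$. The jump map $G_1$ in~\eqref{eq:g1def} leaves $\mathbf{y}_s$ unchanged, and the flow map~\eqref{eq:fwmap} has zero derivative in $\mathbf{y}_s$. Hence $\mathbf{y}_s$ only changes at case (ii) jumps. On the hybrid times $\Htime{\xtime{p,k}}{\xtime{p,k}}$, $k=0,\ldots,\alpha(p)$, it therefore equals $\mathbf{y}_s\Htime{\xtime{p,0}}{\xtime{p,0}}=:\mathbf{y}_s^p$. If a case (iii) occurrence orders $G_1$ before $G_2$, that iterate is already accounted for in $\alpha(p)$ by the indexing convention, so $\mathbf{y}_s$ is still constant along the block. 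The recursion~\eqref{eq:bigiteration} is then $\mathbf{z}_{k+1}=T_p(\mathbf{z}_k)$ with
\[
T_p(\mathbf{z}):=\Pi_{\mathcal{U}}\big[\mathbf{z}-\gamma\nabla f_p(\mathbf{z})\big],
\]
where $f_p(\mathbf{u}):=\Phi(\mathbf{u},H_{\n{stab}}\mathbf{u}+\mathbf{d}_p)$. I would read $H$ as $H_{\n{stab}}$ and use the steady-state approximation~\eqref{eq:hstab_approx}, so that the gradient in the iteration is exactly $\nabla f_p$ evaluated through the sampled output. Then $\nabla^2 f_p=Q_{\mathbf{u}}+H_{\n{stab}}^\top Q_{\mathbf{y}}H_{\n{stab}}$, so $f_p$ is $\lambda_{\min}(Q_{\mathbf{u}})$-strongly convex (since $H_{\n{stab}}^\top Q_{\mathbf{y}}H_{\n{stab}}\succeq 0$) and $L$-smooth with $L$ from~\eqref{eq:LqDef}. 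Its unique minimizer over the compact convex $\mathcal{U}$ is $\mathbf{z}^*\Htime{\xtime{p,0}}{\xtime{p,0}}$, which is a fixed point of $T_p$ by the projection optimality condition.

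\textbf{One-step contraction.} Write $m=\lambda_{\min}(Q_{\mathbf{u}})$ and $\mathbf{z}^*=\mathbf{z}^*\Htime{\xtime{p,0}}{\xtime{p,0}}$. By nonexpansiveness of $\Pi_{\mathcal{U}}$,
\[
\|T_p(\mathbf{z})-\mathbf{z}^*\|^2\le\|\mathbf{z}-\mathbf{z}^*\|^2-2\gamma\langle\nabla f_p(\mathbf{z})-\nabla f_p(\mathbf{z}^*),\mathbf{z}-\mathbf{z}^*\rangle+\gamma^2\|\nabla f_p(\mathbf{z})-\nabla f_p(\mathbf{z}^*)\|^2.
\]
Strong convexity bounds the inner product below by $m\|\mathbf{z}-\mathbf{z}^*\|^2$, and $L$-Lipschitz gradients bound the last term above by $L^2\|\mathbf{z}-\mathbf{z}^*\|^2$. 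Together these give $\|\mathbf{z}_{k+1}-\mathbf{z}^*\|\le\sqrt{q}\,\|\mathbf{z}_k-\mathbf{z}^*\|$. The stepsize restriction $\gamma<2/(m+L)$ together with $m\le L$ ensures $q<1$. Positivity of $q$ follows from $q\ge 1-2\gamma m+\gamma^2m^2=(1-\gamma m)^2$, with strict positivity as asserted in~\eqref{eq:LqDef}. For the record, the standard route through co-coercivity would give the sharper factor $1-\tfrac{2\gamma mL}{m+L}$ under this stepsize, but the cruder bound above already matches the stated $q$.

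\textbf{Iterating.} Apply the contraction for $k=1,\ldots,\alpha(p)-1$ and chain the inequalities to obtain~\eqref{eq:inputbound_norm}. The case $\alpha(p)\ge 1$ is guaranteed by Assumption~\ref{as:timescale}, and $\alpha(p)=1$ is trivial.

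\textbf{Main obstacle.} I expect the main difficulty to be the bookkeeping rather than the contraction estimate. One must verify that $\mathbf{y}_s$ and the identity of $\mathbf{z}^*$ stay fixed across block $p$, including the ordering ambiguity in case (iii). One must also reconcile the paper's gradient $H^\top Q_{\mathbf{y}}(\mathbf{y}_s-\hat{\mathbf{y}})$ with $\nabla f_p$ through the approximation $\mathbf{y}_s=H_{\n{stab}}\mathbf{u}+\mathbf{d}$. That reconciliation is what makes $\mathbf{z}^*$ a genuine fixed point of $T_p$.
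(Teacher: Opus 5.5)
\paragraph{Gap: the recursion is not $T_p$.} Your one-step contraction is the standard argument the paper cites (it only points to Polyak). The step that feeds it fails, however: the recursion~\eqref{eq:bigiteration} is not $\mathbf{z}_{k+1}=T_p(\mathbf{z}_k)$.

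You yourself show that $\mathbf{y}_s$ is frozen at $\mathbf{y}_s^p$ on block $p$. So the direction actually used at iterate $k$ is $Q_{\mathbf{u}}\mathbf{z}_k+H_{\n{stab}}^\top Q_{\mathbf{y}}(\mathbf{y}_s^p-\hat{\mathbf{y}})$, whereas $\nabla f_p(\mathbf{z}_k)=Q_{\mathbf{u}}\mathbf{z}_k+H_{\n{stab}}^\top Q_{\mathbf{y}}(H_{\n{stab}}\mathbf{z}_k+\mathbf{d}_p-\hat{\mathbf{y}})$. They differ by $H_{\n{stab}}^\top Q_{\mathbf{y}}(H_{\n{stab}}\mathbf{z}_k+\mathbf{d}_p-\mathbf{y}_s^p)$, which varies with $k$ because $H_{\n{stab}}^\top Q_{\mathbf{y}}H_{\n{stab}}\succ 0$.

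The approximation~\eqref{eq:hstab_approx} is applied only inside $G_2$, once per block and with the applied input. $G_1$ never re-evaluates the output at the current iterate, so that approximation cannot turn the frozen term into $\nabla f_p(\mathbf{z}_k)$. Consequently $\argmin_{\mathcal{U}} f_p$ is in general not a fixed point of the map the system applies, and with that choice of $\mathbf{z}^*$ the inequality fails in general. The actual iterates converge to the fixed point $\mathbf{z}^\dagger$ of the frozen map, because $\gamma<2/L\le 2/\lambda_{\max}(Q_{\mathbf{u}})$. Hence, when $q<1$, the left side tends to $\|\mathbf{z}^\dagger-\argmin_{\mathcal{U}} f_p\|>0$ while the right side tends to zero as $\alpha(p)$ grows.

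\paragraph{Repair.} Take $\mathbf{z}^*:=\mathbf{z}^\dagger$, the minimizer over $\mathcal{U}$ of $\tfrac12\mathbf{u}^\top Q_{\mathbf{u}}\mathbf{u}+c_p^\top\mathbf{u}$ with $c_p:=H_{\n{stab}}^\top Q_{\mathbf{y}}(\mathbf{y}_s^p-\hat{\mathbf{y}})$. This is how the paper itself treats $\mathbf{z}^*$ in~\eqref{eq:thm1_3}. Nonexpansiveness of $\Pi_{\mathcal{U}}$ then gives $\|\mathbf{z}_{k+1}-\mathbf{z}^\dagger\|\le\|I_3-\gamma Q_{\mathbf{u}}\|\,\|\mathbf{z}_k-\mathbf{z}^\dagger\|$. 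Every eigenvalue $\mu$ of $Q_{\mathbf{u}}$ lies in $[\lambda_{\min}(Q_{\mathbf{u}}),L]$, since $H_{\n{stab}}^\top Q_{\mathbf{y}}H_{\n{stab}}\succeq 0$. Therefore $(1-\gamma\mu)^2\le 1-2\gamma\lambda_{\min}(Q_{\mathbf{u}})+\gamma^2L^2=q$, and your chaining goes through unchanged.

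\paragraph{Second error: $q<1$.} Your assertion that $\gamma<2/(\lambda_{\min}(Q_{\mathbf{u}})+L)$ together with $\lambda_{\min}(Q_{\mathbf{u}})\le L$ ensures $q<1$ is false. Write $m=\lambda_{\min}(Q_{\mathbf{u}})$. Then $q<1$ holds iff $\gamma<2m/L^2$, and $2/(m+L)\le 2m/L^2$ only when $L\le\tfrac{1+\sqrt{5}}{2}m$. For instance, $m=1$, $L=10$, $\gamma=0.1$ is an admissible stepsize, yet it gives $q=1.8$.

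The displayed inequality never uses $q<1$, so drop that claim rather than try to derive it. The property $q\in(0,1)$ recorded in~\eqref{eq:LqDef} is an additional requirement on $\gamma$, not a consequence of the stated stepsize range.
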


\begin{proof}
    The steps of the proof follow those of a standard proof in the convex optimization literature for the minimization of a strongly convex function using gradient descent, e.g., \cite[Ch. 1, Thm. 2]{polyak84}.     
\end{proof}

\subsection{Complete Hybrid Convergence}
At time~$t$, let~$\big(\tilde{\mathbf{u}}(t), \tilde{\mathbf{y}}(t)\big)$
denote the solution to~\eqref{eq:fo_genform_u}.
\green{Setting~$\dot{\mathbf{x}} \equiv 0$ in the CW equations and
setting~$\mathbf{u}$ and~$\mathbf{y}$ equal to~$\tilde{\mathbf{u}}(t)$
and~$\tilde{\mathbf{y}}(t)$, respectively, we obtain
%The satisfaction of that input-output relationship happens
%when~$\dot{\mathbf{x}} = 0$. 
%In the CW equations, we set~$\dot{\mathbf{x}} = 0$, $\mathbf{u} = \tilde{\mathbf{u}}(t)$,
%and~$\mathbf{y} = \tilde{\mathbf{y}}(t)$ to find the optimal state value
the state
\begin{equation} \label{eq:xtildedef}
\tilde{\mathbf{x}}(t) = -A_{\n{stab}}^{-1}B_{\n{stab}}\tilde{\mathbf{u}}(t)
+ A_{\n{stab}}^{-1}B_{\n{stab}}K\mathbf{d}(t),
\end{equation}
which simultaneously gives
the CW equations 
the input~$\tilde{\mathbf{u}}(t)$, the output~$\tilde{\mathbf{y}}(t)$,
and the input-output relation~$\tilde{\mathbf{y}}(t) = H_{\n{stab}}\tilde{\mathbf{u}}(t) + \mathbf{d}(t)$
as required by~\eqref{eq:fo_genform_u}. 
}
% is the optimal value of the 
% chaser 
% state~$\mathbf{x}$ at flow time~$t$
% because the condition~$\mathbf{x}(t, j) = \tilde{\mathbf{x}}(t)$ 
% would imply that the CW equations simultaneously
% have input~$\tilde{\mathbf{u}}(t)$, output~$\tilde{\mathbf{y}}(t)$,
% and input-output relation~$\tilde{\mathbf{y}}(t) = H_{\n{stab}}\tilde{\mathbf{u}}(t) + \mathbf{d}(t)$
% as required by~\eqref{eq:fo_genform_u}. 
The value of~$\tilde{\mathbf{x}}(t)$ in general differs from that of
the nominal rendezvous point~$\hat{y}$ 
precisely because~$\tilde{\mathbf{x}}(t)$ accounts for costs on the input
in~$\Phi$ and the disturbance~$\mathbf{d}$. 
The value of~$\tilde{\mathbf{x}}(t)$
can be interpreted as the chaser's \green{chosen} rendezvous point at time~$t$.
This section 
% solves Problem~\ref{prob:stateConverge} by bounding
bounds the distance between $\mathbf{x}(t, j)$ and $\tilde{\mathbf{x}}(t)$
as a function of~$t$. 
% We define $\mathcal{Y} = \{\mathbf{y} \in \mathbb{R}^6 : \mathbf{y} = H_{\n{stab}}\mathbf{u} + \mathbf{d}, \mathbf{u} \in \mathcal{U}, \mathbf{d} \in \mathcal{D}\}$.
% The set~$\mathcal{U}$ is compact by definition 
% and under Assumption~\ref{as:d} the set~$\mathcal{D}$
% is compact. Then, under Assumption~\ref{as:d} the set~$\mathcal{Y}$ is compact  
% as well.

For a solution~$\phi$ to~$\mathcal{H}_{FO}$, 
at each hybrid time~$(t, j) \in \dom \phi$
we can bound the distance 
between~$\phi$ and the set 
\begin{equation} \label{eq:closedsetA}
\mathcal{A}(t) := \big\{\tilde{\mathbf{x}}(t)\big\} \times \mathcal{U} \times \mathbb{R}^6 \times \mathcal{U} \times [0, \atimer{,\max}] \times [0, \ctimer{,\n{comp}}] \times \R,
\end{equation}
where 
% %\begin{equation} \label{eq:r}
%     $r = \mu_{\max}(\Lambda_{\n{des}})\frac{|\lambda_6|}{m_{c}|\lambda_1|^2}\big(d_{\mathcal{U}} + d_{\mathcal{U}}q^{\frac{\ell}{2}} + \big\|A_{\n{stab}}^{-1}\big\|\big\|K\big\|\bar{\mathbf{d}}\big)$,
% %\end{equation}
% $\bar{\mathbf{d}}$ is from Assumption~\ref{as:d}, 
% and
$\tilde{\mathbf{x}}(t)$ is from~\eqref{eq:xtildedef}.
By definition,~$\|\phi(t, j)\|_{\mathcal{A}(t)} = \|\mathbf{x}(t, j) - \tilde{\mathbf{x}}(t)\|$. 
We will bound~$(t, j) \mapsto \|\phi(t, j)\|_{\mathcal{A}(t)}$ for each solution to~$\HFO$, which will characterize the error between
the chaser's position~$\mathbf{x}(t,j)$ and
the \green{chosen} rendezvous point~$\tilde{\mathbf{x}}(t)$, while accounting for the full dynamics of~$\HFO$. 

\begin{proposition}[Convergence of $\HFO$] \label{prop:completeHOCW} 
    Consider the hybrid system $\HFO$ from \eqref{eq:hybridFO}, 
    %where based on~$\Lambda_{\n{des}}$ we implement \eqref{eq:Kdef},
    with the objective function from \eqref{eq:quadOBJ} and stepsize~$\gamma \in \paren{0,\frac{2}{\lambda_{\min}(Q_u) + L}}$,
    where~$Q_{\mathbf{u}}$ is from~\eqref{eq:quadOBJ} and~$L$ is from~\eqref{eq:LqDef}. 
    Suppose that Assumptions \ref{as:UnqeEig}, \ref{as:d},
    and \ref{as:timescale} hold. 
    %With the objective function from \eqref{eq:quadOBJ} and stepsize~$\gamma \in \paren{0,\frac{2}{\lambda_{\min}(Q_u) + L}}$,~$Q_{\mathbf{u}}$ and~$Q_{\mathbf{y}}$ is from~\eqref{eq:quadOBJ} and~$L := \lambda_{\max}(Q_{\mathbf{u}} + H^{\top}Q_{\mathbf{y}}H)$. 
    For each maximal solution~$\phi$ to~$\mathcal{H}_{FO}$ with initial condition~$\phi(0, 0) = \nu$ that satisfies~\eqref{eq:initconds}, for each~$(t, j) \in \textnormal{dom } \phi$, we have
    % \begin{multline}\label{eq:tmCompleteBound}
    %     \|\phi(t, j)\|_{\mathcal{A}(t)}
    %     \leq \mu_{\max}(\Lambda_{\n{des}})\frac{|\lambda_6|}{|\lambda_1|}\exp\big({-\big|\lambda_{1}\big|t}\big) \big\|\phi\Hotime{0}{0}\big\|_{\mathcal{A}(t)}\\
    %     + \bigg[\mu_{\max}(\Lambda_{\n{des}})\frac{|\lambda_6|}{|\lambda_1|} - 1\bigg] \frac{\mu_{\max}(\Lambda_{\n{des}})\frac{|\lambda_6|}{|\lambda_1|}d_{\mathcal{U}}}{m_{c}\big|\lambda_{1}\big|}\Big(1+q^{\frac{\ell}{2}}\Big)\exp\big({-\big|\lambda_{1}\big|}t\big)\\
    %     + \bigg[\mu_{\max}(\Lambda_{\n{des}})\frac{|\lambda_6|}{|\lambda_1|}\big(1 + t\big) - 1\bigg] \frac{\mu_{\max}(\Lambda_{\n{des}})\frac{|\lambda_6|}{|\lambda_1|}\big\|A_{\n{stab}}^{-1}\big\|\big\|K\big\|\bar{\mathbf{d}}}{m_{c}\big|\lambda_{1}\big|}\exp\big({-\big|\lambda_{1}\big|}t\big), 
    % \end{multline}
    \begin{multline}\label{eq:tmCompleteBound}
        \|\phi(t, j)\|_{\mathcal{A}(t)}
        \leq \mu_{\max}(\Lambda_{\n{des}})\frac{|\lambda_6|}{|\lambda_1|}\exp\big({-\big|\lambda_{1}\big|}t\big) \big\|\phi\Hotime{0}{0}\big\|_{\mathcal{A}(t)}\\
        + \frac{\mu_{\max}(\Lambda_{\n{des}})|\lambda_6|d_{\mathcal{U}}}{m_{c}\big|\lambda_{1}\big|^2}\big[2 - \exp(-|\lambda_1|\atimer{,\max}) -\exp\big({-\big|\lambda_{1}\big|}t\big)\big] \\
        + \frac{\mu_{\max}(\Lambda_{\n{des}})|\lambda_6| q^{\frac{\ell}{2}} d_{\mathcal{U}}}{m_{c}\big|\lambda_{1}\big|^2}\Big[1 - \exp\big({\big|\lambda_{1}\big|}\atimer{,\min}\big)\exp\big({-\big|\lambda_{1}\big|}t\big)\Big] \\
        + \frac{\mu_{\max}(\Lambda_{\n{des}}) |\lambda_6|}{m_{c}|\lambda_1|^2}\big\|A_{\n{stab}}^{-1}\big\|\big\|K\big\|\bar{\mathbf{d}} \Big[1 + \big(\mu_{\max}(\Lambda_{\n{des}})|\lambda_6|t - 1\big)\exp\big({-\big|\lambda_{1}\big|}t\big)\Big],
    \end{multline}
    where 
    $\bar{\mathbf{d}}$ is from Assumption~\ref{as:d}, 
    $\ell \geq 1$ is from Assumption~\ref{as:timescale}, 
    $\mathcal{A}(t)$ is from~\eqref{eq:closedsetA}, 
    and
    $q \in \paren{0,1}$ is from~\eqref{eq:LqDef}.   
    In particular, each such solution satisfies 
    \begin{multline} 
        \limsup_{\substack{(t, j) \in \dom \phi \\ t+j\rightarrow\infty}}\twonorm{\phi\Hotime{t}{j}}_{\mathcal{A}(t)} \leq \\ \mu_{\max}(\Lambda_{\n{des}})\frac{|\lambda_6|}{m_{c}|\lambda_1|^2}  
        \Big(2d_{\mathcal{U}} -d_{\mathcal{U}}\exp({-\big|\lambda_{1}\big|}\atimer{,\max}) 
        + d_{\mathcal{U}}q^{\frac{\ell}{2}} + \big\|A_{\n{stab}}^{-1}\big\|\big\|K\big\|\bar{\mathbf{d}}\Big). 
    \end{multline}
\end{proposition}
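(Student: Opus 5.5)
We would work with the error $\mathbf{e}(t,j) := \mathbf{x}(t,j) - \tilde{\mathbf{x}}(t)$, since $\|\phi(t,j)\|_{\mathcal{A}(t)} = \|\mathbf{e}(t,j)\|$: the $\mathbf{u}$ and $\mathbf{z}$ components lie in $\mathcal{U}$ because every jump of $\mathbf{z}$ is a projection onto $\mathcal{U}$ and $\mathbf{u}$ is only ever reset to $\mathbf{z}$. The other components lie in the remaining factors of $\mathcal{A}(t)$ by the definitions of $C$ and $D$. By~\eqref{eq:xtildedef} we have $A_{\n{stab}}\tilde{\mathbf{x}}(t) + B_{\n{stab}}\tilde{\mathbf{u}}(t) - B_{\n{stab}}K\mathbf{d}(t) = 0$. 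Hence, during flows,
\begin{equation}
\dot{\mathbf{e}} = A_{\n{stab}}\mathbf{e} + B_{\n{stab}}\big(\mathbf{u} - \tilde{\mathbf{u}}(t)\big) - \dot{\tilde{\mathbf{x}}}(t),
\end{equation}
and $\mathbf{e}$ does not change at jumps because $\mathbf{x}$ is unchanged by $G$. Proposition~\ref{prop:maxsolComplete} makes the solution complete and non-Zeno, so we can integrate over hybrid time. Piecing the flow intervals together gives a variation-of-constants formula
\begin{equation}
\mathbf{e}(t,j) = e^{A_{\n{stab}}t}\mathbf{e}(0,0) + \int_0^t e^{A_{\n{stab}}(t-s)}\big[B_{\n{stab}}(\mathbf{u}(s) - \tilde{\mathbf{u}}(s)) - \dot{\tilde{\mathbf{x}}}(s)\big]\,ds,
\end{equation}
where $\mathbf{u}(s)$ is piecewise constant and equals the input applied on the $p$th inter-jump interval.

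The first key step is a bound on $\|e^{A_{\n{stab}}t}\|$. With $K$ from Theorem~\ref{tm:stabGains}, $A_{\n{stab}}$ decouples, up to a permutation, into three $2\times 2$ companion blocks with real negative eigenvalues. We would bound each block's exponential through its Jordan form, or through the explicit solution $c_1e^{\lambda_a t}+c_2e^{\lambda_b t}$ (respectively $(c_1+c_2t)e^{\lambda t}$ in the repeated case). This should give $\|e^{A_{\n{stab}}t}\| \le \mu_{\max}(\Lambda_{\n{des}})\frac{|\lambda_6|}{|\lambda_1|}e^{-|\lambda_1|t}$ when eigenvalues are distinct. In the repeated case we expect an extra factor like $(1+|\lambda_6|t)$, and this is the origin of the $t$ term in the last line of~\eqref{eq:tmCompleteBound}. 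This gives the first term. We also use $\|B_{\n{stab}}\| = 1/m_c$.

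Next we bound the forcing terms. From~\eqref{eq:xtildedef} we get $\|\dot{\tilde{\mathbf{x}}}\| \le \|A_{\n{stab}}^{-1}\|\|K\|\bar{\mathbf{d}}/m_c$, plus any contribution from $\dot{\tilde{\mathbf{u}}}$. We would control $\tilde{\mathbf{u}}$ through $\mathcal{U}$, and if necessary absorb that contribution into the $d_{\mathcal{U}}$ terms. For $\mathbf{u}-\tilde{\mathbf{u}}$ we split the integral over the input intervals $[t_{\xtime{p,0}}, t_{\xtime{p+1,0}}]$. On the initial interval, before any optimized input is applied, we only use $\|\mathbf{u}-\tilde{\mathbf{u}}\|\le d_{\mathcal{U}}$. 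Integrating $e^{-|\lambda_1|(t-s)}$ over a window of length at most $\atimer{,\max}$, and over the remainder, gives the $d_{\mathcal{U}}[2-e^{-|\lambda_1|\atimer{,\max}}-e^{-|\lambda_1|t}]/|\lambda_1|^2$ term. On later intervals the applied input is $\mathbf{z}_{\alpha(p)}$. Lemma~\ref{lem:inptconvrate} together with $\alpha(p)\ge \ell$ (Assumption~\ref{as:timescale}) gives $\|\mathbf{u}-\mathbf{z}^*\| \le q^{\ell/2}d_{\mathcal{U}}$. This produces the $q^{\ell/2}$ term, integrated from $\atimer{,\min}$ to $t$, which explains the factor $e^{|\lambda_1|\atimer{,\min}}$.

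The main obstacle is linking $\mathbf{z}^*$, which is computed from the sampled $\mathbf{y}_s = H_{\n{stab}}\mathbf{u} + \mathbf{d}$, to $\tilde{\mathbf{u}}(t)$. We would argue that both are projected minimizers of the same strongly convex objective and differ only through the disturbance value. That leaves a mismatch bounded by $d_{\mathcal{U}}$, which we would fold into the constant in front of $d_{\mathcal{U}}$. We would also need to handle carefully the case-(iii) simultaneous jumps, which shift the counts $\alpha(p)$ by at most one. Finally, summing the geometric and exponential integrals gives~\eqref{eq:tmCompleteBound}. Letting $t\to\infty$, which is legitimate because solutions are non-Zeno so $t+j\to\infty$ forces $t\to\infty$, kills every $e^{-|\lambda_1|t}$ and $te^{-|\lambda_1|t}$ term and leaves the stated $\limsup$.
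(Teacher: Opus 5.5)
Your input-error bookkeeping matches the paper: a crude $d_{\mathcal{U}}$ on the first input interval, $q^{\ell/2}d_{\mathcal{U}}$ afterwards via Lemma~\ref{lem:inptconvrate} and Assumption~\ref{as:timescale}, and a crude $d_{\mathcal{U}}$ for the mismatch with $\tilde{\mathbf{u}}$. The argument breaks at the time-varying comparison point. Write $A,B$ for $A_{\textnormal{stab}},B_{\textnormal{stab}}$. Your error dynamics put $-\dot{\tilde{\mathbf{x}}}=A^{-1}B\dot{\tilde{\mathbf{u}}}-A^{-1}BK\dot{\mathbf{d}}$ into the convolution. However, $\tilde{\mathbf{u}}(t)$ is a constrained minimizer depending on $\mathbf{d}(t)$, so it is in general only Lipschitz in $t$. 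Moreover, $\|\dot{\tilde{\mathbf{u}}}\|$ is of order $\bar{\mathbf{d}}$ times the sensitivity of that argmin, which is unrelated to $d_{\mathcal{U}}$ and appears nowhere in \eqref{eq:tmCompleteBound}. So ``absorbing it into the $d_{\mathcal{U}}$ terms'' is not an available step. Your formula also produces $e^{At}(\mathbf{x}(0,0)-\tilde{\mathbf{x}}(0))$, i.e.\ $\|\phi(0,0)\|_{\mathcal{A}(0)}$, whereas the statement has $\|\phi(0,0)\|_{\mathcal{A}(t)}=\|\mathbf{x}(0,0)-\tilde{\mathbf{x}}(t)\|$, and you never reconcile the two.

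The missing idea is to freeze the target at the evaluation time, as the paper does:
\begin{itemize}
\item integrate the $\mathbf{x}$-dynamics themselves and subtract $\tilde{\mathbf{x}}(t)$;
\item use $\int_0^t e^{A(t-\tau)}d\tau=(e^{At}-I)A^{-1}$ to write $A^{-1}B\tilde{\mathbf{u}}(t)$ as $e^{At}A^{-1}B\tilde{\mathbf{u}}(t)$ minus a convolution against the constant input $\tilde{\mathbf{u}}(t)$;
\item integrate the disturbance convolution by parts so only $\dot{\mathbf{d}}$ appears;
\item write $e^{At}A^{-1}BK(\mathbf{d}(t)-\mathbf{d}(0))=\int_0^t e^{At}A^{-1}BK\dot{\mathbf{d}}(\tau)\,d\tau$.
\end{itemize}
This gives the exact identity
\begin{align*}
\mathbf{x}(t,j)-\tilde{\mathbf{x}}(t) &= e^{At}\big(\mathbf{x}(0,0)-\tilde{\mathbf{x}}(t)\big)+\int_0^t e^{A(t-\tau)}B\big(\mathbf{u}(\tau)-\mathbf{u}^*(\tau)\big)\,d\tau\\
&\quad+\int_0^t e^{A(t-\tau)}B\big(\mathbf{u}^*(\tau)-\tilde{\mathbf{u}}(t)\big)\,d\tau+\int_0^t\big(e^{At}-e^{A(t-\tau)}\big)A^{-1}BK\dot{\mathbf{d}}(\tau)\,d\tau,
\end{align*}
with $\mathbf{u}(\tau)$ the piecewise-constant applied input and $\mathbf{u}^*(\tau)$ its per-interval optimum. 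No derivative of $\tilde{\mathbf{u}}$ appears, and the first term is already in terms of $\mathcal{A}(t)$. The crude bound $\|\mathbf{u}^*-\tilde{\mathbf{u}}(t)\|\le d_{\mathcal{U}}$ on all of $[0,t]$ supplies the second ``1'' in $[2-e^{-|\lambda_1|\tau_{c,\max}}-e^{-|\lambda_1|t}]$.

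This also shows your account of the $t$ term is wrong, and the error matters. If $\|e^{At}\|$ carried a factor like $(1+|\lambda_6|t)$, it would multiply the initial-condition term and every $d_{\mathcal{U}}$ term, and \eqref{eq:tmCompleteBound} would not follow. The paper uses the single bound $\|e^{At}\|\le\mu_{\max}(\Lambda_{\textnormal{des}})\frac{|\lambda_6|}{|\lambda_1|}e^{-|\lambda_1|t}$ throughout. The term $te^{-|\lambda_1|t}$ comes only from the last integral above, via $\|e^{At}-e^{A(t-\tau)}\|\le\|e^{A(t-\tau)}\|(\|e^{A\tau}\|+1)$, whose product term integrates to $\mu_{\max}(\Lambda_{\textnormal{des}})^2(|\lambda_6|/|\lambda_1|)^2\,te^{-|\lambda_1|t}$. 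The $e^{At}$ in that kernel is what converts $\mathbf{d}(0)$ into $\mathbf{d}(t)$ inside $\tilde{\mathbf{x}}(t)$. So this term is the price of having $\mathcal{A}(t)$ in the first term, which is why it multiplies only $\|A^{-1}\|\|K\|\bar{\mathbf{d}}$.

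Separately, you will not obtain that exponential bound from the explicit $2\times2$ block solutions. The $(1,2)$ entry of a block's exponential is $(e^{\lambda_a t}-e^{\lambda_b t})/(\lambda_a-\lambda_b)$, which tends to $te^{\lambda_a t}$ as $\lambda_b\to\lambda_a$. That computation therefore yields a constant depending on the eigenvalue gap (and no constant at all for a repeated pair within one block), not one depending only on $|\lambda_6|/|\lambda_1|$ and multiplicities. The paper does not derive this bound; it cites it.

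Finally, Lemma~\ref{lem:inptconvrate} alone gives only $q^{(\alpha(p)-1)/2}\|\mathbf{z}_1-\mathbf{z}^*\|$. To reach $q^{\ell/2}d_{\mathcal{U}}$ you need the extra one-step contraction $\|\mathbf{z}_1-\mathbf{z}^*\|\le q^{1/2}d_{\mathcal{U}}$. The paper obtains this from the fixed-point property of $\mathbf{z}^*$ and the non-expansiveness of $\Pi_{\mathcal{U}}$.
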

\begin{proof}
    See~\ref{Ap:prop2Proof}.
\end{proof}

Proposition~\ref{prop:completeHOCW} \green{shows} that a chaser satellite approaches a ball around its \green{chosen} rendezvous
point exponentially quickly with rate given by~$\exp(-|\lambda_1|t)$.  
At time~$t$ 
that \green{chosen} rendezvous point is~$\tilde{\mathbf{x}}(t)$, 
and the ball around it has radius~$r$, which depends on the 
eigenvalues in~$\Lambda_{\n{des}}$, which
are user-specified, as well as the parameter~$\ell$ from Assumption~\ref{as:timescale},
which is a function of the speed
of the chaser's onboard processor. 

\subsection{Convergence and Robustness}
    Proposition~\ref{prop:completeHOCW} requires initial conditions that satisfy~\eqref{eq:initconds}. 
    We next present a convergence result for solutions to~$\HFO$ in which 
    all initial conditions are arbitrary except for~$\dtimer(0, 0)$, which is
    simply used to represent time itself and hence begins at~$\dtimer(0, 0) = 0$. 
    This is both our main result on convergence and a step toward establishing a robustness result below in Corollary~\ref{cor:robust}. 
    %For arbitrary initial conditions that violate~\eqref{eq:initconds}, we aim to look an overall global convergence result for the solutions. Since \eqref{eq:tmCompleteBound} takes a form found similarly to \cite{chuy2025hybridsystemsmodelfeedback} for bounding its solutions to its optimal steady-state value, then the following results can be found similarly. 
    
    \begin{theorem}[Global Convergence of $\HFO$] \label{tm:globalcompleteHOCW}
    Consider the hybrid system $\HFO$ from \eqref{eq:hybridFO}, 
    %where based on~$\Lambda_{\n{des}}$ we implement \eqref{eq:Kdef}, 
    and suppose that Assumptions~\ref{as:UnqeEig}, 
    \ref{as:d}, 
    and~\ref{as:timescale} hold. 
    Suppose the objective from~\eqref{eq:quadOBJ} 
    is used with 
    a stepsize~$\gamma \in \paren{0,\frac{2}{\lambda_{\min}(Q_u) + L}}$, 
    where~$L$ is from~\eqref{eq:LqDef} and
    $Q_{\mathbf{u}}$ is from~\eqref{eq:quadOBJ}. For each maximal solution~$\phi$ to~$\mathcal{H}_{FO}$ with initial condition~$\phi(0, 0)$ that satisfies~$\dtimer(0,0) = 0$, 
    for each~$(t, j) \in \textnormal{dom } \phi$,  
        \begin{multline}
            \|\phi(t, j)\|_{\mathcal{A}(t)}
            \leq \mu_{\max}(\Lambda_{\n{des}})\frac{|\lambda_6|}{|\lambda_1|}\exp\big({-\big|\lambda_{1}\big|}t\big) \big\|\phi\Hotime{0}{0}\big\|_{\mathcal{A}(t)}\\
            + \frac{\mu_{\max}(\Lambda_{\n{des}})|\lambda_6|d_{\mathcal{U}}}{m_{c}\big|\lambda_{1}\big|^2}\big[2 - \exp(-2|\lambda_1|\atimer{,\max}) -\exp\big({-\big|\lambda_{1}\big|}t\big)\big] \\
            + \frac{\mu_{\max}(\Lambda_{\n{des}})|\lambda_6| q^{\frac{\ell}{2}} d_{\mathcal{U}}}{m_{c}\big|\lambda_{1}\big|^2}\Big[1 - \exp\big({\big|\lambda_{1}\big|}\atimer{,\min}\big)\exp\big({-\big|\lambda_{1}\big|}t\big)\Big] \\
            + \frac{\mu_{\max}(\Lambda_{\n{des}}) |\lambda_6|}{m_{c}|\lambda_1|^2}\big\|A_{\n{stab}}^{-1}\big\|\big\|K\big\|\bar{\mathbf{d}} \Big[1 + \big(\mu_{\max}(\Lambda_{\n{des}})|\lambda_6|t - 1\big)\exp\big({-\big|\lambda_{1}\big|}t\big)\Big],
        \end{multline}
     where $\bar{\mathbf{d}}$ is from Assumption~\ref{as:d}, $\ell \geq 1$ is from Assumption~\ref{as:timescale}, $\mathcal{A}(t)$ is from~\eqref{eq:closedsetA}, and
     $q \in \paren{0,1}$ is from~\eqref{eq:LqDef}.
     In particular, each such solution
     satisfies 
     \begin{multline}
        \limsup_{\substack{(t, j) \in \dom \phi \\ t+j\rightarrow\infty}} \|\phi(t, j)\|_{\mathcal{A}(t)} \leq \\
        \mu_{\max}(\Lambda_{\n{des}})\frac{|\lambda_6|}{m_{c}|\lambda_1|^2}\Big(2d_{\mathcal{U}} -d_{\mathcal{U}}\exp\big({-2\big|\lambda_{1}\big|}\atimer{,\max}) + d_{\mathcal{U}}q^{\frac{\ell}{2}} + \big\|A_{\n{stab}}^{-1}\big\|\big\|K\big\|\bar{\mathbf{d}}\Big).
     \end{multline}
    \end{theorem}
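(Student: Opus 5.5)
The plan is to reduce Theorem~\ref{tm:globalcompleteHOCW} to Proposition~\ref{prop:completeHOCW}. The argument splits the solution into a short initial transient, during which the timers and the iterate may violate~\eqref{eq:initconds}, and the remainder of the solution, which starts from a state that satisfies~\eqref{eq:initconds} up to a time shift.

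\textbf{Step 1: bound the length of the transient.} Let $\phi$ be a maximal solution with $\dtimer(0,0)=0$. By Proposition~\ref{prop:maxsolComplete}, $\phi$ is complete and non-Zeno. Since $\phi(0,0)\in C\cup D$, we have $\atimer{}(0,0)\in[0,\atimer{,\max}]$. Hence the first case~(ii) jump occurs at some flow time $t^\ast\le \atimer{,\max}$, at a hybrid time $(t^\ast,j^\ast)$.

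\textbf{Step 2: check the post-jump state.} Right after that jump:
\begin{itemize}
\item $\atimer{}(t^\ast,j^\ast)\in[\atimer{,\min},\atimer{,\max}]$;
\item $\mathbf{z}=\mathbf{u}$ by the definition of $G_2$;
\item $\mathbf{y}_s$ is resampled from $H_{\n{stab}}\mathbf{u}+\mathbf{d}$.
\end{itemize}
Only the condition $\ctimer{}=\ctimer{,\n{comp}}$ may fail, together with $\dtimer=t^\ast\neq 0$. Both are harmless. The disturbance enters only through $\mathbf{d}(\dtimer)$, so a time shift just replaces $t\mapsto\mathbf{d}(t)$ by $t\mapsto \mathbf{d}(t+t^\ast)$, which still satisfies Assumption~\ref{as:d}. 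If $\ctimer{}$ is below $\ctimer{,\n{comp}}$, the first gradient step merely arrives earlier. This does not reduce the count of at least $\ell$ iterations per input interval that Assumption~\ref{as:timescale} guarantees, and that count is the only way $\ctimer{}$ enters the proof of Proposition~\ref{prop:completeHOCW}. Also, $\mathbf{u}(t^\ast,j^\ast)\in\mathcal{U}$, since $\mathbf{z}\in\mathcal{U}$ after any projection step. If no gradient step has yet occurred, the contribution of the input mismatch is still bounded, via $d_{\mathcal{U}}$, once the initial iterate is taken in $\mathcal{U}$, which membership in $\mathcal{A}$ permits. So we can apply the bound~\eqref{eq:tmCompleteBound} to the shifted solution $(s,i)\mapsto\phi(s+t^\ast,i+j^\ast)$.

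\textbf{Step 3: bound the transient directly.} On $[0,t^\ast]$, the input $\mathbf{u}$ is constant. We write the variation-of-constants formula for $\mathbf{x}-\tilde{\mathbf{x}}(t)$ under $\dot{\mathbf{x}}=A_{\n{stab}}\mathbf{x}+B_{\n{stab}}\mathbf{u}-B_{\n{stab}}K\mathbf{d}$. We use the same matrix-exponential estimate as in the proof of Proposition~\ref{prop:completeHOCW},
\begin{equation}
\|e^{A_{\n{stab}}t}\|\le \mu_{\max}(\Lambda_{\n{des}})\tfrac{|\lambda_6|}{|\lambda_1|}e^{-|\lambda_1|t},
\end{equation}
which follows from the block-diagonal structure obtained in Theorem~\ref{tm:stabGains} with real negative eigenvalues. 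The input mismatch is bounded by $d_{\mathcal{U}}$, and the drift of $\tilde{\mathbf{x}}$ is bounded by $\|A_{\n{stab}}^{-1}\|\|K\|\bar{\mathbf{d}}$ via Assumption~\ref{as:d}. This gives $\|\phi(t^\ast,j^\ast)\|_{\mathcal{A}}$ as the decayed initial distance plus an input term of size $\frac{|\lambda_6|\mu_{\max}d_{\mathcal{U}}}{m_c|\lambda_1|^2}(1-e^{-|\lambda_1|t^\ast})$ plus a disturbance term.

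\textbf{Step 4: concatenate.} We substitute this into the shifted bound and use $e^{-|\lambda_1|(t-t^\ast)}e^{-|\lambda_1|t^\ast}=e^{-|\lambda_1|t}$. The extra input-mismatch piece from the transient combines with the $[2-e^{-|\lambda_1|\atimer{,\max}}-e^{-|\lambda_1|t}]$ term. Bounding $t^\ast\le\atimer{,\max}$ causes the worst-case exponent to double, which explains the $\exp(-2|\lambda_1|\atimer{,\max})$ in the statement. The $\limsup$ claim then follows by letting $t\to\infty$, which completeness and the non-Zeno property permit.

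The main obstacle is the bookkeeping in Step 4. Products of the transient gain $\mu_{\max}|\lambda_6|/|\lambda_1|$ with the post-transient constants must be absorbed so that the constants of Proposition~\ref{prop:completeHOCW} are reproduced, with only the $\atimer{,\max}$ exponent changing. I would first try to redo the proof of Proposition~\ref{prop:completeHOCW} directly as a single variation-of-constants sum over input intervals. In that sum, the first interval is treated as an extra interval of length at most $\atimer{,\max}$ with mismatch at most $d_{\mathcal{U}}$ and no contraction from gradient steps. This avoids multiplying the two bounds.
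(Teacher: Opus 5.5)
You are right to single out Step~4, but the problem there is not bookkeeping: using Proposition~\ref{prop:completeHOCW} as a black box on the restarted solution cannot give the displayed inequality. Write $M:=\mu_{\max}(\Lambda_{\n{des}})|\lambda_6|/|\lambda_1|$, and note that $M>1$ always (either $\mu_{\max}(\Lambda_{\n{des}})\ge 2$, or the eigenvalues are distinct and $|\lambda_6|>|\lambda_1|$).

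The restarted bound has leading term $M e^{-|\lambda_1|(t-t^*)}\|\mathbf{x}(t^*,j^*)-\tilde{\mathbf{x}}(t)\|$. The proposition measures the initial state against $\tilde{\mathbf{x}}$ at the final time $t$, so Step~3 has to bound this quantity, not a distance to $\mathcal{A}(t^*)$. Inserting your transient estimate then produces two terms: $M^2e^{-|\lambda_1|t}\|\phi(0,0)\|_{\mathcal{A}(t)}$, and a transient input term $\frac{M^2d_{\mathcal{U}}}{m_c|\lambda_1|}\big(e^{-|\lambda_1|(t-t^*)}-e^{-|\lambda_1|t}\big)$. Because the initial distance is arbitrary, the extra factor $M$ cannot be absorbed into the stated constants. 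Similar mismatches appear in the disturbance terms.

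Your account of the doubled exponent is also not what happens in this route. Every transient contribution carries the factor $e^{-|\lambda_1|(t-t^*)}$, so the non-decaying part of the concatenated bound is exactly that of Proposition~\ref{prop:completeHOCW}, with $e^{-|\lambda_1|\atimer{,\max}}$. Consequently the black-box route does give the $\limsup$ claim, even with a smaller radius, but not the explicit estimate.

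The fallback in your last paragraph is the correct proof, and it is exactly the paper's. Write a single variation-of-constants expression for $\mathbf{x}(t,j)-\tilde{\mathbf{x}}(t)$ over $[0,t]$, so that only one factor $\|\exp(A_{\n{stab}}(t-\tau))\|$ ever appears. Then bound the input mismatch in two pieces:
\begin{itemize}
\item by $d_{\mathcal{U}}$ on the whole uncontracted window $[0,t_{\xtime{2,0}}]$, which is the first partial interval plus the next one, whose input may have received no gradient step;
\item by $q^{\ell/2}d_{\mathcal{U}}$ on $[t_{\xtime{2,0}},t]$, because after the first reset of $\atimer{}$, Assumption~\ref{as:timescale} forces at least $\ell$ iterations.
\end{itemize}
Since $\atimer{,\min}\le t_{\xtime{2,0}}\le 2\atimer{,\max}$, the window contributes at most $\frac{Md_{\mathcal{U}}}{m_c|\lambda_1|}\big(1-e^{-2|\lambda_1|\atimer{,\max}}\big)$. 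This is where the doubled exponent comes from. The lower bound $t_{\xtime{2,0}}\ge\atimer{,\min}$ gives the factor $e^{|\lambda_1|\atimer{,\min}}$ in the $q^{\ell/2}$ term. In this version your Step~2 (restarting $\ctimer{}$, shifting $\mathbf{d}$) is not needed; only the iteration count after the first case~(ii) jump is used.
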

    \begin{proof}
        See~\ref{Ap:theorem2Proof}.         
    \end{proof}

    Theorem~\ref{tm:globalcompleteHOCW} shows that a result like Proposition~\ref{prop:completeHOCW}
    holds for arbitrary initial conditions \green{(except~$\dtimer$)}, i.e., the chaser satellite approaches an error 
    ball of known size about its \green{chosen} rendezvous point~$\tilde{\mathbf{x}}(t)$
    and it does so exponentially quickly
    with rate given by~$\exp(-|\lambda_1|t)$, regardless of its initial state.
    \blue{It is possible to 
    \green{choose the desired closed-loop eigenvalues in~$\Lambda_{des}$ to}
    bound the radius of the error ball by any~$\eta > 0$. 
    Given~$\eta > 0$, 
    a straightforward calculation shows that first choosing~$\lambda_6$ such that
    \begin{equation} \label{eq:l6bound}
        \big|\lambda_6\big| \geq \frac{\mu_{\max}(\Lambda_{\n{des}})\Big(2d_{\mathcal{U}}  + d_{\mathcal{U}}q^{\frac{\ell}{2}} + \big\|A_{\n{stab}}^{-1}\big\|\big\|K\big\|\bar{\mathbf{d}}\Big)}{m_{c}\eta}
    \end{equation}
    and then choosing~$\lambda_1$ such that
    \begin{equation} \label{eq:l1bound}
        \big|\lambda_1\big| \geq \sqrt{\frac{\mu_{\max}(\Lambda_{\n{des}})\Big(2d_{\mathcal{U}}  + d_{\mathcal{U}}q^{\frac{\ell}{2}} + \big\|A_{\n{stab}}^{-1}\big\|\big\|K\big\|\bar{\mathbf{d}}\Big)\big|\lambda_6\big|}{m_{c}\eta}}
    \end{equation}  
    ensures that~$\limsup_{\substack{(t, j) \in \dom\phi \\ t + j \to \infty}} \|\phi(t, j)\|_{\mathcal{A}(t)} \leq \eta$.
    These bounds may require~$|\lambda_6|$ and~$|\lambda_1|$ to be large, \green{which would 
    also require~$|\lambda_2|$, $|\lambda_3|$, $|\lambda_4|$, and~$|\lambda_5|$ to be large.
    These large eigenvalues can make the gains~$k_1$, $k_4$, $k_8$, $k_{11}$, $k_{15}$, and~$k_{18}$ large 
    when using the controller in Theorem~\ref{tm:stabGains}, which
    can result in large
    inputs being applied to the system.  
    However,} several properties of a satellite rendezvous problem can make these inputs smaller. 
    Both bounds depend on~$\bar{\mathbf{d}}$, which bounds the rate of change of
    the disturbance~$\mathbf{d}$. Slowly varying disturbances have smaller
    values of~$\bar{\mathbf{d}}$ and hence allow for smaller values of~$|\lambda_6|$ and~$|\lambda_1|$.
    Both bounds also depend on~$q^{\frac{\ell}{2}}$, which can be made smaller by executing
    faster computations onboard the chaser satellite. 
    These bounds are sufficient to have asymptotic error
    bounded by~$\eta$, and
    in Section~\ref{sec:simResults} we show
    that modest asymptotic error is incurred even without satisfying them. 
    } 
    
    The radius of the 
    ball in Theorem~\ref{tm:globalcompleteHOCW}
    contains the term~$-d_{\mathcal{U}}\exp\big({-2\big|\lambda_{1}\big|}\atimer{,\max})$ 
    where Proposition~\ref{prop:completeHOCW}
    contains the term
    $-d_{\mathcal{U}}\exp({-\big|\lambda_{1}\big|}\atimer{,\max})$.
    The ball about~$\tilde{\mathbf{x}}(t)$ in Theorem~\ref{tm:globalcompleteHOCW}
    is therefore larger, \blue{which implies that the conditions in~\eqref{eq:l6bound}
    and~\eqref{eq:l1bound} are sufficient to bound asymptotic error by~$\eta$
    using either result.} 

\subsection{Robustness of Global Hybrid Convergence} \label{ss:robustness}
    % In this subsection, we solve Problem~\ref{prob:robustness}. 
    Theorem~\ref{tm:globalcompleteHOCW} not only characterizes system behavior from
    arbitrary initial conditions, but also enables analysis of the robustness of the system~$\HFO$
    to various perturbations. We consider perturbations that represent unplanned variations in 
    the control cadence, namely the
    timing with which
    inputs to the chaser are changed, as well as 
    unplanned variations in 
    the timing with which computations are completed. 
    We first consider the perturbed domain of the flow map, which is
    \begin{equation} \label{eq:CWCrhodef}
        C_\rho := \big\{\zeta \in \mathcal{X} : \atimer{}\in[0,\atimer{,\max} + \theta_{c,\max}], 
                \ctimer{}\in[0,\ctimer{,\n{comp}} + \theta_{g,\n{comp}}]\big\},
    \end{equation}
    where~$\theta_{c,\max} \in (-\atimer{,\max}, \infty)$ and~$\theta_{g,\n{comp}} \in (-\ctimer{,\n{comp}}, \infty)$.
    These perturbations allow~$\atimer{}$ to take values larger than~$\atimer{,\max}$ and 
    also allow it to reset to any positive value outside of~$[\atimer{,\min},\atimer{,\max}]$ at jumps. 
    Similarly, they allow~$\ctimer{}$ to take values larger than~$\ctimer{,\n{comp}}$ and
    for it to reset to any positive value at jumps. 
    Jumps are still triggered when at least one of these timers reaches zero, 
    and we use $D_\rho := D$ for the perturbed system model. 

    For the flow map, we allow errors in the rates at which 
    the
    timers~$\ctimer{}$ and~$\atimer{}$ 
    count down. 
    We define the perturbed flow map as
    \begin{equation}
        F_\rho(\zeta) :=
        \left(\begin{array}{cc}
            A_{\n{stab}} \mathbf{x} + B_{\n{stab}} \mathbf{u} - B_{\n{stab}}K\mathbf{d}(\dtimer) \\   
                 \mathbf{0} \\
                 \mathbf{0} \\
                 \mathbf{0} \\
            -1 + \kappa_{c} \\
            -1 + \kappa_{g} \\
            1
        \end{array}\right), 
    \end{equation}
    where
    %~$\hat{A} \in \mathbb{R}^{6 \times 6}$ are modeling errors in the matrix~$A_{\n{stab}}$, $\hat{B} \in \mathbb{R}^{6 \times 3}$ are modeling errors in the matrix~$B_{\n{stab}}$, 
    $\kappa_c \in (-\infty, 1)$ models error in the rate at which~$\atimer{}$ counts down, and ${\kappa_g \in (-\infty, 1)}$ 
    models error in the rate at which~$\ctimer{}$ counts down. 

    We now define each case of the perturbed jump map. 
    For case (i), we account for the fact that $\ctimer{}$ could reset to a value other than $\ctimer{,\n{comp}}$ after a gradient descent iteration is completed, and we define~$G_{1,\rho}$ as
    \begin{equation}
        G_{1,\rho}(\zeta) := 
        \left(\begin{array}{cc}
             \mathbf{x}\\
             \mathbf{u}\\
             \mathbf{y}_{s}\\
            \Pi_\mathcal{U}\bigg[\mathbf{z} - \gamma\Big(\nabla_{\mathbf{u}} \Phi\big(\mathbf{z}, \mathbf{y}_s\big) + H^T\nabla_{\mathbf{y}}\Phi\big(\mathbf{z}, \mathbf{y}_s\big)\Big)\bigg]\\
            \atimer{} \\
            \ctimer{,\n{comp}} + \theta_{g,\n{comp}} \\
            \dtimer
        \end{array}\right),
    \end{equation}
    where~$\theta_{g,\n{comp}}$ is from~\eqref{eq:CWCrhodef}. For case (ii) we define~$G_{2,\rho}$ as
    \begin{equation}
        G_{2,\rho}(\zeta) := 
        \left(\begin{array}{cc}
            \mathbf{x}\\
             \mathbf{z}\\
            H_{\n{stab}} \mathbf{u} +  \mathbf{d}(\dtimer)\\
             \mathbf{z}\\
            \left[\atimer{,\min} + \theta_{c,\min},\atimer{,\max} + \theta_{c,\max}\right]\\
            \ctimer{} \\
            \dtimer
        \end{array}\right).
    \end{equation}
    % Here,~$\hat{H} \in \mathbb{R}^{p \times m}$ models perturbations to the matrix~$H$, which includes~$\hat{A}$ and~$\hat{B}$ as errors which were described above, and errors in the output map~$\Psi$. 
    The interval to which~$\atimer{}$ is reset is perturbed with constants~$\theta_{c,\min} \in (-\atimer{,\min},\infty)$ and~$\theta_{c,\max} \in (-\atimer{,\max},\infty)$ that satisfy $0 < \atimer{,\min} + \theta_{c,\min} \leq \atimer{,\max} + \theta_{c,\max}$, which ensures that~$\atimer{}$ is reset to a non-empty interval, though both endpoints of the interval can be perturbed. 
    For case (iii) we define~$G_{3,\rho}(\zeta) := G_{1,\rho}(\zeta) \cup G_{2,\rho}(\zeta)$,
    and we define the perturbed jump map as
    \begin{equation}
        G_\rho(\zeta) := 
        \begin{cases}
            G_{1,\rho}(\zeta) \quad \text{if }
            \atimer{} > 0 \textnormal{ and } \ctimer{} = 0 \quad \textnormal{ Case (i)} \\
            G_{2,\rho}(\zeta) \quad \text{if } \atimer{} = 0 \textnormal{ and } \ctimer{} > 0 \quad \textnormal{ Case (ii)} \\
            G_{3,\rho}(\zeta) \quad \text{if } \atimer{} = 0 \textnormal{ and } \ctimer{} = 0 \quad \textnormal{ Case (iii)}
        \end{cases}.
    \end{equation}
    We also define
    \begin{equation} \label{eq:newrhodef}
        \rho = \max\{
        \theta_{g,\n{comp}},  
        \kappa_c, \kappa_g, \theta_{c,\min}, \theta_{c,\max}
        \}
    \end{equation}
to be the maximum size of any perturbation. 

The full perturbed hybrid system model is defined as
\begin{equation} \label{eq:HFOR}
    \HFOR := 
   \begin{cases}
       \dot{\zeta} \in F_\rho(\zeta) & \zeta \in C_\rho\\
       \zeta^+\in G_\rho(\zeta) & \zeta \in D_\rho
   \end{cases}. 
\end{equation}

The notion of robustness that we analyze requires the following definition. 

    \begin{definition}[$(\tau, \epsilon)$-closeness {\cite[Definition 5.23]{Hybridbook}}] \label{def:teclose}
        Given $\tau,\epsilon > 0$, two hybrid arcs $\phi_1$ and $\phi_2$ are $(\tau,\epsilon)$-close if
        \begin{enumerate}
            \item for all $(t,j)\in\textnormal{dom }\phi_1$ with $t+j\leq \tau$ there exists $s$ such that $(s,j)\in \textnormal{dom }\phi_2$, $|t-s|<\epsilon$, and 
            \begin{equation}
                |\phi_1(t,j)-\phi_2(s,j)|<\epsilon;
            \end{equation}
            \item for all $(t,j)\in\textnormal{dom }\phi_2$ with $t+j\leq \tau$ there exists $s$ such that $(s,j)\in \textnormal{dom }\phi_1$, $|t-s|<\epsilon$, and 
            \begin{equation}
                |\phi_2(t,j)-\phi_1(s,j)|<\epsilon.
            \end{equation}
        \end{enumerate}
    \end{definition}
The following result characterizes robustness of hybrid feedback optimization for the satellite rendezvous problem. 

    \begin{corollary}[Robustness of $\HFO$]\label{cor:robust}
        Consider the hybrid system~$\HFOR$ with~$\rho$ as defined in~\eqref{eq:newrhodef}, and 
        suppose that 
        Assumptions~\ref{as:UnqeEig}, \ref{as:d}, and~\ref{as:timescale} hold. 
        Consider objectives of the form of~\eqref{eq:quadOBJ}, and suppose that the gradient
        descent algorithm uses a stepsize~$\gamma \in \paren{0,\frac{2}{\lambda_{\min}(Q_u) + L}}$. 
        Then, for every~$\epsilon > 0$ and~$\tau > 0$, there exists~$\delta > 0$  with the following property: for every solution~$\phi_{\delta}$ to~$\HFO^{\delta\rho}$, there exists a solution~$\phi$ to~$\HFO$ such that~$\phi_{\delta}$ and~$\phi$ are~$(\tau,\epsilon)$-close. 
    \end{corollary}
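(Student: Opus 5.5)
This result should follow from the general robustness theory for well-posed hybrid systems in~\cite{Hybridbook}, specifically the closeness-of-solutions result for perturbed systems (e.g., \cite[Proposition 6.34]{Hybridbook} / \cite[Theorem 6.30]{Hybridbook}). Those results say that if a nominal system satisfies the hybrid basic conditions and a family of perturbed systems converges to it graphically as the perturbation size vanishes, then perturbed solutions are $(\tau,\epsilon)$-close to nominal ones. The plan is to verify the hypotheses of that result for $\HFO$ and the family $\HFO^{\delta\rho}$, $\delta \to 0$.

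The first step is to invoke Lemma~\ref{lem:wellPosed}, which shows that $\HFO$ satisfies Definition~\ref{def:hybridcond}. The second step is to show that $\HFO^{\delta\rho}$ is an admissible inflation of $\HFO$: the data $(C_{\delta\rho},F_{\delta\rho},D_{\delta\rho},G_{\delta\rho})$ should be contained in a $\delta'$-perturbation of $(C,F,D,G)$, where $\delta' \to 0$ as $\delta \to 0$. Concretely, all perturbation parameters in $\HFO^{\delta\rho}$ are scaled by $\delta$. For the sets, $C_{\delta\rho}$ only enlarges the timer intervals by at most $\delta|\rho|$-type amounts. Any $\zeta \in C_{\delta\rho}$ is therefore within distance $\delta\max\{|\theta_{c,\max}|,|\theta_{g,\n{comp}}|\}$ of $C$, and $D_{\delta\rho}=D$. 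For the flow map, $F_{\delta\rho}(\zeta)$ differs from $F(\zeta)$ only in the two timer rates, by $\delta\kappa_c$ and $\delta\kappa_g$. For the jump map, $G_{\delta\rho}$ differs only in the timer reset values, by $\delta\theta_{g,\n{comp}}$ and through the interval endpoints $\delta\theta_{c,\min}$ and $\delta\theta_{c,\max}$. Hence the graphs of $F_{\delta\rho}$ and $G_{\delta\rho}$ lie inside the $\delta'$-inflations $F_{\delta'}$ and $G_{\delta'}$ in the sense of \cite[Definition 6.27]{Hybridbook}, with $\delta' = \delta\,c\,|\rho|$ for a constant $c$ collecting the individual perturbation components. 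Here $\rho$ from~\eqref{eq:newrhodef} should be read as bounding the magnitudes of those components. Every solution of $\HFO^{\delta\rho}$ is then a solution of the $\delta'$-inflation of $\HFO$.

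The third step is to apply \cite[Proposition 6.34]{Hybridbook}, which governs closeness between solutions of a $\delta$-inflation of a well-posed hybrid system and solutions of the nominal system. That result needs a compact set $K$ of initial conditions together with either pre-asymptotic stability or boundedness of solutions over $[0,\tau]$. Boundedness is where Theorem~\ref{tm:globalcompleteHOCW} enters. It shows that the $\mathbf{x}$-component of any solution stays bounded, uniformly over compact sets of initial conditions. The remaining components are also bounded: $\mathbf{u}$ and $\mathbf{z}$ lie in the compact set $\mathcal{U}$ after one jump, $\mathbf{y}_s$ lies in $H_{\n{stab}}\mathcal{U}+\mathcal{D}$, the countdown timers stay in compact intervals, and $\dtimer \le \tau$ on the relevant horizon. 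Proposition~\ref{prop:maxsolComplete} gives completeness and non-Zeno behavior, which excludes finite escape. Together these give the uniform boundedness of nominal solutions on $t+j\le\tau$ that the hybrid-book result requires. The same result then yields, for each $\epsilon,\tau$, a $\delta$ that produces the desired nominal solution $\phi$.

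The main obstacle is matching the precise hypotheses of the cited closeness theorem. The statement is phrased with no explicit compact set of initial conditions, while the hybrid-book result is local: it fixes a compact set and requires $\phi_\delta(0,0)$ to be near it. I would handle this by reading the corollary as holding for solutions starting from a given compact set, with $\delta$ depending on that set. I would then take $\phi(0,0)=\phi_\delta(0,0)$ whenever it lies in $C\cup D$, and otherwise project the timers into their nominal intervals, a move of size $O(\delta)$. A secondary technicality is that the flow map depends on $\dtimer$ through $\mathbf{d}(\dtimer)$. Continuity of $\mathbf{d}$ under Assumption~\ref{as:d} keeps the inflation uniform on compact sets.
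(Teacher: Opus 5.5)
Your proposal is correct and follows the paper's route: the paper's proof is just Lemma~\ref{lem:wellPosed} (well-posedness) and Proposition~\ref{prop:maxsolComplete} (completeness of maximal solutions) fed into the closeness result \cite[Prop.~6.34]{hybridfeedcntrl}. Your extra steps supply details that the paper's statement and one-line proof omit, namely checking that $\HFO^{\delta\rho}$ lies inside a standard $\delta'$-inflation of $\HFO$ and restricting to a compact set of initial conditions, which the cited result genuinely requires; on the other hand, the appeal to Theorem~\ref{tm:globalcompleteHOCW} is unnecessary, since completeness already gives the pre-forward completeness needed, and the nominal $\phi(0,0)$ should be the one the cited result produces rather than being set equal to $\phi_\delta(0,0)$, because timing perturbations can flip the order of case~(i) and case~(ii) jumps.
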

    \begin{proof}
        Lemma~\ref{lem:wellPosed} shows that~$\HFO$ is well-posed, and 
        Proposition~\ref{prop:maxsolComplete} shows its maximal solutions are complete.
        The result then follows from~\cite[Prop. 6.34]{hybridfeedcntrl}. 
        %Following a similar structure to the proof seen in \cite{chuy2025hybridsystemsmodelfeedback}, then the result follows. This holds since \eqref{eq:hybridFO} satisfies the same properties and holds a similar structure seen in \cite{chuy2025hybridsystemsmodelfeedback}. 
    \end{proof}

    Corollary~\ref{cor:robust} shows that, up to any time~$\tau$
    and for any tolerance~$\epsilon$, there exists a strictly positive
    perturbation to~$\HFO$ such that each solution of the perturbed
    system is~$(\tau, \epsilon)$-close to a solution that~$\HFO$ could have produced. 
    \blue{
    This result only holds over bounded hybrid time horizons, which means
    that one must select a time~$\tau$ up until which this result is applied.
    In practice, one often runs a satellite rendezvous controller until 
    some point in time, after which a different controller is used to complete
    a task after the chaser has rendezvoused with the target, e.g., the chaser
    may inspect the target satellite. Then one can set the time~$\tau$
    to be the amount of time \green{for which} the rendezvous controller will be used,
    and one is assured of robustness for the \green{entire} time that it is used. 
    }

\section{Simulation Results} \label{sec:simResults}
This section presents three sets of simulation results
for hybrid feedback optimization running onboard the chaser
satellite.
The first set of results simulates the system~$\HFO$ from~\eqref{eq:hybridFO}
from a fixed initial condition 
to illustrate its nominal behavior. 
The second simulates the perturbed system~$\HFOR$ 
from~\eqref{eq:HFOR} for various values
of the perturbation~$\rho$ to compare this behavior
to the nominal behavior.
The third simulates~$\HFO$ from~$20$ different
initial conditions to show that all of them
produce trajectories that achieve
the desired rendezvous behavior. 

\subsection{Problem Setup}
Suppose we consider the chaser dynamics from \eqref{eq:CWltisystem}, and the feedback optimization problem we solve is
% \begin{equation}
%     \begin{aligned}
%     \label{probform2}
%         \min_{\mathbf{u},\mathbf{y}} \quad & \Phi(\mathbf{u},\mathbf{y}):=\frac{1}{2} \mathbf{u}^\top Q_u \mathbf{u} + \frac{1}{2}(\mathbf{y}-\hat{\mathbf{y}})^\top Q_y (\mathbf{y} - \hat{\mathbf{y}})\\
%         \text{subject to} \quad & \mathbf{u} \in \mathcal{U}, \,\, \mathbf{y} \in \R^6, 
%     \end{aligned}
% \end{equation}
\begin{mini!}|l|
    {\mathbf{u},\mathbf{y}}
    {\Phi(\mathbf{u},\mathbf{y}):=\frac{1}{2} \mathbf{u}^\top Q_{\mathbf{u}} \mathbf{u} + \frac{1}{2}(\mathbf{y}-\hat{\mathbf{y}})^\top Q_{\mathbf{y}} (\mathbf{y} - \hat{\mathbf{y}})}
    {\label{probform2}}
    {}
    \addConstraint{\mathbf{u} \in \mathcal{U}, \,\, \mathbf{y} = H_{\n{stab}}\mathbf{u} + \mathbf{d},}{}
\end{mini!}
where~$Q_{\mathbf{u}} = 0.00005\cdot I_3$,~$Q_{\mathbf{y}} = \text{diag}(0.04,0.04,0.04,0.055,0.055,0.055)$, $\mathcal{U} = [-0.4,0.4]$,~$\mathbf{d}(t) = 5\sin(t) \cdot\mathbbm{1}_{6}$, and~$\hat{\mathbf{y}} = (100.0,100.0,100.0,0.0,0.0,0.0)^\top$.   

For simulations, the Hybrid Equations Toolbox (Version 3.0.0.76 \cite{Sanfelice_2013}) was used, along with the initial conditions
\begin{multline} \label{eq:initconds1}
    \mathbf{x}(0,0)= (1500, -1770, 3000, 1, 3.4, 1)^\top, \mathbf{u}(0,0) = (0, 0, 0)^\top, \ctimer{}(0,0) = 0.5
    \\ \mathbf{y}_s(0,0) = (1505, -1775, 3005, 6, 5.4, 6.2)^\top, \mathbf{z}(0,0) = (0, 0, 0)^\top, 
    ~\atimer{}(0,0) = 0.175,\\
    ~\dtimer(0,0) = 0, 
\end{multline}
where $\ctimer{,\n{comp}} = 0.5, \atimer{,\min} = 1.5 $, and $\atimer{,\max} = 2.0$, 
with stepsize~$\gamma = 0.1$.

For the stabilizing controller, we 
set
\begin{equation}
\Lambda_{\n{des}} = \{-0.0155, -0.0163, -0.0155, -0.0170, -0.0165, -0.0170\}, 
\end{equation}
which gives 
\begin{equation}
        \tilde{A}_{\n{stab}} = 
        \left[
        \begin{array}{cccccc}
            0 & 0 & 0 & 1 & 0 & 0 \\
            0 & 0 & 0 & 0 & 1 & 0 \\
            0 & 0 & 0 & 0 & 0 & 1 \\
            -0.0003 & 0 & 0 & -0.0318 & 0 & 0 \\
            0 & -0.0003 & 0 & 0 & -0.0325 & 0 \\
            0 & 0 & -0.0003 & 0 & 0 & -0.0335 \\
        \end{array}
        \right].
    \end{equation}

\begin{figure*}[ht]
    \centering
    \includegraphics[width=1\linewidth,keepaspectratio]{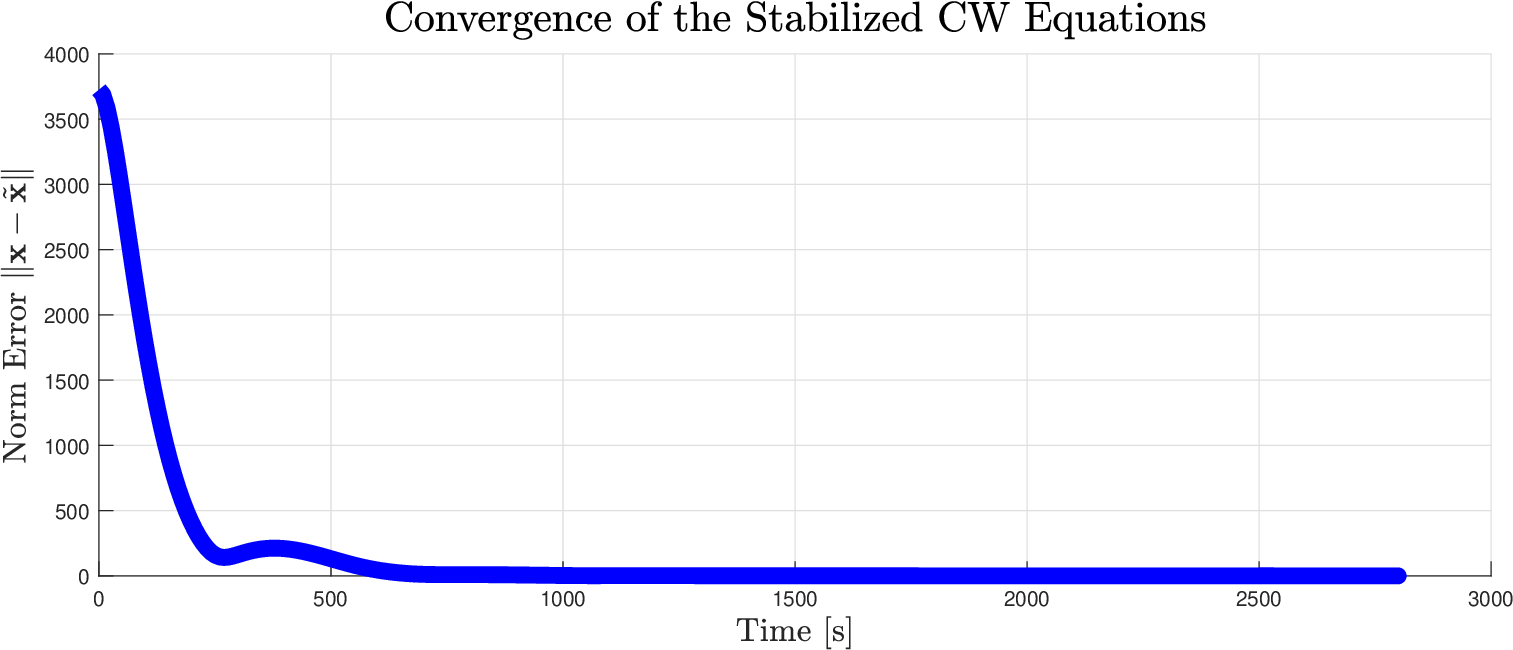}    
    \caption{ 
    Convergence of the state $\mathbf{x}$ under the dynamics $\HFO$ from the initial condition in~\eqref{eq:initconds1}. The value of
    $\mathbf{x}(t, j)$ is asymptotically close to the \green{chosen} rendezvous point~$\tilde{\mathbf{x}}(t)$,
    and the asymptotic value of~$\|\mathbf{x}(t, j) - \tilde{\mathbf{x}}(t)\|$ is 
    approximately~$8.2\cdot 10^{-2}$ meters.
    }
    \label{fig:Nominal}
\end{figure*}

\subsection{Simulation Results}
We first simulate the behavior of~$\HFO$ from the initial 
condition in~\eqref{eq:initconds1}.
As in Proposition~\ref{prop:completeHOCW}
and Theorem~\ref{tm:globalcompleteHOCW},
our focus is on evaluating how closely
the state of the chaser
approaches its \green{chosen} rendezvous
point. We therefore examine the value
of~$\|\mathbf{x}(t, j) - \tilde{\mathbf{x}}(t)\|$
as a function of~$t$. 

This value over time is plotted
in Figure~\ref{fig:Nominal}, which
shows that the chaser satellite converges quite closely
to the \green{chosen} rendezvous point, despite
the perturbations given by~$t \mapsto \mathbf{d}(t)$
and the fact that sub-optimal inputs to the system
are used over time. 
The trajectory shown in Figure~\ref{fig:Nominal}
represents the typical behavior we saw in our
simulations, and it shows
the success of feedback optimization
for satellite rendezvous. 
The limiting value of~$\|\mathbf{x}(t, j) - \tilde{\mathbf{x}}(t)\|$
is approximately~$0.082$, while~$\bar{\mathbf{d}} = 5$, which indicates
that feedback optimization provides 
a~$98.4$\% reduction in the magnitude of disturbances faced by the system. 

\begin{figure*}[ht]
    \centering
    \includegraphics[width=1\linewidth,keepaspectratio]{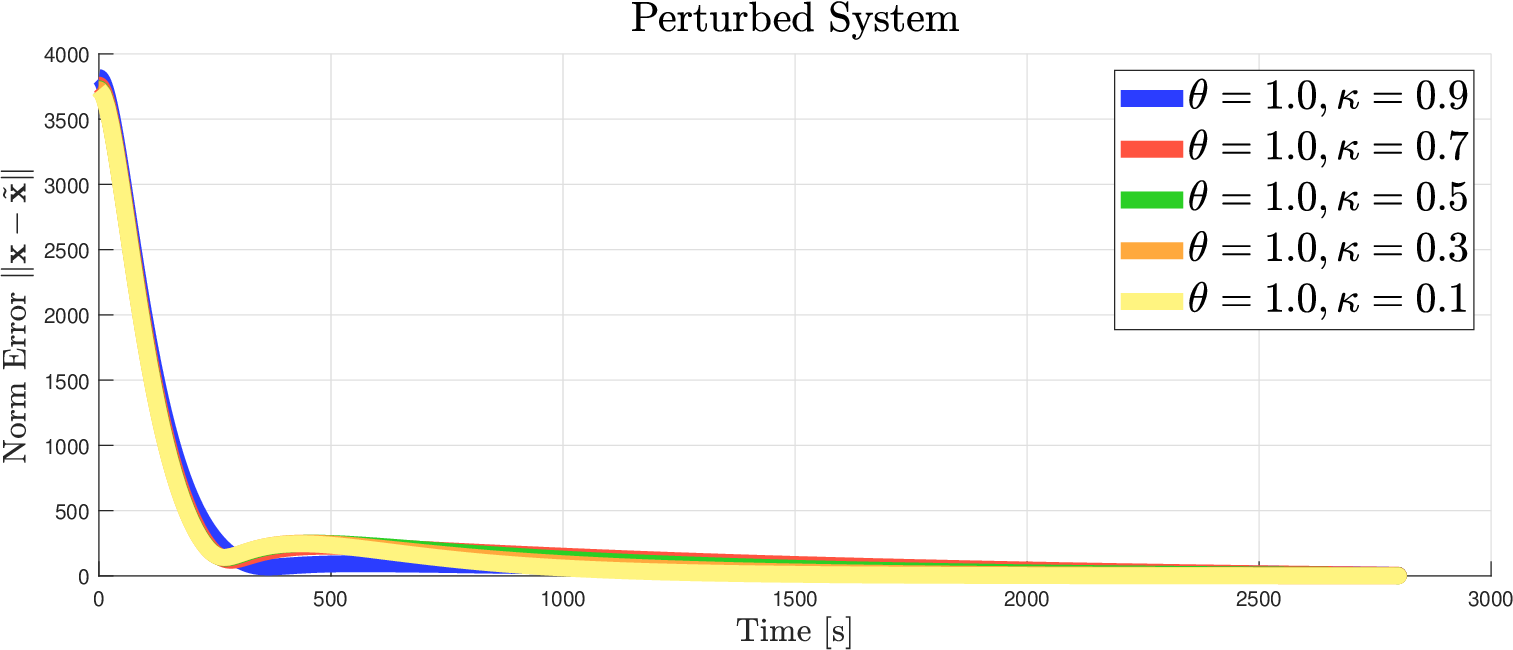}
    \caption{ Convergence of the 
    rendezvous error~$\|\mathbf{x}(t, j) - \tilde{\mathbf{x}}(t)\|$ 
    for the perturbed system $\HFOR$ from the initial condition in~\eqref{eq:initconds1} with $\theta = 1.0$
    and varying values of~$\kappa$. The perturbed system attains an asymptotic error of at most $\sim$$4.39$ meters even though there are 
    perturbations to the dynamics of~$\atimer{}$ and~$\ctimer{}$. 
    }
    \label{fig:P_1.0}
\end{figure*}
\begin{table}[ht]
    \centering
    \caption{Values of perturbations and resulting asymptotic error for the perturbed system~$\HFOR$.}
    \begin{tabular}{l|ccc|}
    \cline{2-4}
                            & \multicolumn{3}{c|}{Value of $\|\mathbf{x} - \tilde{\mathbf{x}}\|$}                               \\ \hline
    \multicolumn{1}{|l|}{}  & \multicolumn{1}{l|}{$\theta = -0.25$} & \multicolumn{1}{c|}{$\theta = 0.5$} & $\theta = 1.0$ \\ \hline
    \multicolumn{1}{|l|}{$\kappa = 0.1$} & \multicolumn{1}{c|}{0.13}  & \multicolumn{1}{c|}{0.14}  & 0.27  \\ \hline
    \multicolumn{1}{|l|}{$\kappa = 0.3$} & \multicolumn{1}{c|}{0.19}  & \multicolumn{1}{c|}{0.15}  & 0.31 \\ \hline
    \multicolumn{1}{|l|}{$\kappa = 0.5$} & \multicolumn{1}{c|}{0.43}  & \multicolumn{1}{c|}{0.27}  & 1.57  \\ \hline
    \multicolumn{1}{|l|}{$\kappa = 0.7$} & \multicolumn{1}{c|}{0.54}  & \multicolumn{1}{c|}{2.23}  & 4.39  \\ \hline
    \multicolumn{1}{|l|}{$\kappa = 0.9$} & \multicolumn{1}{c|}{1.22}  & \multicolumn{1}{c|}{4.32}  & 3.53  \\ \hline
    \end{tabular}
    \label{tab:perturbations}
\end{table}

\begin{figure*}[ht]
    \centering
    \includegraphics[width=1\linewidth,keepaspectratio]{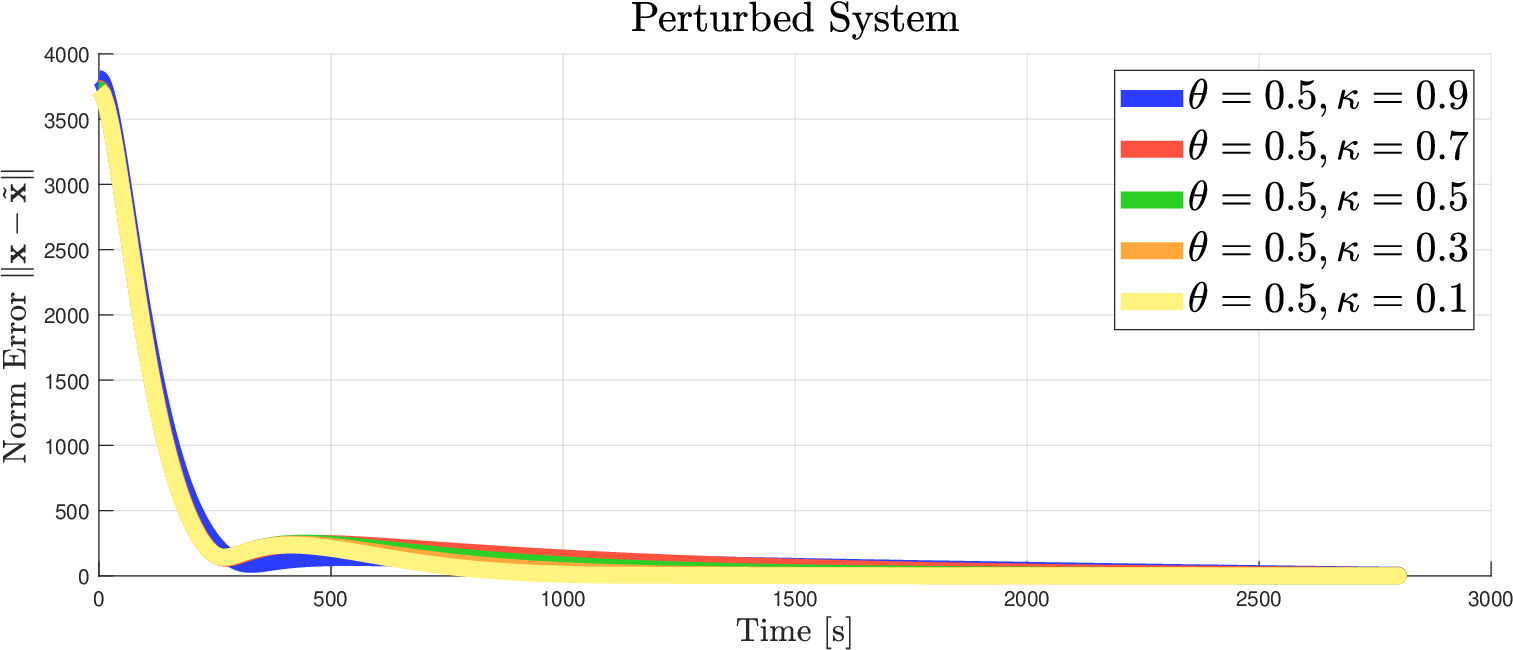}
    \caption{ Convergence of the
    rendezvous error~$\|\mathbf{x}(t, j) - \tilde{\mathbf{x}}(t)\|$ 
    for the perturbed system $\HFOR$ from the initial condition in~\eqref{eq:initconds1} with $\theta = 0.5$
    and varying values of~$\kappa$. 
    The perturbed system attains an asymptotic error of at most $\sim$$4.32$ meters even though there are perturbations to the dynamics of~$\atimer{}$ and~$\ctimer{}$. 
    }
    \label{fig:P_0.5}
\end{figure*}

\begin{figure*}[ht]
    \centering
    \includegraphics[width=1\linewidth,keepaspectratio]{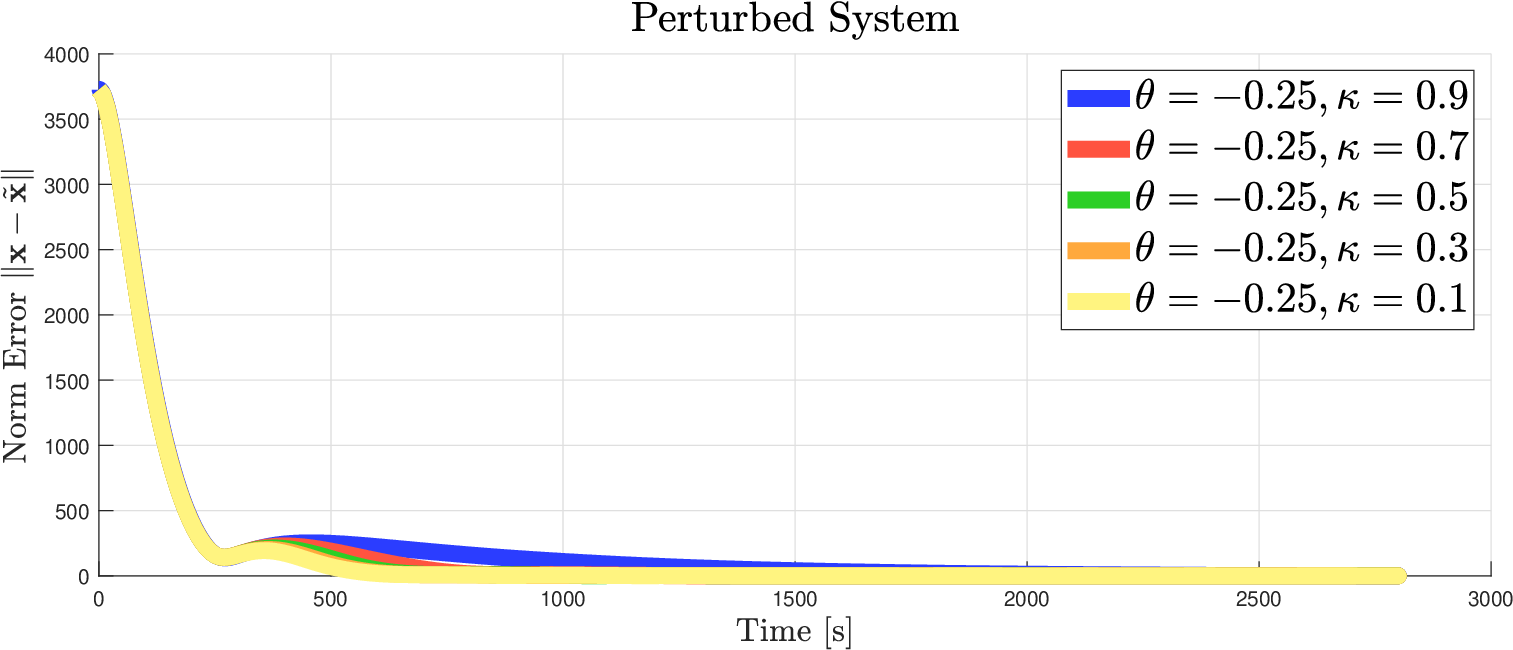}
    \caption{ Convergence of the
    rendezvous error~$\|\mathbf{x}(t, j) - \tilde{\mathbf{x}}(t)\|$  for the perturbed system $\HFOR$ from the initial condition in \eqref{eq:initconds1} with $\theta = -0.25$
    and varying values of~$\kappa$. 
    The perturbed system attains an asymptotic error of at most $\sim$$1.22$ meters even though there are perturbations to the dynamics of~$\atimer{}$ and~$\ctimer{}$.
    }
    \label{fig:P_-0.25}
\end{figure*}

We next consider the perturbed system model~$\HFOR$.
For the timers~$\atimer{}$ and~$\ctimer{}$ we use perturbations of the form $\theta_{c,\min} = \theta_{c,\max} = \theta_{g, \n{comp}} = \theta$ and $\kappa_c = \kappa_g = \kappa$. 
The values of~$\theta$ and~$\kappa$ used in simulations are shown in Table~\ref{tab:perturbations}, along with the asymptotic values of~$\|\mathbf{x}(t, j) - \tilde{\mathbf{x}}(t)\|$ for each run. 
\green{
A second-order regression of these values gives
%By computing~$a_0=0.256, a_1 = -{1.62}, a_2 = -{0.449}, a_3 = 4.131, a_4 = -{0.023}, \textnormal{ and }~a_5 = 3.364$, we can apply a second-order interpolation to show that~
\begin{equation}
\|\mathbf{x}(t, j) - \tilde{\mathbf{x}}(t)\|\approx 0.256 - 1.62\kappa - 0.449\theta + 4.131\kappa^2 - 0.023\theta^2 + 3.364\kappa\theta.
\end{equation}
}

Figure~\ref{fig:P_1.0} shows the behavior of~$\|\mathbf{x}(t, j) - \tilde{\mathbf{x}}(t)\|$ for~$\theta = 1$ and varying values of~$\kappa$.
The value~$\theta = 1$ means that the timers~$\atimer{}$ and~$\ctimer{}$ take one second longer to jump than
they do in the nominal system model. The positive values taken by~$\kappa$ mean
that both~$\atimer{}$ and~$\ctimer{}$ count down more slowly than they do in the nominal system model. 
Figure~\ref{fig:P_0.5} shows simulations
for~$\theta = 0.5$ and varying values of~$\kappa$. 
Figure~\ref{fig:P_-0.25} does the same for~$\theta = -0.25$ and varying values of~$\kappa$. 
The negative value of~$\theta$ implies that~$\atimer{}$
and~$\ctimer{}$ count down to zero faster than they do in the nominal model. 

Table~\ref{tab:perturbations_diff} gives the asymptotic values of~$\|\mathbf{x}(t, j) - \tilde{\mathbf{x}}(t)\|$
for each value of~$\rho = \max\{\theta, \kappa\}$. 
Table~\ref{tab:perturbations_diff}
shows that larger perturbations~$\rho$ produce larger asymptotic errors in the chaser's state, which is intuitive.
Table~\ref{tab:perturbations} shows that increasing~$\theta$ increases errors more rapidly than increasing~$\kappa$.
Intuitively, smaller values of both~$\theta$ and~$\kappa$ produce smaller errors because computations,
measurements of outputs, 
and
changes in the input happen more quickly for those smaller values, which allows the underlying optimization
algorithm to react more quickly to disturbances faced by the system. 

\begin{table}[ht]
    \centering
    \caption{Effects of perturbations~$\rho$ on the asymptotic value of~$\|\mathbf{x}(t, j) - \tilde{\mathbf{x}}(t)\|$.}
    \begin{tabular}{c|ccccc|}
    \cline{2-6}
    \multicolumn{1}{l|}{}                     & \multicolumn{5}{c|}{Asymptotic value of~$\|\mathbf{x}(t, j) - \tilde{\mathbf{x}}(t)\|$}                                                                      \\ \hline
    \multicolumn{1}{|c|}{$\rho$} & \multicolumn{1}{c|}{0.1} & \multicolumn{1}{c|}{0.3} & \multicolumn{1}{c|}{0.5} & \multicolumn{1}{c|}{0.7} & 0.9 \\ \hline
    \multicolumn{1}{|c|}{Value}               & \multicolumn{1}{c|}{0.05}    & \multicolumn{1}{c|}{0.11}    & \multicolumn{1}{c|}{0.35}    & \multicolumn{1}{c|}{2.15}    & 4.23    \\ \hline
    \end{tabular}
    \label{tab:perturbations_diff}
\end{table}

\begin{figure*}[ht]
    \centering
    \includegraphics[width=1\linewidth,keepaspectratio]{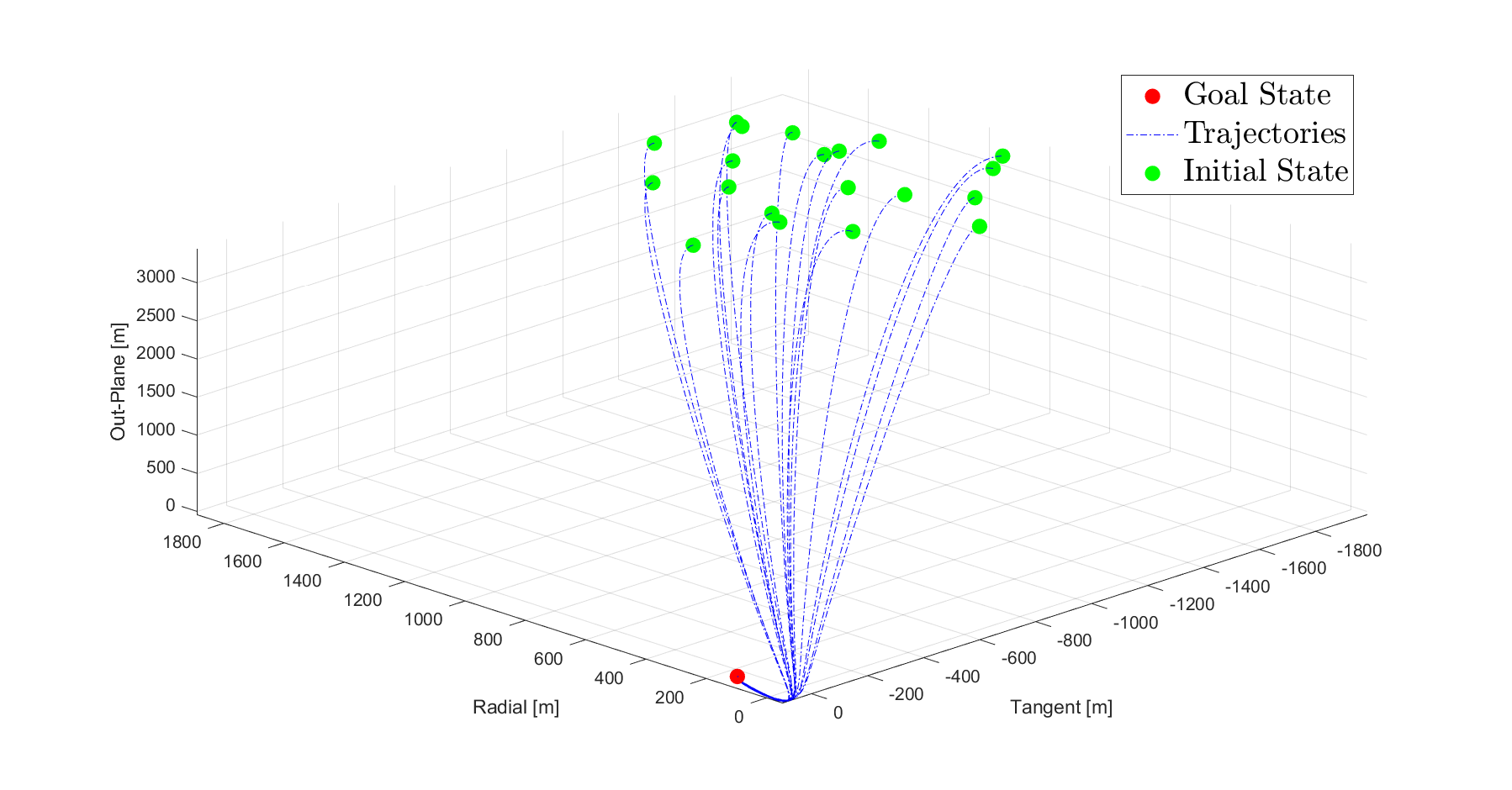}
    \caption{ The convergence of the state $\mathbf{x}$ under the dynamics of~$\HFO$ from~$20$ initial conditions satisfying \eqref{eq:initconds2}.
    In all cases, $\mathbf{x}(t, j)$ asymptotically approaches~$\tilde{\mathbf{x}}(t)$, and the maximum asymptotic
    error across all runs is~$0.81$ meters.   
    }
    \label{fig:RandInit}
\end{figure*}

For the third set of simulations 
we consider~$20$ initial conditions 
of the nominal system~$\HFO$ 
that satisfy
\begin{multline} \label{eq:initconds2}
    \mathbf{x}(0,0) \in [1000, 2000] \times [-1000, -2000] \times [-2500, -3500] \times [0.1, 4.0] \times 
    \\ [0.1, 4.0] \times [0.1, 4.0], \mathbf{u}(0,0) = \big(0, 0, 0\big)^\top, 
    ~\mathbf{y}_s(0,0) = \mathbf{x}(0,0) + 5,~\mathbf{z}(0,0) = \big(0, 0, 0\big)^\top,\\ 
    \atimer{}(0,0) = 0.175,~\ctimer{}(0,0) = 0.5,~\dtimer(0,0) = 0, 
\end{multline}
and we simulated trajectories from each. 
Figure~\ref{fig:RandInit} shows that all trajectories produce
asymptotically small values of~$\|\mathbf{x}(t, j) - \tilde{\mathbf{x}}(t)\|$,
with the largest among them being~$0.81$ meters,
which illustrates that the feedback optimization approach we use reliably
drives the chaser satellite close to its rendezvous point with
the target satellite. 

\section{Conclusion} \label{sec:results}
This paper developed a hybrid system model for feedback optimization 
in a satellite rendezvous problem that considers continuous-time dynamics 
with discrete-time optimization in the loop. 
The resulting hybrid model was shown to be well-posed, and all of its solutions
were shown to be complete and non-Zeno. Analytically, the solutions
to this system were shown to converge to a ball around a desired
rendezvous point, and in simulation the radius of that ball was shown
to be small, indicating close convergence to desired rendezvous points. 
Future work includes using hybrid feedback optimization for systems with nonlinear dynamics,
such as including attitude control with the CW equations.

\bibliographystyle{elsarticle-num}
\bibliography{Bib}

\appendix
\section{Basic Results on Hybrid Systems} \label{sec:properties}
%\subsection{Outer Semicontinuity of Jump Maps} \label{Ap:gosc}
For completeness we reproduce here two basic results from the study
of hybrid systems. 

\begin{lemma}[Lemma~A.33 in~\cite{hybridfeedcntrl}] \label{lem:OSLBcond}
     Given closed sets $D_1\subset \R^m$ and $D_2\subset \R^m$ and the set-valued maps $G_1:D_1 \rightrightarrows \R^n$ and $G_2:D_2\rightrightarrows\R^n$ that are outer semicontinuous and locally bounded relative to $D_1$ and $D_2$, respectively, the set-valued map $G:D\rightrightarrows \R^n$ given by
    \begin{equation}
    G(\zeta) := G_1(\zeta) \cup G_2(\zeta) 
             = 
        \begin{cases}
            G_1(\zeta) &\text{if } \zeta \in D_1 \backslash D_2 \\
            G_2(\zeta) &\text{if } \zeta \in D_2 \backslash D_1\\
            G_1(\zeta)\cup G_2(\zeta) &\text{if } \zeta \in D_1 \cap D_2
        \end{cases}\label{OSLBjump}
    \end{equation}
    for each $\zeta\in D$ is outer-semicontinuous and locally bounded relative to the closed set $D$. 
\end{lemma}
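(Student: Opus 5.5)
The plan is to prove the two properties directly from the definitions in the footnotes to Definition~\ref{def:hybridcond}. Throughout, we use the convention that $G_i(\zeta) = \emptyset$ for $\zeta \notin D_i$, so that the three-case expression in~\eqref{OSLBjump} coincides with the pointwise union $G_1(\zeta)\cup G_2(\zeta)$ on $D := D_1 \cup D_2$.

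First, $D$ is closed because it is a finite union of the closed sets $D_1$ and $D_2$.

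\textbf{Local boundedness.} Fix $\zeta \in D$ and $i \in \{1,2\}$; we build a neighborhood $U_i$ of $\zeta$ on which $G_i$ is bounded.
\begin{itemize}
\item If $\zeta \in D_i$, local boundedness of $G_i$ relative to $D_i$ gives a neighborhood $U_i$ of $\zeta$ such that $G_i(U_i \cap D_i)$ is bounded.
\item If $\zeta \notin D_i$, closedness of $D_i$ gives a neighborhood $U_i$ of $\zeta$ with $U_i \cap D_i = \emptyset$. Then $G_i$ is empty on $U_i$, hence trivially bounded there.
\end{itemize}
Set $U := U_1 \cap U_2$. Then $G(U \cap D) \subseteq G_1(U_1\cap D_1) \cup G_2(U_2 \cap D_2)$, which is a union of two bounded sets and is therefore bounded.

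\textbf{Outer semicontinuity.} Take a sequence $\zeta_k \to \zeta$ with $\zeta_k \in D$, and points $y_k \in G(\zeta_k)$ with $y_k \to y$. For each $k$, $y_k \in G_1(\zeta_k)$ or $y_k \in G_2(\zeta_k)$. By the pigeonhole principle, one index $i$ occurs for infinitely many $k$, so we pass to a subsequence along which $y_k \in G_i(\zeta_k)$ and $\zeta_k \in D_i$. Since $D_i$ is closed, $\zeta \in D_i$. Outer semicontinuity of $G_i$ relative to $D_i$ then gives $y \in G_i(\zeta) \subseteq G(\zeta)$.

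The only step needing care is this subsequence selection together with the closedness of $D_i$, which guarantees that the limit point $\zeta$ lies in the domain where the relevant $G_i$ is outer semicontinuous. Everything else is routine.
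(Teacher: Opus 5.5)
Your proof is correct and complete. The paper does not prove this lemma; it only reproduces it from Lemma~A.33 of the cited book, so there is no in-paper argument to compare against. Your direct verification is the standard one: closedness of $D_i$ supplies a neighborhood on which $G_i$ is empty, intersecting neighborhoods gives local boundedness, and a pigeonhole subsequence together with closedness of $D_i$ places the limit where $G_i$ is outer semicontinuous.

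An equivalent shortcut for the outer semicontinuity part uses the fact that a set-valued map on a closed domain is outer semicontinuous if and only if its graph is closed. With your empty-outside-$D_i$ convention, $\mathrm{gph}\, G = \mathrm{gph}\, G_1 \cup \mathrm{gph}\, G_2$ is a union of two closed sets. Your sequential argument is exactly the unpacked form of this observation.
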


\begin{lemma}[Basic existence of solutions; Proposition~2.34 in \cite{hybridfeedcntrl}] \label{lem:complete}
     Let $\mathcal{H}=(C,F,D,G)$ satisfy Definition~\ref{def:hybridcond}. Take an arbitrary $\nu \in C \cup D.$ If $\nu \in D$ or \\
    (VC) there exists a neighborhood $U$ of $\nu$ such that for every $\zeta \in U \cap C$,
    \begin{equation*}
        F(\zeta) \cap T_C(\zeta) \neq \emptyset, 
    \end{equation*}
    then there exists a nontrivial solution $\phi$ to $\mathcal{H}$ with $\phi(0,0)=\nu$. 
    If (VC) holds for every $\nu \in C\backslash D$, then there exists a nontrivial solution to $\mathcal{H}$ from every initial point in $C \cup D$, and every maximal solution~$\phi$ 
    to~$\mathcal{H}$ 
    satisfies exactly one of the following conditions:
    \begin{enumerate}
        \item $\phi$ is complete;
        \item dom $\phi$ is bounded and the interval $I^J$, where $J = \sup_j \textnormal{dom } \phi$, has nonempty interior and $t\mapsto \phi(t,J)$ is a maximal solution to $\dot{z}\in F(z)$, in fact lim$_{t\rightarrow T}|\phi(t,J)|=\infty$, where $T = \sup_t \textnormal{dom } \phi$;
        \item $\phi(T,J)\in C\cup D $, where $(T,J)= \sup \textnormal{dom } \phi$. 
    \end{enumerate}
    Furthermore, if $G(D)\subset C \cup D$, then $3)$ above does not occur. 
\end{lemma}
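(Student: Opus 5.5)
The plan is to follow the standard argument for hybrid inclusions satisfying Definition~\ref{def:hybridcond}. It has three parts: local existence, the trichotomy for maximal solutions obtained from maximality, and removal of the third alternative when $G(D)\subset C\cup D$. I read item~3 as in the source reference, namely $\phi(T,J)\notin C\cup D$. That is the alternative the maximality argument produces; the printed ``$\in$'' appears to be a typo.

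\textbf{Step 1: nontrivial local solutions.} Suppose first that $\nu\in D$. Since $D\subset\textnormal{dom }G$, we can pick some $\nu^+\in G(\nu)$. The arc with domain $\{(0,0),(0,1)\}$, $\phi(0,0)=\nu$ and $\phi(0,1)=\nu^+$ is then a nontrivial solution.

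Suppose instead that $\nu\in C\setminus D$. Here I would invoke a viability theorem for differential inclusions on the closed set $C$. Its hypotheses are that $F$ is outer semicontinuous and locally bounded with nonempty convex values near $\nu$, and that the tangency condition (VC) holds on a neighborhood of $\nu$. The conclusion is an absolutely continuous $z:[0,\varepsilon]\to C$ with $\dot z\in F(z)$ almost everywhere and $\varepsilon>0$. Concretely, one builds Euler polygons whose increments are chosen in $F(\zeta)\cap T_C(\zeta)$, possibly after projecting onto $C$. Local boundedness makes these polygons equi-Lipschitz, so Arzel\`a--Ascoli gives a uniformly convergent subsequence. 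Outer semicontinuity and convexity of $F$ then pass the inclusion to the limit via the convergence theorem for differential inclusions, and closedness of $C$ keeps the limit in $C$.

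\textbf{Step 2: maximal solutions and the trichotomy.} Every solution extends to a maximal one. Order solutions from $\nu$ by graph inclusion of their domains; a chain has its union as an upper bound, so Zorn's lemma applies.

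Let $\phi$ be maximal and not complete, so $\textnormal{dom }\phi$ is bounded. Set $(T,J)=\sup\textnormal{dom }\phi$.
\begin{itemize}
\item If $(T,J)\in\textnormal{dom }\phi$ and $\phi(T,J)\in C\cup D$, then Step~1 applied at $\phi(T,J)$ gives a nontrivial continuation. This is where (VC) is needed at every point of $C\setminus D$. The continuation contradicts maximality, so $\phi(T,J)\notin C\cup D$, which is item~3.
\item If $(T,J)\notin\textnormal{dom }\phi$, then the last interval $I^J=[t_J,T)$ has nonempty interior and $t\mapsto\phi(t,J)$ solves $\dot z\in F(z)$ on it. Suppose $\phi(\cdot,J)$ were bounded. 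Local boundedness of $F$ on the closure of its range then makes it Lipschitz, so $\lim_{t\to T}\phi(t,J)$ exists and lies in $C$ by closedness. Adding the point $(T,J)$ would give a strictly longer solution, contradicting maximality. Hence $\phi(\cdot,J)$ is unbounded. A Gronwall-free version of the same argument on compact sublevel sets upgrades this to $|\phi(t,J)|\to\infty$, which is item~2.
\end{itemize}
The three alternatives are mutually exclusive because the domain is unbounded, bounded without its supremum, or bounded containing its supremum.

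\textbf{Step 3: excluding item~3 when $G(D)\subset C\cup D$.} Suppose item~3 holds, so $\phi(T,J)\notin C\cup D$. The point $\phi(T,J)$ arises in one of three ways:
\begin{itemize}
\item It is the initial point, which lies in $C\cup D$ by assumption.
\item It is the value after a jump, which lies in $G(D)\subset C\cup D$.
\item It is the endpoint of a nontrivial flow interval, which lies in $C$ because $C$ is closed and the flow stays in $C$ on the interior of that interval.
\end{itemize}
Each case contradicts $\phi(T,J)\notin C\cup D$.

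\textbf{Main obstacle.} I expect the main obstacle to be the viability step in Step~1, since it needs the full Nagumo/Aubin machinery under only outer semicontinuity and convexity of $F$. The first thing I would try is to cite the viability theorem directly rather than reprove it. The remaining steps are routine bookkeeping of hybrid time domains.
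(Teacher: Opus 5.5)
Your proof is correct and follows essentially the same route as the standard proof of the cited Proposition~2.34; the paper itself gives no proof and only imports that result. The shared argument is: local existence via a jump or the viability theorem under (VC), the trichotomy from maximality plus extension of bounded flow pieces, and exclusion of item~3 via closedness of $C$ and $G(D)\subset C\cup D$. You are also right that item~3 must read $\phi(T,J)\notin C\cup D$, since with the printed ``$\in$'' the statement fails (e.g., $C=\emptyset$, $D=\{0\}$, $G(0)=\{1\}$ gives a maximal solution satisfying none of the three items).
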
 

%-------------------------------------------------------------------
\section{Section~\ref{sec:hybridModel} Proofs} 
\subsection{Proof of Lemma \ref{lem:wellPosed}} \label{Ap:wellPosed}
    For the hybrid model $\HFO$ in~\eqref{eq:hybridFO}, the set~$C$ in~\eqref{eq:Cdef} and the set~$D$ in~\eqref{eq:Ddef} satisfy Condition~\ref{hybridcond_one} in Definition~\ref{def:hybridcond} by inspection. 
    The map~$F$ in~\eqref{eq:fwmap} is defined everywhere on~$C$ and outputs a singleton that
    is linear in~$\mathbf{x}$, $\mathbf{u}$, and~$\mathbf{d}$. 
    \blue{The disturbance~$\mathbf{d}$ that appears in~$F$ is bounded
    under Assumption~\ref{as:d}, and~$F$ therefore
    satisfies Condition~\ref{hybridcond_two} in Definition~\ref{def:hybridcond}.} 
    
    To satisfy Condition~\ref{hybridcond_three} for the map~$G$ in~\eqref{eq:jpmap}, we use Lemma~\ref{lem:OSLBcond} in~\ref{sec:properties}. 
    The jump map~$G_1$ in~\eqref{eq:g1def} is outer semicontinuous since the projection mapping~$\Pi_{\mathcal{U}}[\cdot]$ is continuous 
    and all other entries are singletons. 
    The jump map~$G_2$ in~\eqref{eq:g2def} is outer semicontinuous since its 
    only set-valued entry is a compact interval and all other entries are singletons.
    Since the jump map in~\eqref{eq:jpmap} has the structure of the jump map in Lemma~\ref{lem:OSLBcond} and the jump set in~\eqref{eq:Ddef} has the structure of the jump set in Lemma~\ref{lem:OSLBcond}, we can apply Lemma~\ref{lem:OSLBcond} to the map~$G$ in~\eqref{eq:jpmap}. 
    Then~$G$ is outer semicontinuous and locally bounded relative to the closed set $D$. 
    Then Condition~\ref{hybridcond_three} of Definition \ref{def:hybridcond} is satisfied. 
    Therefore, all conditions of Definition \ref{def:hybridcond} are satisfied, and~$\HFO$ is well-posed.

\subsection{Proof of Proposition \ref{prop:maxsolComplete}} \label{Ap:maxsolComplete}
    Since $\HFO$ is well-posed as shown in Lemma~\ref{lem:wellPosed}, the results follow from applying Lemma~\ref{lem:complete} in \ref{sec:properties}. 
    First we show that the condition (VC) holds for all~$\nu\in C\backslash D$. 
    Consider an arbitrary $\nu\in C\backslash D$, and let $U$ be a neighborhood of $\nu$.
    We show that~$F(\zeta) \cap T_{C}(\zeta) \neq \emptyset$ for each~$\zeta \in U\cap C$, where $T_{C}(\zeta)$ is the tangent cone of the set~$C$ at the point~$\zeta$. 
    Based on the flow map~$F$ from~\eqref{eq:fwmap}, only~$\dot{\mathbf{x}}$, $\dot{\atimer{}}$, $\dot{\ctimer{}}$,
    and~$\dot{\dtimer}$ are non-zero during flows. 
    In addition,~$C$ does not restrict the values of $\mathbf{x}$ or~$\dtimer$, which implies that
    all values of~$\dot{\mathbf{x}}$ and~$\dot{\dtimer}$ are allowable at all values of~$\mathbf{x}$ and~$\dtimer$, respectively. 
    On the other hand, the timers~$\atimer{}$ and~$\ctimer{}$ 
    take values in compact intervals, which implies that some directions are infeasible at some 
    values of~$\atimer{}$ and~$\ctimer{}$.
    Therefore, satisfaction of the condition~$F(\zeta)\cap T_C(\zeta) \neq \emptyset$
    is determined by the dynamics of~$\atimer{}$ and~$\ctimer{}$. 
    
    To show that $F(\zeta)\cap T_C(\zeta) \neq \emptyset$, we compute the tangent cone~$T_C(\zeta)$ as
    \begin{equation}
        T_{C}(\zeta) :=
        \begin{cases}
            \R^{18} \times \{-1\} \times \R \times \R   &\text{if $\atimer{} = \atimer{,\max}$} \text{ and $\ctimer{} \in (0, \ctimer{,\n{comp}})$}\\
            \R^{18} \times \{-1\} \times \{1\} \times \R &\text{if $\atimer{} = \atimer{,\max}$} \text{ and $\ctimer{} = 0$}\\
            \R^{18}\times \{1\} \times \{-1\} \times \R  &\text{if  $\atimer{} = 0$} \text{ and $\ctimer{} = \ctimer{,\n{comp}}$}\\
            \R^{18} \times \R \times \{-1\} \times \R  &\text{if $\atimer{} \in (\atimer{,\min}, \atimer{,\max})$} \text{ and $\ctimer{} = \ctimer{,\n{comp}}$}\\
            \R^{18}\times \{-1\} \times \{-1\} \times \R &\text{if $\atimer{} = \atimer{,\max}$ } \text{ and $\ctimer{} = \ctimer{,\n{comp}}$}\\
            \R^{18} \times \{1\} \times \{1\} \times \R  &\text{if $\atimer{} = 0$ } \text{ and $\ctimer{} = 0$}\\
            \R^{18} \times \R \times \R \times \R &\text{otherwise.}
        \end{cases}
    \end{equation}
    Then $F(\zeta)\in T_C(\zeta)$ for every~$\zeta \in C \backslash D$ and condition~(VC) from Lemma~\ref{lem:complete} is satisfied for every~$\nu \in C \backslash D$. 
    Then by Lemma~\ref{lem:complete}
    there exists a nontrivial solution $\phi$ to $\mathcal{H}_{FO}$ with the initial condition $\phi(0,0) = \nu$. 
    Let~$\mathcal{S}_{\mathcal{H}_{FO}}$ denote the set of all maximal solutions to~$\mathcal{H}_{FO}$. 
    Every solution $\phi \in \mathcal{S}_{\mathcal{H}_{FO}}$ satisfies one of the three conditions 
    in Lemma~\ref{lem:complete}. 

    The system~$\HFO$ satisfies~$G(D) \subset C \cup D$, which implies that $3)$ in Lemma~\ref{lem:complete} does not occur. 
    Regarding $2)$, the only components of the solution that change during flows are $\mathbf{x}, \atimer{}$, $\ctimer{}$, and~$\dtimer$. 
    Since $\atimer{}$ and $\ctimer{}$ take values in compact sets, they are bounded and cannot blow up to infinity. 
    Since~$\mathbf{u} \in \mathcal{U}$, which is a compact set, the input~$\mathbf{u}$ is bounded, and there exists some finite~$\mathbf{u}_{max} \geq 0$ such that~$\|\mathbf{u}\| \leq \mathbf{u}_{max}$ for all~$\mathbf{u} \in \mathcal{U}$.
    Similarly, $\mathbf{d} \in \mathcal{D}$, which is a compact set, and
    there exists some~$\mathbf{d}_{max} \geq 0$ such that~$\|\mathbf{d}\| \leq \mathbf{d}_{max}$
    for all~$\mathbf{d} \in \mathcal{D}$. 
    Since the mapping~$(\mathbf{x}, \mathbf{u}) \mapsto {A}_{\n{stab}}\mathbf{x} + B_{\n{stab}}\mathbf{u} - B_{\n{stab}}K\mathbf{d}$ 
    is globally Lipschitz and both~$\mathbf{u}$ and~$\mathbf{d}$ are bounded, the state~$\mathbf{x}$ does not blow up in finite time.
    Finally, since~$\dot{\dtimer} = 1$, the timer~$\dtimer$ cannot blow up in finite time. 
    Then Condition~$2)$ from Lemma~\ref{lem:complete} does not hold.
    Therefore, Condition~$1)$ from Lemma~\ref{lem:complete} holds and all maximal solutions to~$\HFO$ are complete.

    Regarding Zeno behavior, jumps in~$G$ are triggered only when
    either one of the timers~$\atimer{}$ or~$\ctimer{}$ has reached zero, and~$G$ resets
    any timers that have reached zero to non-zero values.
    This property implies that~$G(D) \cap D = \emptyset$, ruling out Zeno behavior by Proposition~2.34 in~\cite{hybridfeedcntrl}. 

%-------------------------------------------------------------------
\section{Section~\ref{sec:convAnalysis} Proofs} 
\subsection{Proof of Proposition \ref{prop:completeHOCW}} \label{Ap:prop2Proof}
In this proof any empty sums evaluate to zero by convention, and
we take $\alpha(-1) = 0$ for notational convenience. 
\blue{The high-level steps of the proof are:
\begin{enumerate}
    \item Bound the distance between the input $\mathbf{u}$ and the optimal input $\mathbf{u}^*$. \label{step:1}
    \item Give an expression for the difference~$\mathbf{x}\Hatime{}{} - \tilde{\mathbf{x}}(t)$
    in terms of integrals of the dynamics of~$x(t, j)$. \label{step:2}
    \item Bound each integrand that does not have a closed form integral. \label{step:2point5}
    \item Evaluate each reuslting integral and simplify to bound~$\|\mathbf{x}\Hatime{}{} - \tilde{\mathbf{x}}(t)\|$. \label{step:3}
\end{enumerate}
}

\blue{We begin with Step~\ref{step:1}.} 
By definition of~$\mathcal{A}(t)$ we have %, only the state~$\mathbf{x}$ in~$\phi$ affects the value of~$\|\phi(t, j)\|_{\mathcal{A}(t)}$. Then
    \begin{equation} \label{eq:t1normsequal}
        \|\phi(t, j)\|_{\mathcal{A}(t)} 
        = \big\|\mathbf{x}\Hatime{}{}\big\|_{\{\tilde{\mathbf{x}}(t)\}} = \big\|\mathbf{x}\Hatime{}{} - \tilde{\mathbf{x}}(t)\big\|.
    \end{equation}
    We define~$P = \max\{p \in \mathbb{N} : \bar{\alpha}(p) + p \leq j\}$,
    which is the number of times the input~$\mathbf{u}$ has changed up to jump time~$j$. 
    We now address step~\ref{step:1}, suppose~$P \geq 1$, then 
    using~\eqref{eq:xtime_def} in~\eqref{eq:c2inptiterate} we see that for each~$p \in \{1, \ldots, P\}$ we have 
    %\begin{equation}
        $\mathbf{u}\Htime{\xtime{p,0}}{\xtime{p,0}} = \mathbf{z}_{\alpha({p-1})}\Htime{\xtime{p-1,\alpha(p-1)}}{\xtime{p-1,\alpha(p-1)}}$.
        If~$P = 0$, then we have only
        $\mathbf{u}\Htime{\xtime{0,0}}{\xtime{0,0}} = \mathbf{u}(0,0) = \mathbf{z}_{0}(0, 0)$. 
    %\end{equation}
    %The iterates are being optimized to the optimizer which will be used as the input so
    For~$p \geq 1$ the optimal value for~$\mathbf{u}\Htime{\xtime{p,0}}{\xtime{p,0}}$ is 
    %\begin{equation}
        $\mathbf{u}^*\Htime{\xtime{p,0}}{\xtime{p,0}} = \mathbf{z}^*\Htime{\xtime{p-1,0}}{\xtime{p-1,0}}$,
    %\end{equation}
    where we define
    %\begin{equation}
        $\mathbf{z}^{*}\Htime{\xtime{p-1,0}}{\xtime{p-1,0}} 
        = \argmin\limits_{\mathbf{u} \in \mathcal{U}} \,  \Phi\big(\mathbf{u}, \mathbf{y}_{s}\Htime{\xtime{p-1,0}}{\xtime{p-1,0}}\big)$. 
    %\end{equation}
    From Lemma \ref{lem:inptconvrate} we know that for~$p \geq 1$ 
    \begin{multline} \label{eq:thm1_1}
        \big\|\mathbf{z}_{\alpha({p-1})}\big(\xtime{p-1,\alpha(p-1)}\big)
        -\mathbf{z}^*\Htime{\xtime{p-1,0}}{\xtime{p-1,0}}\big\| 
        \leq \\ q^{\frac{\alpha({p-1})-1}{2}}\|\mathbf{z}_{1}\Htime{\xtime{p-1,1}}{\xtime{p-1,1}}
        - \mathbf{z}^*\Htime{\xtime{p-1,0}}{\xtime{p-1,0}}\big\|.
    \end{multline}    
    For~$p \geq 1$ we also have 
    \begin{multline} \label{eq:thm1_2}
        \mathbf{z}_{1}\Htime{\xtime{p-1,1}}{\xtime{p-1,1}} 
        = \Pi_{\mathcal{U}}\bigg[\mathbf{z}_0\Htime{\xtime{p-1,0}}{\xtime{p-1,0}}
        \\ -\gamma\Big(\nabla_{\mathbf{u}} \Phi\big(\mathbf{z}_0\Htime{\xtime{p-1,0}}{\xtime{p-1,0}}, 
         \mathbf{y}_{s}\Htime{\xtime{p-1,0}}{\xtime{p-1,0}}\big)
         \\ + H^T\nabla_{\mathbf{y}} \Phi\big(\mathbf{z}_0\Htime{\xtime{p-1,0}}{\xtime{p-1,0}}, 
         \mathbf{y}_{s}\Htime{\xtime{p-1,0}}{\xtime{p-1,0}}\big)
         \Big)\bigg].
    \end{multline}
    In projected gradient descent the optimum is a fixed point so that
    \begin{multline} \label{eq:thm1_3}
        \mathbf{z}^*\Htime{\xtime{p-1,0}}{\xtime{p-1,0}}
        = \Pi_{\mathcal{U}}\Big[\mathbf{z}^*\Htime{\xtime{p-1,0}}{\xtime{p-1,0}}  
        \\ -\gamma\Big(\nabla_{\mathbf{u}} \Phi\big(\mathbf{z}^*\Htime{\xtime{p-1,0}}{\xtime{p-1,0}}, 
         \mathbf{y}_{s}\Htime{\xtime{p-1,0}}{\xtime{p-1,0}}\big)
         \\ + H^T\nabla_{\mathbf{y}} \Phi\big(\mathbf{z}^*\Htime{\xtime{p-1,0}}{\xtime{p-1,0}}, 
         \mathbf{y}_{s}\Htime{\xtime{p-1,0}}{\xtime{p-1,0}}\big)
         \Big)\bigg].
    \end{multline}
    Using the relation
    \begin{equation}
        \mathbf{y}_s\Htime{\xtime{p-1,0}}{\xtime{p-1,0}} = H_{\n{stab}} \mathbf{z}_0\Htime{\xtime{p-1,0}}{\xtime{p-1,0}} + \mathbf{d}(t_{\xtime{p-1,0}})
    \end{equation}
    from~\eqref{eq:hstab_approx},
    %where from \cite[Lemma 3]{chuy2025hybridsystemsmodelfeedback}, it holds that~$\gamma \in \paren{0,\frac{2}{\lambda_{\min}(Q_u) + L}}$.
    %Then since $\Pi_{\mathcal{U}}$ has a non expansive property, we can upper bound the term by removing $\Pi_{\mathcal{U}}$. Likewise, since we approximate $\mathbf{y}_s$ as $H\mathbf{u}$ then
    %\begin{equation} \label{eq:thm1_4}
    %    \mathbf{y}_{s}\Htime{\xtime{p-1,0}}{\xtime{p-1,0}} 
    %      = H\mathbf{u}\Htime{\xtime{p-1,0}}{\xtime{p-1,0}} + \mathbf{d}\Htime{\xtime{p-1,0}}{\xtime{p-1,0}}
    %      = H\mathbf{z}_{0}\Htime{\xtime{p-1,0}}{\xtime{p-1,0}}  + \mathbf{d}\Htime{\xtime{p-1,0}}{\xtime{p-1,0}}.
    %\end{equation}
    we combine~\eqref{eq:thm1_1},~\eqref{eq:thm1_2}, and~\eqref{eq:thm1_3} 
    %and~\eqref{eq:thm1_4} 
    %we use the fact that~$u\Htime{\xtime{p,0}}{\xtime{p,0}} = z_{\alpha(p-1)}\big(\xtime{p-1,\alpha(p-1)}\big)$ and~$u^*\Htime{\xtime{p,0}}{\xtime{p,0}} = z^*(\xtime{p-1,0})$
    %so that 
    to find that for~$p \geq 1$, 
    \begin{multline} \label{eq:thm1_5}
        \twonorm{\mathbf{u}\Htime{\xtime{p,0}}{\xtime{p,0}}  - \mathbf{u}^*\Htime{\xtime{p,0}}{\xtime{p,0}}}
        \leq q^{\frac{\alpha({p-1})-1}{2}} \big\| \mathbf{z}_{0}\Htime{\xtime{p-1,0}}{\xtime{p-1,0}}  \\ 
        -\gamma(Q_{\mathbf{u}} + H_{\n{stab}}^{\top}Q_{\mathbf{y}}H_{\n{stab}})\mathbf{z}_{0}\Htime{\xtime{p-1,0}}{\xtime{p-1,0}} 
        - \gamma H_{\n{stab}}^\top Q_{\mathbf{y}} (\mathbf{d}(t_{\xtime{p-1,0}}) - \hat{\mathbf{y}}) \\
        - \mathbf{z}^*\Htime{\xtime{p-1,0}}{\xtime{p-1,0}} 
        + \gamma\big(Q_{\mathbf{u}} + H_{\n{stab}}^TQ_{\mathbf{y}}H_{\n{stab}}\big)\mathbf{z}^*\Htime{\xtime{p-1,0}}{\xtime{p-1,0}} \\
        + \gamma H_{\n{stab}}^TQ_{\mathbf{y}}(\mathbf{d}(t_{\xtime{p-1,0}}) - \hat{\mathbf{y}}) \big\|,
    \end{multline}
    where we have substituted
    both~$\mathbf{u}\Htime{\xtime{p,0}}{\xtime{p,0}} = \mathbf{z}_{\alpha(p-1)}\big(\xtime{p-1,\alpha(p-1)}\big)$
    and~$\mathbf{u}^*\Htime{\xtime{p,0}}{\xtime{p,0}} = \mathbf{z}^*\Htime{\xtime{p-1,0}}{\xtime{p-1,0}}$. 
    We have also used the non-expansive property of~$\Pi_{\mathcal{U}}$, namely
    that~$\|\Pi_{\mathcal{U}}[u_1] - \Pi_{\mathcal{U}}[u_2]\| \leq \|u_1 - u_2\|$ for all~$u_1, u_2 \in \mathcal{U}$. 
    Combining like terms and applying the triangle inequality gives 
    \begin{multline}
        \twonorm{\mathbf{u}\Htime{\xtime{p,0}}{\xtime{p,0}} - \mathbf{u}^*\Htime{\xtime{p,0}}{\xtime{p,0}}}
        \leq  q^{\frac{\alpha({p-1})-1}{2}} \Big\|
        \mathbf{z}_{0}\Htime{\xtime{p-1,0}}{\xtime{p-1,0}} 
        \\ - \mathbf{z}^*\Htime{\xtime{p-1,0}}{\xtime{p-1,0}} \Big\| 
        \cdot \Big\|I_3 - \gamma(Q_{\mathbf{u}} + H_{\n{stab}}^TQ_{\mathbf{y}}H_{\n{stab}})\Big\|
    \end{multline}
    for~$p \geq 1$. 
    We have
    %\begin{equation}
        $\|\mathbf{z}_{0}\Htime{\xtime{p-1,0}}{\xtime{p-1,0}} 
        - \mathbf{z}^*\Htime{\xtime{p-1,0}}{\xtime{p-1,0}}\| \leq d_{\mathcal{U}}$
    %\end{equation}
    by definition of $d_{\mathcal{U}}$, 
    and a straightforward calculation shows~$\|I_3 - \gamma(Q_{\mathbf{u}} + H_{\n{stab}}^TQ_{\mathbf{y}}H_{\n{stab}})\| \leq q^{\frac{1}{2}}$. 
    Then
    \begin{equation} %\label{eq:thm1_inptdiff}
        \twonorm{\mathbf{u}\Htime{\xtime{p,0}}{\xtime{p,0}} - \mathbf{u}^*\Htime{\xtime{p,0}}{\xtime{p,0}}}
        %&\leq  q^{\frac{\alpha({p-1})-1}{2}} d_{\mathcal{U}}\Big\|I_3 - \gamma(Q_{\mathbf{u}} + H_{\n{stab}}^TQ_{\mathbf{y}}H_{\n{stab}})\Big\| \\
        \leq  q^{\frac{\alpha({p-1})}{2}} d_{\mathcal{U}}  \label{eq:u_convergence}
    \end{equation}
    %% where a straightforward calculation shows that~$\Big\|I_3 - \gamma(Q_{\mathbf{u}} + H_{\n{stab}}^TQ_{\mathbf{y}}H_{\n{stab}})\Big\| \leq q^{\frac{1}{2}}$. 
    % We additionally know that based on \cite[Lemma 2]{chuy2025hybridsystemsmodelfeedback} that
    % \mh{Go ahead and include those steps here. It's okay if the steps that we use here
    % are very similar to those in the other paper.}    
    for each~$p \geq 1$. 
    
    \blue{We now move to Step~\ref{step:2}.} The model of~$\mathcal{H}_{FO}$ applies piecewise constant inputs to the underlying chaser dynamics.
    We therefore integrate the chaser's dynamics over the intervals with constant inputs and compute the distance between~$\mathbf{x}(t, j)$ and~$\tilde{\mathbf{x}}(t)$
    for an arbitrary~$(t, j) \in \dom \phi$.
    For each~$p \in \{0, \ldots, P\}$ the input is equal to~$\mathbf{u}\Htime{\xtime{p,0}}{\xtime{p,0}}$ over intervals of the 
    form~$[t_{\xtime{p,0}}, t_{\xtime{p+1,0}}]$, and
    therefore we may write 
    \begin{multline} \label{eq:thm1__1}
        \mathbf{x}(t, j) - \tilde{\mathbf{x}}(t)         
         = \exp(A_{\n{stab}}t)\mathbf{x}\Hotime{0}{0} 
        + \sum^{P-1}_{p=0}\int_{t_{\xtime{p,0}}}^{t_{\xtime{p+1,0}}} \!\!\!\!\!\!\!\!\!\!\exp\big(A_{\n{stab}}(t_{\xtime{p+1,0}}-\tau)\big)
        d\tau \\ \cdot B_{\n{stab}}\mathbf{u}\Htime{\xtime{p,0}}{\xtime{p,0}}  
        + \int_{t_{\xtime{P,0}}}^{t} \exp\big(A_{\n{stab}}(t - \tau)\big) d\tau B_{\n{stab}}\mathbf{u}\Htime{\xtime{P,0}}{\xtime{P,0}} \\
        - A_{\n{stab}}^{-1}B_{\n{stab}}K\mathbf{d}(t)  
         + A_{\n{stab}}^{-1}B_{\n{stab}}\tilde{\mathbf{u}}(t)  
        - \int^{t}_{0}\exp\big(A_{\n{stab}}(t - \tau)\big)B_{\n{stab}}K\mathbf{d}(\tau)d\tau,
    \end{multline}
    where $\tilde{\mathbf{x}}(t) = A_{\n{stab}}^{-1}B_{\n{stab}}K\mathbf{d}(t) - A_{\n{stab}}^{-1}B_{\n{stab}}\tilde{\mathbf{u}}(t)$
    is from~\eqref{eq:xtildedef}. 
    Next, using integration by parts, we find 
    \begin{multline} \label{eq:integrationbyparts}
        - \int^{t}_{0}\exp\big(A_{\n{stab}}(t - \tau)\big)B_{\n{stab}}K\mathbf{d}(\tau)d\tau  
        = A_{\n{stab}}^{-1}B_{\n{stab}}K\mathbf{d}(t)\\ - \exp\big(A_{\n{stab}}t\big)A_{\n{stab}}^{-1}B_{\n{stab}}K\mathbf{d}(0) 
        - \int^{t}_{0}\exp\big(A_{\n{stab}}(t - \tau)\big)A_{\n{stab}}^{-1}B_{\n{stab}}K\dot{\mathbf{d}}(\tau)d\tau. 
    \end{multline}
    We use~\eqref{eq:integrationbyparts} in~\eqref{eq:thm1__1} and
    add and subtract the term 
    \begin{multline}
    \sum_{p=0}^{P-1} \int_{t_{\xtime{p,0}}}^{t_{\xtime{p+1,0}}} \exp\big(A_{\n{stab}}(t_{\xtime{p+1,0}} - \tau)\big)d\tau B_{\n{stab}} \mathbf{u}^*\Htime{\xtime{p,0}}{\xtime{p,0}} \\
    + \int_{t_{\xtime{P,0}}}^{t} \exp\big(A_{\n{stab}}(t - \tau)\big)d\tau B_{\n{stab}}\mathbf{u}^*\Htime{\xtime{P,0}}{\xtime{P,0}} \\
    + \int_0^t \exp\big(A_{\n{stab}}(t - \tau)\big)d\tau B_{\n{stab}} \tilde{\mathbf{u}}(t) + \exp(A_{\n{stab}}t)B_{\n{stab}}K\mathbf{d}(t) 
    \end{multline}
    to find  
    \begin{multline} 
        \mathbf{x}(t, j) - \tilde{\mathbf{x}}(t)        
         = \exp(A_{\n{stab}}t)\mathbf{x}\Hotime{0}{0} \\
        + \sum^{P-1}_{p=0}\int_{t_{\xtime{p,0}}}^{t_{\xtime{p+1,0}}} \exp\big(A_{\n{stab}}(t_{\xtime{p+1,0}}-\tau)\big)
        d\tau B_{\n{stab}}\big(\mathbf{u}\Htime{\xtime{p,0}}{\xtime{p,0}} \\ - \mathbf{u}^*\Htime{\xtime{p,0}}{\xtime{p,0}}\big) 
        + \int_{t_{\xtime{P,0}}}^{t} \exp\big(A_{\n{stab}}(t - \tau)\big) d\tau  B_{\n{stab}} \big( \mathbf{u}\Htime{\xtime{P,0}}{\xtime{P,0}} \\ - \mathbf{u}^*\Htime{\xtime{p,0}}{\xtime{p,0}}\big) 
        + \sum^{P-1}_{p=0}\int_{t_{\xtime{p,0}}}^{t_{\xtime{p+1,0}}} \exp\big(A_{\n{stab}}(t_{\xtime{p+1,0}}-\tau)\big)
        d\tau B_{\n{stab}} \\ \cdot \big(\mathbf{u}^*\Htime{\xtime{p,0}}{\xtime{p,0}} - \tilde{\mathbf{u}}(t)\big) 
        \!\!+ \! \int_{t_{\xtime{P,0}}}^{t} \!\!\!\!\!\!\!\exp\big(A_{\n{stab}}(t - \tau)\big) d\tau  B_{\n{stab}}\big(\mathbf{u}^*\Htime{\xtime{P,0}}{\xtime{P,0}} \\ - \tilde{\mathbf{u}}(t)\big) 
        + \exp\big(A_{\n{stab}}t\big)A_{\n{stab}}^{-1}B_{\n{stab}}\tilde{\mathbf{u}}(t)
        -\exp\big(A_{\n{stab}}t\big)A_{\n{stab}}^{-1}B_{\n{stab}}K\mathbf{d}(0) \\
        - \int^{t}_{0} \exp\big(A_{\n{stab}}(t - \tau)\big)A_{\n{stab}}^{-1}B_{\n{stab}}K\dot{\mathbf{d}}(\tau)d\tau,
    \end{multline}
    where we have also used
    %\begin{equation} \label{eq:integralofexp}
        $\int_{0}^{t}  \exp\big(A_{\n{stab}}(t - \tau)\big) d\tau = \Big(\exp\big(A_{\n{stab}}t\big) -  I_{6}\Big)A_{\n{stab}}^{-1}$.         
    %\end{equation}
    % \begin{multline} 
    %     \mathbf{x}(t, j) - \tilde{\mathbf{x}}(t)         
    %      = \exp(A_{\n{stab}}t)\mathbf{x}\Hotime{0}{0} \\
    %     + \sum^{P-1}_{p=0}\int_{t_{\alphatime{p}+p}}^{t_{\xtime{p+1,0}}} \exp\big(A_{\n{stab}}(t_{\xtime{p+1,0}}-\tau)\big)
    %     d\tau \\ 
    %     \cdot B_{\n{stab}}\Big(\mathbf{u}\Htime{\xtime{p,0}}{\xtime{p,0}} -  \mathbf{u}^*\Htime{\xtime{p,0}}{\xtime{p,0}}\Big) \\
    %     + \int_{t_{\xtime{P,0}}}^{t} \exp\big(A_{\n{stab}}(t - \tau)\big) d\tau  B_{\n{stab}} \Big( \mathbf{u}\Htime{\xtime{p,0}}{\xtime{p,0}} - \mathbf{u}^*\Htime{\xtime{p,0}}{\xtime{p,0}}\Big) \\
    %     + \sum^{P-1}_{p=0}\int_{t_{\alphatime{p}+p}}^{t_{\xtime{p+1,0}}} \exp\big(A_{\n{stab}}(t_{\xtime{p+1,0}}-\tau)\big)
    %     d\tau \\ \cdot B_{\n{stab}}\big(\mathbf{u}^*\Htime{\xtime{p,0}}{\xtime{p,0}} - \tilde{\mathbf{u}}(t)\Big) \\
    %     + \int_{t_{\xtime{P,0}}}^{t} \exp\big(A_{\n{stab}}(t - \tau)\big) d\tau 
    %      B_{\n{stab}}\Big(\mathbf{u}^*\Htime{\xtime{p,0}}{\xtime{p,0}} - \tilde{\mathbf{u}}(t)\Big)\\
    %     + \exp\big(A_{\n{stab}}t\big)A_{\n{stab}}^{-1}B_{\n{stab}}\tilde{\mathbf{u}}(t) 
    %     - \exp\big(A_{\n{stab}}t\big)d\tau B_{\n{stab}}K\mathbf{d}(t)\\
    %     - \int^{t}_{0} \Big(\exp\big(A_{\n{stab}}t\big) - \exp\big(A_{\n{stab}}(t - \tau)\big)\Big)A_{\n{stab}}^{-1}B_{\n{stab}}K\dot{\mathbf{d}}(\tau)d\tau,
    % \end{multline}    
    After taking the norm of $\mathbf{x}(t, j) - \tilde{\mathbf{x}}(t)$ we use
    \begin{equation}
    \int^{t}_{0} \exp\big(A_{\n{stab}}t\big)A_{\n{stab}}^{-1}B_{\n{stab}}K\dot{\mathbf{d}}(\tau)d\tau = \exp\big(A_{\n{stab}}t\big)A_{\n{stab}}^{-1}B_{\n{stab}}K\big(\mathbf{d}(t) - \mathbf{d}(0)\big)
    \end{equation}
    and~$\tilde{\mathbf{x}}(t) = -A_{\n{stab}}^{-1}B_{\n{stab}}\tilde{\mathbf{u}}(t) + A_{\n{stab}}^{-1}B_{\n{stab}} K \mathbf{d}(t)$, and then we
    apply the triangle inequality to find 
    \begin{multline} 
        \big\|\mathbf{x}(t, j) - \tilde{\mathbf{x}}(t) \big\|       
         \leq \big\|\exp(A_{\n{stab}}t)\big\|\big\|\mathbf{x}\Hotime{0}{0} - \tilde{\mathbf{x}}(t) \big\| \\
        + \frac{1}{m_{c}}\sum^{P-1}_{p=0}\int_{t_{\xtime{p,0}}}^{t_{\xtime{p+1,0}}} \big\|\exp\big(A_{\n{stab}}(t_{\xtime{p+1,0}}-\tau)\big)\big\|
        d\tau \big\|\mathbf{u}\Htime{\xtime{p,0}}{\xtime{p,0}} \\ 
        -  \mathbf{u}^*\Htime{\xtime{p,0}}{\xtime{p,0}} \big\| 
        + \frac{1}{m_{c}}\int_{t_{\xtime{P,0}}}^{t} \big\|\exp\big(A_{\n{stab}}(t - \tau)\big)\big\| d\tau 
         \big\| \mathbf{u}\Htime{\xtime{P,0}}{\xtime{P,0}} \\ 
         - \mathbf{u}^*\Htime{\xtime{p,0}}{\xtime{p,0}} \big\| 
        + \frac{1}{m_{c}}\sum^{P-1}_{p=0}\int_{t_{\xtime{p,0}}}^{t_{\xtime{p+1,0}}} \big\|\exp\big(A_{\n{stab}}(t_{\xtime{p+1,0}}-\tau)\big)\big\|
        d\tau \\ \cdot \big\|\mathbf{u}^*\Htime{\xtime{p,0}}{\xtime{p,0}} - \tilde{\mathbf{u}}(t)\big\| 
        + \frac{1}{m_{c}}\int_{t_{\xtime{P,0}}}^{t} \big\|\exp\big(A_{\n{stab}}(t - \tau)\big)\big\| d\tau 
        \big\|\mathbf{u}^*\Htime{\xtime{P,0}}{\xtime{P,0}} \\  - \tilde{\mathbf{u}}(t)\big\|
        + \frac{1}{m_{c}}\int^{t}_{0} \big\|\exp\big(A_{\n{stab}}t\big) - \exp\big(A_{\n{stab}}(t - \tau)\big)\big\|\big\|A_{\n{stab}}^{-1}\big\| \big\|K\big\|\big\|\dot{\mathbf{d}}(\tau)\big\|d\tau,
    \end{multline}
    where we have used~$\big\|B_{\n{stab}}\big\| = \frac{1}{m_{c}}$, which 
    \green{is a property of the CW equations that}
    follows from~$B_{\n{stab}}^TB_{\n{stab}} = \frac{1}{m_c^2}I_3$. 

    \blue{Now we turn to Step~\ref{step:2point5}.}
    From Assumption~\ref{as:d} we have $\big\|\dot{\mathbf{d}}(t)\big\| \leq \bar{\mathbf{d}}$, and
    since~$\mathcal{U}$
    is compact we have~$\|u_1 - u_2\| \leq d_{\mathcal{U}}$ for all~$u_1, u_2 \in \mathcal{U}$.
    Using these facts and~\eqref{eq:u_convergence} gives
    %We then simplify the bound to 
    % \begin{multline} 
    %     \big\|\mathbf{x}(t, j) - \tilde{\mathbf{x}}(t) \big\|       
    %      \leq \big\|\exp(A_{\n{stab}}t)\big\|\big\|\mathbf{x}\Hotime{0}{0} - \tilde{\mathbf{x}}(t) \big\| \\
    %     + \sum^{P-1}_{p=0}\int_{t_{\alphatime{p}+p}}^{t_{\xtime{p+1,0}}} \big\|\exp\big(A_{\n{stab}}(t_{\xtime{p+1,0}}-\tau)\big)\big\|
    %     d\tau \\ 
    %     \cdot \big\|\mathbf{u}\Htime{\xtime{p,0}}{\xtime{p,0}} -  \mathbf{u}^*\Htime{\xtime{p,0}}{\xtime{p,0}} \big\| \\
    %     + \int_{t_{\xtime{P,0}}}^{t} \big\|\exp\big(A_{\n{stab}}(t - \tau)\big)\big\| d\tau   \big\| \mathbf{u}\Htime{\xtime{p,0}}{\xtime{p,0}} - \mathbf{u}^*\Htime{\xtime{p,0}}{\xtime{p,0}} \big\| \\
    %     + \sum^{P-1}_{p=0}\int_{t_{\alphatime{p}+p}}^{t_{\xtime{p+1,0}}} \big\|\exp\big(A_{\n{stab}}(t_{\xtime{p+1,0}}-\tau)\big)\big\|
    %     d\tau \big\|\mathbf{u}^*\Htime{\xtime{p,0}}{\xtime{p,0}} - \tilde{\mathbf{u}}(t)\big\| \\
    %     + \int_{t_{\xtime{P,0}}}^{t} \big\|\exp\big(A_{\n{stab}}(t - \tau)\big)\big\| d\tau 
    %     \big\|\mathbf{u}^*\Htime{\xtime{p,0}}{\xtime{p,0}} - \tilde{\mathbf{u}}(t)\big\|\\
    %     + \int^{t}_{0} \big\|\exp\big(A_{\n{stab}}t\big) - \exp\big(A_{\n{stab}}(t - \tau)\big)\big\|\big\|A_{\n{stab}}^{-1}\big\|\big\|K\big\|\bar{\mathbf{d}}d\tau.
    % \end{multline}
    % By definition of $d_{\mathcal{U}}$ and \eqref{eq:thm1_inptdiff}, the bound simplifies to
    \begin{multline}  \label{eq:thm1_7}
        \big\|\mathbf{x}(t, j) - \tilde{\mathbf{x}}(t) \big\|       
        \leq \big\|\exp(A_{\n{stab}}t)\big\|\big\|\mathbf{x}\Hotime{0}{0} - \tilde{\mathbf{x}}(t)\big\| \\
        + \frac{1}{m_{c}}\int_{0}^{t_{\xtime{1,0}}} \big\|\exp\big(A_{\n{stab}}(t_{\xtime{1,0}}-\tau)\big)\big\| 
        d\tau \; d_{\mathcal{U}}\\
        + \frac{1}{m_{c}}\sum^{P-1}_{p=1}\int_{t_{\xtime{p,0}}}^{t_{\xtime{p+1,0}}} \big\|\exp\big(A_{\n{stab}}(t_{\xtime{p+1,0}}-\tau)\big)\big\| d\tau q^{\frac{\alpha({p-1})}{2}} \; d_{\mathcal{U}} \\
        + \frac{1}{m_{c}}\int_{t_{\xtime{P,0}}}^{t} \big\|\exp\big(A_{\n{stab}}(t - \tau)\big)\big\| d\tau   
        \; q^{\frac{\alpha({P-1})}{2}} d_{\mathcal{U}} 
        + \frac{1}{m_{c}}\int_{0}^{t} \big\|\exp\big(A_{\n{stab}}(t - \tau)\big)\big\| d\tau 
        \; d_{\mathcal{U}}\\
        + \frac{1}{m_{c}}\int^{t}_{0} \big\|\exp\big(A_{\n{stab}}t\big) - \exp\big(A_{\n{stab}}(t - \tau)\big)\big\|d\tau \big\|A_{\n{stab}}^{-1}\big\|\big\|K\big\|\bar{\mathbf{d}}, 
    \end{multline}
    where we have combined integrals, separated the~$p = 0$ term 
    from the sum, 
    and used~$\alpha(-1) = 0$.
    The last integrand can be bounded via 
    \begin{align} 
        \big\|\exp\big(A_{\n{stab}}t\big) - \exp\big(A_{\n{stab}}(t - \tau)\big)\big\| 
        &\leq \big\|\exp\big(A_{\n{stab}}(t - \tau)\big\|\big\|\exp\big(A_{\n{stab}}\tau\big) - I_6\big\| \\
        &\leq \big\|\exp\big(A_{\n{stab}}(t - \tau)\big)\big\|\big(\big\|\exp\big(A_{\n{stab}}\tau\big)\big\| + 1\big). \label{eq:thm1__2} 
    \end{align}
    From \cite[Theorem 2]{loan1975matrixExp}, we have
    \begin{equation} \label{eq:thm1_9}
         \big\|\exp(A_{\n{stab}}t)\big\| \leq 
         \mu_{\max}(\Lambda_{\n{des}}) \frac{|\lambda_6|}{|\lambda_1|} \exp\big({-|\lambda_1|}t\big),
    \end{equation}
    \green{which comes directly from the stabilized CW equations that we study.}
    %where~$\frac{|\lambda_6|}{|\lambda_1|}$ is the spectral condition number of~$A_{\n{stab}}$. 
    Using this bound with \eqref{eq:thm1__2} gives 
    % \begin{multline}
    %     \int^{t}_{0} \big\|\exp\big(A_{\n{stab}}t\big) - \exp\big(A_{\n{stab}}(t - \tau)\big)d\tau\\
    %     \leq \int^{t}_{0} \mu_{\max}(\Lambda_{\n{des}}) \frac{|\lambda_6|}{|\lambda_1|} \exp\big({-|\lambda_1|}(t - \tau)\big)\Big(\mu_{\max}(\Lambda_{\n{des}}) \frac{|\lambda_6|}{|\lambda_1|} \exp\big({-|\lambda_1|}\tau\big) + 1\Big)d\tau\\
    %     \leq \int^{t}_{0}\mu_{\max}(\Lambda_{\n{des}})^2 \Bigg(\frac{|\lambda_6|}{|\lambda_1|}\Bigg)^2\exp\big({-|\lambda_1|}t\big) + \mu_{\max}(\Lambda_{\n{des}}) \frac{|\lambda_6|}{|\lambda_1|} \exp\big({-|\lambda_1|}(t - \tau)\big)d\tau\\
    %     \leq \mu_{\max}(\Lambda_{\n{des}})^2 \Bigg(\frac{|\lambda_6|}{|\lambda_1|}\Bigg)^2t\exp\big({-|\lambda_1|}t\big) + \int^{t}_{0} \mu_{\max}(\Lambda_{\n{des}}) \frac{|\lambda_6|}{|\lambda_1|} \exp\big({-|\lambda_1|}(t - \tau)\big)d\tau\\
    %     \leq \mu_{\max}(\Lambda_{\n{des}})^2 \Bigg(\frac{|\lambda_6|}{|\lambda_1|}\Bigg)^2t\exp\big({-|\lambda_1|}t\big) + \mu_{\max}(\Lambda_{\n{des}}) \frac{|\lambda_6|}{|\lambda_1|} \int^{t}_{0}\exp\big({-|\lambda_1|}(t - \tau)\big)d\tau\\
    % \end{multline}
    \begin{multline}\label{eq:thm1__3}
        \int^{t}_{0} \big\|\exp\big(A_{\n{stab}}t\big) - \exp\big(A_{\n{stab}}(t - \tau)\big) \big\|d\tau 
        \leq \mu_{\max}(\Lambda_{\n{des}})^2 \Bigg(\frac{|\lambda_6|}{|\lambda_1|}\Bigg)^2t\exp\big({-|\lambda_1|}t\big) \\ 
        + \mu_{\max}(\Lambda_{\n{des}}) \frac{|\lambda_6|}{|\lambda_1|} \int^{t}_{0}\exp\big({-|\lambda_1|}(t - \tau)\big)d\tau.
    \end{multline}
    Assumption \ref{as:timescale} ensures $\alpha(p) \geq \ell$ for all $p \in \mathbb{N}$. Since~$q\in (0,1)$, % holds from \cite[Lemma 3]{chuy2025hybridsystemsmodelfeedback}, 
    we have $q^{\frac{\alpha({p-1})}{2}} \leq q^{\frac{\ell}{2}}$ for all~$p \geq 1$. 

    \blue{Now we address Step~\ref{step:3}.} 
    Using~\eqref{eq:thm1__3} in~\eqref{eq:thm1_7}, evaluating the integrals, and combining with~\eqref{eq:thm1_9} and~\eqref{eq:thm1__3} gives
    \begin{multline} \label{eq:secondbigone}
        \big\|\mathbf{x}(t, j) - \tilde{\mathbf{x}}(t) \big\|       
        \leq \mu_{\max}(\Lambda_{\n{des}})\frac{|\lambda_6|}{|\lambda_1|} \exp\big({-\big|\lambda_{1}\big|}t\big)\big\|\mathbf{x}\Hotime{0}{0} - \tilde{\mathbf{x}}(t)\big\| \\
        + \frac{\mu_{\max}(\Lambda_{\n{des}})\frac{|\lambda_6|}{|\lambda_1|}d_{\mathcal{U}}}{m_{c}\big|\lambda_{1}\big|}\Big[1-\exp\big({-\big|\lambda_{1}\big|}t_{\xtime{1,0}}\big)\Big] \\
        + \frac{\mu_{\max}(\Lambda_{\n{des}})\frac{|\lambda_6|}{|\lambda_1|} q^{\frac{\ell}{2}} d_{\mathcal{U}} }{m_{c}\big|\lambda_{1}\big|} \Big[1-\exp\big({-\big|\lambda_{1}\big|}(t - t_{\xtime{1,0}})\big)\Big] \\
        + \frac{\mu_{\max}(\Lambda_{\n{des}})\frac{|\lambda_6|}{|\lambda_1|}d_{\mathcal{U}} }{m_{c}\big|\lambda_{1}\big|} \Big[1-\exp\big({-\big|\lambda_{1}\big|}t\big)\Big]\\
        + \frac{\mu_{\max}(\Lambda_{\n{des}})^2\Big(\frac{|\lambda_6|}{|\lambda_1|}\Big)^2}{m_{c}}\big\|A_{\n{stab}}^{-1}\big\|\big\|K\big\|\bar{\mathbf{d}}t\exp\big({-|\lambda_1|}t\big) \\
        + \frac{\mu_{\max}(\Lambda_{\n{des}}) \frac{|\lambda_6|}{|\lambda_1|}}{m_{c}|\lambda_1|}\big\|A_{\n{stab}}^{-1}\big\|\big\|K\big\|\bar{\mathbf{d}} \Big[1-\exp\big({-\big|\lambda_{1}\big|}t\big)\Big].
    \end{multline}
    Next, using~$\atimer{,\min} \leq t_{\xtime{1,0}} \leq \atimer{,\max}$ gives
        the upper bound 
        \begin{multline} \label{eq:thm1___3}
           \big\|\mathbf{x}(t, j) - \tilde{\mathbf{x}}(t) \big\|       
            \leq \mu_{\max}(\Lambda_{\n{des}})\frac{|\lambda_6|}{|\lambda_1|} \exp\big({-\big|\lambda_{1}\big|}t\big)\big\|\mathbf{x}\Hotime{0}{0} - \tilde{\mathbf{x}}(t)\big\| \\
            + \frac{\mu_{\max}(\Lambda_{\n{des}})\frac{|\lambda_6|}{|\lambda_1|}d_{\mathcal{U}}}{m_{c}\big|\lambda_{1}\big|}\Big[1-\exp\big({-\big|\lambda_{1}\big|}\atimer{,\max}\big)\Big] \\
            + \frac{\mu_{\max}(\Lambda_{\n{des}})\frac{|\lambda_6|}{|\lambda_1|} q^{\frac{\ell}{2}} d_{\mathcal{U}}}{m_{c}\big|\lambda_{1}\big|}\Big[1 - \exp\big({-\big|\lambda_{1}\big|}(t - \atimer{,\min})\big)\Big] \\
            + \frac{\mu_{\max}(\Lambda_{\n{des}}) \frac{|\lambda_6|}{|\lambda_1|}d_{\mathcal{U}}}{m_{c}|\lambda_1|}\Big[1-\exp\big({-\big|\lambda_{1}\big|}t\big)\Big]\\
            + \frac{\mu_{\max}(\Lambda_{\n{des}})^2\Big(\frac{|\lambda_6|}{|\lambda_1|}\Big)^2}{m_{c}}\big\|A_{\n{stab}}^{-1}\big\|\big\|K\big\|\bar{\mathbf{d}}t\exp\big({-|\lambda_1|}t\big) \\
            + \frac{\mu_{\max}(\Lambda_{\n{des}}) \frac{|\lambda_6|}{|\lambda_1|}}{m_{c}|\lambda_1|}\big\|A_{\n{stab}}^{-1}\big\|\big\|K\big\|\bar{\mathbf{d}} \Big[1-\exp\big({-\big|\lambda_{1}\big|}t\big)\Big].
        \end{multline}
        By factoring we see that 
        \begin{equation} \label{eq:thm1___4}
            \exp\big({-\big|\lambda_{1}\big|}(t - \atimer{,\min}\big) = \exp\big({\big|\lambda_{1}\big|}\atimer{,\min}\big)\exp\big({-\big|\lambda_{1}\big|}t\big). 
        \end{equation}
    By definition of $\|\cdot\|_{\mathcal{A}(t)}$, it holds that 
    \begin{equation} \label{eq:thm1_14}
        \|x(0, 0) - \tilde{x}(t)\| = \|\phi(0, 0)\|_{\mathcal{A}(t)}, 
    \end{equation}
    and combining~\eqref{eq:t1normsequal},~\eqref{eq:thm1___3},~\eqref{eq:thm1___4} and~\eqref{eq:thm1_14} gives
    \begin{multline}\label{eq:thm1_17}
        \|\phi(t, j)\|_{\mathcal{A}(t)}
        \leq \mu_{\max}(\Lambda_{\n{des}})\frac{|\lambda_6|}{|\lambda_1|}\exp\big({-\big|\lambda_{1}\big|}t\big) \big\|\phi\Hotime{0}{0}\big\|_{\mathcal{A}(t)}\\
        + \frac{\mu_{\max}(\Lambda_{\n{des}})\frac{|\lambda_6|}{|\lambda_1|}d_{\mathcal{U}}}{m_{c}\big|\lambda_{1}\big|}\big[2 - \exp(-|\lambda_1|\atimer{,\max}) -\exp\big({-\big|\lambda_{1}\big|}t\big)\big] \\
        + \frac{\mu_{\max}(\Lambda_{\n{des}})\frac{|\lambda_6|}{|\lambda_1|} q^{\frac{\ell}{2}} d_{\mathcal{U}}}{m_{c}\big|\lambda_{1}\big|}\Big[1 - \exp\big({\big|\lambda_{1}\big|}\atimer{,\min}\big)\exp\big({-\big|\lambda_{1}\big|}t\big)\Big] \\
        + \frac{\mu_{\max}(\Lambda_{\n{des}}) \frac{|\lambda_6|}{|\lambda_1|}}{m_{c}|\lambda_1|}\big\|A_{\n{stab}}^{-1}\big\|\big\|K\big\|\bar{\mathbf{d}} \Big[1 + \big(\mu_{\max}(\Lambda_{\n{des}})|\lambda_6|t - 1\big)\exp\big({-\big|\lambda_{1}\big|}t\big)\Big].
    \end{multline}   
    Then
    \begin{multline}
        \limsup_{t+j \rightarrow \infty} \|\phi(t, j)\|_{\mathcal{A}(t)} \leq \mu_{\max}(\Lambda_{\n{des}})\frac{|\lambda_6|}{m_{c}|\lambda_1|^2} \cdot \\
        \big(2d_{\mathcal{U}} -d_{\mathcal{U}}\exp({-\big|\lambda_{1}\big|}\atimer{,\max}) 
        + d_{\mathcal{U}}q^{\frac{\ell}{2}} + \big\|A_{\n{stab}}^{-1}\big\|\big\|K\big\|\bar{\mathbf{d}}\big). 
    \end{multline}

\subsection{Proof of Theorem \ref{tm:globalcompleteHOCW}} \label{Ap:theorem2Proof}
        In this proof any empty sums evaluate to zero by convention, and
we take $\alpha(-1) = 0$ for notational convenience. 
        An arbitrary initial condition~$\phi(0, 0)$
        allows for arbitrary
        $\mathbf{z}_{0}\Hotime{0}{0} \in \mathcal{U}$, 
        $\ctimer{}(0, 0) \in [0,\tau_{g,\n{comp}}]$, and
        $\atimer{}(0,0) \in [0,\atimer{,\max}]$.
        In particular, the timer~$\atimer{}$
        can reach zero for the first time before the timer~$\ctimer{}$
        reaches zero~$\ell$ times, and
        the bound in Proposition~\ref{prop:completeHOCW} does not apply to such cases. 
        If that scenario does occur, then~$\atimer{}$ is reset to a value
        in the interval~$[\atimer{,\min}, \atimer{,\max}]$.
        After that reset, 
        Assumption~\ref{as:timescale}
        ensures that~$\ctimer{}$ reaches zero at least~$\ell$ times
        before~$\atimer{}$ reaches zero again.
        Then at least~$\ell$ gradient descent iterations will have
        been performed before the input~$\mathbf{u}$ changes
        for the second time. The second change in~$\mathbf{u}$
        occurs at hybrid time~$(t_{\xtime{2,0}},\xtime{2,0})$,
        and we next characterize the behavior of~$\HFO$ up to that hybrid time. 

        The input over the interval~$[0, t_{\xtime{1,0}}]$
        is~$\mathbf{u}(0,0)$, and to derive a worst-case bound 
        we suppose that zero gradient descent iterations
        are performed before the first jump in~$\atimer{}$.
        Then~$\mathbf{u}(t_{\xtime{1,0}}, \xtime{1,0})
        = \mathbf{u}(0, 0)$ because no computations have been
        performed to change the input. 
        Then~$\mathbf{u}(0, 0)$ is the input to the system
        over the interval~$[0, t_{\xtime{2,0}}]$.
        We know that~$\atimer{,\min}\leq t_{\xtime{2,0}} \leq 2\atimer{,\max}$ because, by continuous
        time~$t_{\xtime{2,0}}$, the timer~$\atimer{}$
        has reached zero from its initial condition in~$[0,\atimer{,\max}]$,
        then reset to some value in~$[\atimer{,\min}, \atimer{,\max}]$,
        then reached zero again. 
        %Supposing that zero gradient descent iterations have been completed during the interval~$[t_0, t_{\xtime{1,0}}]$, the input to the system is~$\mathbf{u}(0, 0)$ over the interval~$[t_0, t_{\xtime{2,0}}]$. 
        
        Repeating the same steps used to reach~\eqref{eq:thm1_7},
        using~$\alpha(p) \geq \ell$ for all~$p \geq 2$, and combining integrals
        we find
        % \begin{multline} \label{eq:thm2_1}
        %     \|\phi(t_{\alphatime{2}+2}, \alphatime{2}+2)\|_{\mathcal{A}(t_{\alphatime{2}+2})}
        %     \leq 
        %     \mu_{\max}(\Lambda_{\n{des}})\frac{|\lambda_6|}{|\lambda_1|}\exp\big({-\big|\lambda_{1}\big|\atimer{,\min}}\big) 
        %     \big\|\phi\Hotime{0}{0}\big\|_{\mathcal{A}(t)} \\
        %     + \frac{\mu_{\max}(\Lambda_{\n{des}})\frac{|\lambda_6|}{|\lambda_1|}}{m_{c}\big|\lambda_{1}\big|}\Big(2d_{\mathcal{U}} + \big\|A_{\n{stab}}^{-1}\big\|\big\|K\big\|\bar{\mathbf{d}}\Big)\\
        %     +  \frac{\mu_{\max}(\Lambda_{\n{des}})^2\Big(\frac{|\lambda_6|}{|\lambda_1|}\Big)^2}{m_{c}}\big\|A_{\n{stab}}^{-1}\big\|\big\|K\big\|\bar{\mathbf{d}}\atimer{,\min}\exp\big({-\big|\lambda_{1}\big|}\atimer{,\min}\big). 
        % \end{multline}
        \begin{multline}\label{eq:thm2_1}
            \big\|\mathbf{x}(t, j) - \tilde{\mathbf{x}}(t) \big\|       
            \leq \big\|\exp(A_{\n{stab}}t)\big\|\big\|\mathbf{x}\Hotime{0}{0} - \tilde{\mathbf{x}}(t)\big\| \\
            + \frac{1}{m_{c}}\int_{0}^{t_{\xtime{2,0}}} \big\|\exp\big(A_{\n{stab}}(t_{\xtime{2,0}} - \tau)\big)\big\| d\tau   
            \; d_{\mathcal{U}} \\
            + \frac{1}{m_{c}}\int_{t_{\xtime{2,0}}}^{t} \big\|\exp\big(A_{\n{stab}}(t - \tau)\big)\big\| d\tau   
            \; q^{\frac{\ell}{2}} d_{\mathcal{U}} \\ 
            + \frac{1}{m_{c}}\int_{0}^{t} \big\|\exp\big(A_{\n{stab}}(t - \tau)\big)\big\| d\tau 
            d_{\mathcal{U}}\\
            + \frac{1}{m_{c}}\int^{t}_{0} \big\|\exp\big(A_{\n{stab}}t\big) - \exp\big(A_{\n{stab}}(t - \tau)\big)\big\|d\tau \big\|A_{\n{stab}}^{-1}\big\|\big\|K\big\|\bar{\mathbf{d}},
        \end{multline}
        where the first integral does 
        %the~$p=0$ and~$p=1$ terms from the summation in~\eqref{eq:thm1_7} do 
        not have a factor of~$q^{\frac{\ell}{2}}$ in it
        in order
        to account for the case in which zero gradient
        descent iterations are performed before the first jump in~$\mathbf{u}$.

        Using~\eqref{eq:thm1_9} and~\eqref{eq:thm1__3} in~\eqref{eq:thm2_1} gives
        \begin{multline} \label{eq:thm2__1}
           \big\|\mathbf{x}(t, j) - \tilde{\mathbf{x}}(t) \big\|       
            \leq \mu_{\max}(\Lambda_{\n{des}})\frac{|\lambda_6|}{|\lambda_1|} \exp\big({-\big|\lambda_{1}\big|}t\big)\big\|\mathbf{x}\Hotime{0}{0} - \tilde{\mathbf{x}}(t)\big\| \\
            + \frac{\mu_{\max}(\Lambda_{\n{des}})\frac{|\lambda_6|}{|\lambda_1|}d_{\mathcal{U}}}{m_{c}\big|\lambda_{1}\big|}\Big[1-\exp\big({-\big|\lambda_{1}\big|}t_{\xtime{2,0}}\big)\Big] \\
            + \frac{\mu_{\max}(\Lambda_{\n{des}})\frac{|\lambda_6|}{|\lambda_1|} q^{\frac{\ell}{2}} d_{\mathcal{U}}}{m_{c}\big|\lambda_{1}\big|}\Big[1 - \exp\big({-\big|\lambda_{1}\big|}(t - t_{\xtime{2,0}})\big)\Big] \\
            + \frac{\mu_{\max}(\Lambda_{\n{des}}) \frac{|\lambda_6|}{|\lambda_1|}d_{\mathcal{U}}}{m_{c}|\lambda_1|}\Big[1-\exp\big({-\big|\lambda_{1}\big|}t\big)\Big] \\
            + \frac{\mu_{\max}(\Lambda_{\n{des}})^2\Big(\frac{|\lambda_6|}{|\lambda_1|}\Big)^2}{m_{c}}\big\|A_{\n{stab}}^{-1}\big\|\big\|K\big\|\bar{\mathbf{d}}t\exp\big({-|\lambda_1|}t\big) \\
            + \frac{\mu_{\max}(\Lambda_{\n{des}}) \frac{|\lambda_6|}{|\lambda_1|}}{m_{c}|\lambda_1|}\big\|A_{\n{stab}}^{-1}\big\|\big\|K\big\|\bar{\mathbf{d}} \Big[1-\exp\big({-\big|\lambda_{1}\big|}t\big)\Big].
        \end{multline}
        Next, using~$\atimer{,\min} \leq t_{\xtime{2,0}} \leq 2\atimer{,\max}$ gives
        the upper bound 
        \begin{multline} \label{eq:thm2_3}
           \big\|\mathbf{x}(t, j) - \tilde{\mathbf{x}}(t) \big\|       
            \leq \mu_{\max}(\Lambda_{\n{des}})\frac{|\lambda_6|}{|\lambda_1|} \exp\big({-\big|\lambda_{1}\big|}t\big)\big\|\mathbf{x}\Hotime{0}{0} - \tilde{\mathbf{x}}(t)\big\| \\
            + \frac{\mu_{\max}(\Lambda_{\n{des}})\frac{|\lambda_6|}{|\lambda_1|}d_{\mathcal{U}}}{m_{c}\big|\lambda_{1}\big|}\Big[1-\exp\big({-2\big|\lambda_{1}\big|}\atimer{,\max}\big)\Big] \\
            + \frac{\mu_{\max}(\Lambda_{\n{des}})\frac{|\lambda_6|}{|\lambda_1|} q^{\frac{\ell}{2}} d_{\mathcal{U}}}{m_{c}\big|\lambda_{1}\big|}\Big[1 - \exp\big({-\big|\lambda_{1}\big|}(t - \atimer{,\min})\big)\Big] \\
            + \frac{\mu_{\max}(\Lambda_{\n{des}}) \frac{|\lambda_6|}{|\lambda_1|}d_{\mathcal{U}}}{m_{c}|\lambda_1|}\Big[1-\exp\big({-\big|\lambda_{1}\big|}t\big)\Big]\\
            + \frac{\mu_{\max}(\Lambda_{\n{des}})^2\Big(\frac{|\lambda_6|}{|\lambda_1|}\Big)^2}{m_{c}}\big\|A_{\n{stab}}^{-1}\big\|\big\|K\big\|\bar{\mathbf{d}}t\exp\big({-|\lambda_1|}t\big) \\
            + \frac{\mu_{\max}(\Lambda_{\n{des}}) \frac{|\lambda_6|}{|\lambda_1|}}{m_{c}|\lambda_1|}\big\|A_{\n{stab}}^{-1}\big\|\big\|K\big\|\bar{\mathbf{d}} \Big[1-\exp\big({-\big|\lambda_{1}\big|}t\big)\Big].
        \end{multline}
        % By factoring we see that 
        % \begin{equation} \label{eq:thm2_4}
        %     \Big[\exp\big({-\big|\lambda_{1}\big|}(t - 2\atimer{,\max}\big)-\exp\big({-\big|\lambda_{1}\big|}t\big)\Big] = \big[\exp\big(2{\big|\lambda_{1}\big|}\atimer{,\max}\big) - 1\big]\exp\big({-\big|\lambda_{1}\big|}t\big),
        % \end{equation}
        Combining~\eqref{eq:t1normsequal},~\eqref{eq:thm1___4},~\eqref{eq:thm1_14}, and~\eqref{eq:thm2_3} gives
        \begin{multline} \label{eq:thm2__2}
            \|\phi(t, j)\|_{\mathcal{A}(t)}
            \leq \mu_{\max}(\Lambda_{\n{des}})\frac{|\lambda_6|}{|\lambda_1|}\exp\big({-\big|\lambda_{1}\big|}t\big) \big\|\phi\Hotime{0}{0}\big\|_{\mathcal{A}(t)}\\
            + \frac{\mu_{\max}(\Lambda_{\n{des}})\frac{|\lambda_6|}{|\lambda_1|}d_{\mathcal{U}}}{m_{c}\big|\lambda_{1}\big|}\big[2 - \exp(-2|\lambda_1|\atimer{,\max}) -\exp\big({-\big|\lambda_{1}\big|}t\big)\big] \\
            + \frac{\mu_{\max}(\Lambda_{\n{des}})\frac{|\lambda_6|}{|\lambda_1|} q^{\frac{\ell}{2}} d_{\mathcal{U}}}{m_{c}\big|\lambda_{1}\big|}\Big[1 - \exp\big({\big|\lambda_{1}\big|}\atimer{,\min}\big)\exp\big({-\big|\lambda_{1}\big|}t\big)\Big] \\
            + \frac{\mu_{\max}(\Lambda_{\n{des}}) \frac{|\lambda_6|}{|\lambda_1|}}{m_{c}|\lambda_1|}\big\|A_{\n{stab}}^{-1}\big\|\big\|K\big\|\bar{\mathbf{d}} \Big[1 + \big(\mu_{\max}(\Lambda_{\n{des}})|\lambda_6|t - 1\big)\exp\big({-\big|\lambda_{1}\big|}t\big)\Big],
        \end{multline}
        and the asymptotic result follows by taking
        the limit superior. 

\end{document}